\def\BibTeX{{\rm B\kern-.05em{\sc i\kern-.025em b}\kern-.08em
    T\kern-.1667em\lower.7ex\hbox{E}\kern-.125emX}}
\newtheorem{definition}{Definition}
\newtheorem{proposition}[definition]{Proposition}
\newtheorem{lemma}[definition]{Lemma}
\newtheorem{theorem}[definition]{Theorem}
\newtheorem{corollary}[definition]{Corollary}
\newcommand{\TV}[0]{TV}
\newcommand{\tlc}[2]{\begin{tabular}{@{}c@{}} #1 \vspace{0.2em} \\ #2\end{tabular}}
\newcommand{\thlc}[3]{\begin{tabular}{@{}c@{}} #1 \\ #2 \\ #3 \end{tabular}}
\newcommand{\hide}[1]{}
\begin{document}

\title{Substructural Abstract Syntax with Variable Binding and Single-Variable
  Substitution%
}

\author{\IEEEauthorblockN{Marcelo Fiore\\ marcelo.fiore@cl.cam.ac.uk}
\and
\IEEEauthorblockN{Sanjiv Ranchod\\ sanjiv.ranchod@cl.cam.ac.uk}
\\
\IEEEauthorblockA{\mbox{}\hspace*{-80mm}Department of Computer Science and
Technology, University of Cambridge} 
}

\maketitle

\begin{abstract}
We develop a unified categorical theory of substructural abstract syntax with
variable binding and single-variable (capture-avoiding) substitution.  
This is done for the gamut of context structural rules given by
exchange~(linear theory) with weakening~(affine theory) or with
contraction~(relevant theory) and with both~(cartesian theory).
Specifically, in all four scenarios, 
we uniformly: 
define abstract syntax with variable binding as free algebras for
binding-signature endofunctors over variables;
provide finitary algebraic axiomatisations of the laws of substitution;
construct single-variable substitution operations by generalised structural
recursion; and
prove their correctness, establishing their universal abstract character as
initial substitution algebras.
%
\end{abstract}

\medskip 

\begin{IEEEkeywords}
substructural theories, abstract syntax, variable binding, binding signatures,
initial-algebra semantics, structural recursion, single-variable
capture-avoiding substitution, categorical algebra, substitution algebras.
\end{IEEEkeywords}

\section*{Introduction}

The algebraic study of languages with variable binding was initiated by Fiore,
Plotkin and Turi~\cite{FPT}, and by Gabbay and Pitts~\cite{GabbayPitts}.
Therein, abstract syntax for variable-binding operators was characterised by
means of initial-algebra semantics, 
thus equipping it with definitions by structural recursion and reasoning
principles by structural induction.

The theory of Fiore, Plotkin and Turi~\cite{FPT} considered a universe of
discourse given by (covariant) presheaves (or variable sets) on the category
of contexts.  
In the mono-sorted (or uni-typed, or untyped) case, contexts are generated
from one sort by iterated context extension.  Every context $\Gamma$ has an
associated set of variables (or indices) $V\Gamma$.  
The structure of these contexts admitted for all structural rules;
namely~exchange, weakening, and contraction.  Thus, context morphisms from a
context $\Delta$ to another one $\Gamma$ allowed for arbitrary variable
renamings; that is, are functions from $V\Delta$ to $V\Gamma$.
Mathematically, this determines the free cocartesian category on one object,
$\mathbb F$, equivalent to that of finite cardinals and functions between them.
The mathematical universe of discourse is then the presheaf category 
$\mathcal F = \mathbf{Set}^{\mathbb F}$.
Therein, Fiore, Plotkin and Turi~\cite{FPT} axiomatised single-variable as
well as simultaneous substitution and, for abstract syntax with variable
binding over binding signatures, derived provably-correct constructions for
both notions of substitution by structural recursion.
The importance of correctness is clear; that of structural recursion resides
in guaranteeing well-defined operations.  

Tanaka~\cite{Tanaka} considered part of the aforementioned \emph{cartesian}
theory 
in the \emph{linear} setting, where the structure of contexts only allows the
exchange structural rule. 
In this case, the category of contexts is the free symmetric strict monoidal
category on one object, $\mathbb B$, equivalent to that of finite cardinals
and bijections between them.  The mathematical universe of discourse is then
the presheaf category $\mathcal B = \mathbf{Set}^{\mathbb B}$ of Joyal's
combinatorial species of structures~\cite{J81,J86}.  
To note is that Tanaka~\cite{Tanaka} only revisited, and obtained analogous
results for, the theory of simultaneous substitution, leaving the development
of the theory of single-variable substitution as an open problem.  
A main motivation and contribution of this paper is to provide a solution to
it; but, moreover, to do so in a unified framework for substructural systems.

Subsequently, Tanaka and Power~\cite{TanakaPower} developed a general semantic
framework, showing it to encompass, among others, the theories of
cartesian~\cite{FPT}, linear~\cite{Tanaka}, and also affine (namely, with
exchange and weakening) abstract syntax with variable binding and, again,
simultaneous substitution.

The focus of this work is instead the theory of single-variable substitution
for substructural abstract syntax with variable binding that has so far been
neglected.
Notwithstanding, we contend that the development of both theories of
substitution, single-variable and simultaneous, is important.  
Indeed, not only in computer science and logic, but also in category theory
and algebra, both notions of substitution have been considered and studied in
depth: 
in category theory, in reference to the notion of 
multicategory; 
in algebra, in reference to the notion of operad. 
Moreover, none of the notions is more fundamental than the other: as it is
well-known, while simultaneous substitution is derived from single-variable
substitution by iteration, single-variable substitution may be specialised
from simultaneous substitution.  

{\scriptsize\begin{figure*}[t!]
\begin{center}\begin{tabular}{|c|c||c|c|c|c|} 
\hline
\textbf{\S} & \textbf{Topic}  & \textbf{Cartesian} & \textbf{Linear} &
\textbf{Affine} & \textbf{Relevant} 
\\ \hhline{|======|}
& \tlc{Category of}{contexts $\mathbb{C}$} & \tlc{$\mathbb{F}$}{Functions} &
\tlc{$\mathbb{B}$}{Bijections} & \tlc{$\mathbb{I}$}{Injections} &
\tlc{$\mathbb{S}$}{Surjections} 
\\ \cline{2-6}
\ref{Subsection:CategoriesOfContexts}  & \tlc{Monoidal}{Structure $\otimes$} &
Cocartesian & Symmetric & Semicocartesian & Corelevant 
\\ \cline{2-6}
& \tlc{Structure on}{objects} & \tlc{Symmetric}{monoid} &
\tlc{Symmetric}{object} & \tlc{Symmetric}{pointed object} &
\thlc{Symmetric}{multiplicative}{object} 
\\ \hline
& \tlc{Universe of}{discourse $\mathcal{C}$} &
$\mathcal{F} = \mathbf{Set}^\mathbb{F}$ & 
$\mathcal{B}= \mathbf{Set}^\mathbb{B}$ & 
$\mathcal{I} = \mathbf{Set}^\mathbb{I}$ &
$\mathcal{S} = \mathbf{Set}^\mathbb{S}$ 
\\ \cline{2-6}
\ref{subsec_UniDisc} 
& Day tensor $\hat\otimes$ & Cartesian & Symmetric & Semicartesian & Relevant 
\\ \cline{2-6}
& \tlc{Presheaf of}{variables $V$} & \tlc{Symmetric}{comonoid} &
\tlc{Symmetric}{object} & \tlc{Symmetric}{copointed object} &
\thlc{Symmetric}{comultiplicative}{object} 
\\ \hline
%
%
%
& \tlc{Structure on}{endofunctor $- \hat\otimes X$} & \tlc{Symmetric}{comonad}
& \tlc{Symmetric}{endofunctor} & \thlc{Symmetric}{copointed}{endofunctor} &
\thlc{Symmetric}{comultiplicative}{endofunctor} 
\\ \cline{2-6}
\raisebox{4mm}{\ref{ContextExtension}, \ref{SymmetricEndofunctors}}
& Context extension $\delta$ & \tlc{Symmetric}{monad} &
\tlc{Symmetric}{endofunctor} & \thlc{Symmetric}{pointed}{endofunctor} &
\thlc{Symmetric}{multiplicative}{endofunctor} 
\hide{
\\ \hline
\tlc{\ref{CartesianSubstitutionAlgebras}, \ref{subsec_linDerFun}}
{\ref{subsec_affDerFun}, \ref{RelevantDerivedFunctors}} & 
\tlc{Product Rule}{$\delta(X \hat\otimes Y)$} & 
$\delta(X) \hat\otimes \delta(Y)$  & 
$\delta(X) \hat\otimes Y + X \hat\otimes \delta(Y)$ &
\tlc{$\delta(X) \hat\otimes Y + X \hat\otimes \delta(Y)$}{$+ X \hat\otimes Y$}
& \tlc{$\delta(X) \hat\otimes Y + X \hat\otimes \delta(Y)$}
{$+ \delta(X) \hat\otimes \delta(Y)$} 
}
\\ \hline
\end{tabular}\end{center}
\caption{Summary of Sections~\ref{Section:CategoricalBackground} and 
  \ref{Section:StructuralEndofunctors}}
\label{CLARfigure}
\end{figure*}}
{\small\begin{figure*}[t]\begin{center}\begin{tabular}{|c|c||c|c|c|c|} 
\hline
\textbf{\S}  & \textbf{Topic}  & \textbf{Cartesian} & \textbf{Linear} &
\textbf{Affine} & \textbf{Relevant} 
\\ \hhline{|======|}
\tlc{\ref{CartesianSubstitutionAlgebras}, \ref{subsec_linDerFun}}
{\ref{subsec_affDerFun}, \ref{RelevantDerivedFunctors}} & 
\tlc{Product Rule}
    {for $\delta(X \hat\otimes Y)$} 
& \tlc{$\delta(X) \hat\otimes \delta(Y)$}{ \phantom{ }}
& \tlc{$\delta(X) \hat\otimes Y + X \hat\otimes \delta(Y)$} 
      {\small (Leibniz 
               Rule)}
& \tlc{$\delta(X) \hat\otimes Y + X \hat\otimes \delta(Y)$}
      {$+\ X \hat\otimes Y$}
& \tlc{$\delta(X) \hat\otimes Y + X \hat\otimes \delta(Y)$}
      {$+\ \delta(X) \hat\otimes \delta(Y)$} 
\\ \hline
\end{tabular}\end{center}
\caption{Product rules}
\label{Figure:ProductRules}
\end{figure*}}

The theory developed here considers the gamut of substructural
systems of the diagram
below~(\emph{c.f.}~\cite{NotesOnCombinatorialFunctors2001}). 
%
\[\xymatrix@R1pt@C2pt{
& \ar@{-}[dl] 
{\underset{\text{\S~\ref{Section:CartesianTheory}}}
  {\underset{\text{(exchange+weakening+contraction)}}{\text{Cartesian}}}}
\ar@{-}[dr] & 
\\
{\underset{\text{\S~\ref{Section:AffineTheory}}}
  {\underset{\text{(exchange+weakening)}}{\text{Affine}}}}
& & 
{\underset{\text{\S~\ref{Section:RelevantTheory}}}
  {\underset{\text{(exchange+contraction)}}{\text{Relevant}}}}
\\
& \ar@{-}[ul] 
{\underset{\text{\S~\ref{Section:LinearTheory}}}
  {\underset{\text{(exchange)}}{\text{Linear}}}}
\ar@{-}[ur] & 
}\]
For all of the above, our contributions are: 
\begin{itemize}
\item 
  to define abstract syntax with variable binding; 
\item 
  to algebraically axiomatise the laws of single-variable substitution by
  means of finitary equational presentations;
\item 
  to construct single-variable substitution operations by generalised
  structural recursion; and
\item 
  to prove their correctness and establish their universal abstract character
  as initial substitution algebras.
\end{itemize}

The cartesian theory is presented in Section~\ref{Section:CartesianTheory}.
There, we reconstruct the theory of Fiore, Plotkin and Turi~\cite{FPT} for
single-variable substitution, albeit using generalised structural recursion to
give a new construction of single-variable substitution
and new direct proofs of its correctness.  This has direct application to the
mathematical derivation of dependently-typed programs for single-variable
substitution.

The linear and affine theories are respectively presented in
Sections~\ref{Section:LinearTheory} and~\ref{Section:AffineTheory}.  
The axiomatisation of linear single-variable substitution is equivalent to the
standard notion of symmetric operad (or one-object symmetric multicategory),
while that of affine single-variable substitution is an expected extension.
This conceptually justifies our approach.

The relevant theory is developed in Section~\ref{Section:RelevantTheory}.  We
know of no previous work considering it.

We emphasise that our overall development is uniform throughout the gamut of
substructural theories.  This is achieved by an abstract analysis of the
mathematical structure of context structural
rules~(\emph{c.f.}~\cite{FioreMoggiSangiorgiLICS, FPT, MFPS2006, CT2007}). 
The resulting categorical theory is the topic of
Sections~\ref{Section:CategoricalBackground}
and~\ref{Section:StructuralEndofunctors}, and encompasses the mathematical
structures of Fig.~\ref{CLARfigure}.
Besides this, the approach crucially necessitates the consideration of
`product rules' as in Fig.~\ref{Figure:ProductRules}
from which theories of `derived functors'
(Sections~\ref{subsec_linDerFun},~\ref{subsec_affDerFun},
and~\ref{RelevantDerivedFunctors}) arise to address, and deal with, the
complexities and nuances of single-variable substitution in each scenario.

\begin{figure*}
\[\begin{tikzcd}[ampersand replacement=\&,row sep=scriptsize]
	{A^{\otimes 2}} \& {A^{\otimes 2}} \\
	\& {A^{\otimes 2}}
	\arrow["s", from=1-1, to=1-2]
	\arrow["{\mathbf{(a)}}"{description, pos=0.65}, shift left=3, draw=none, from=1-1, to=2-2]
	\arrow["s", from=1-2, to=2-2]
	\arrow["{{\mathrm{id}}}"', from=1-1, to=2-2]
\end{tikzcd} \quad
\begin{tikzcd}[ampersand replacement=\&,row sep=scriptsize]
	{A^{\otimes 3}} \& {A^{\otimes 3}} \& {A^{\otimes 3}} \\
	{A^{\otimes 3}} \& {A^{\otimes 3}} \& {A^{\otimes 3}}
	\arrow["{\mathrm{id}\otimes s}", from=1-1, to=1-2]
	\arrow["{s \otimes \mathrm{id}}"', from=1-1, to=2-1]
	\arrow["{s \otimes \mathrm{id}}", from=1-2, to=1-3]
	\arrow["{\mathrm{id}\otimes s}", from=1-3, to=2-3]
	\arrow["{\mathrm{id}\otimes s}"', from=2-1, to=2-2]
	\arrow["{s \otimes \mathrm{id}}"', from=2-2, to=2-3]
        \arrow["{\mathbf{(b)}}"{description}, draw=none, from=1-2, to=2-2]
\end{tikzcd}\quad
\begin{tikzcd}[ampersand replacement=\&,row sep=scriptsize]
	A \& {A^{\otimes 2}} \\
	\& {A^{\otimes 2}}
	\arrow["{\mathbf{(c)}}"{description, pos=0.65}, shift left=3, draw=none, from=1-1, to=2-2]
	\arrow["{w \otimes \mathrm{id}}", from=1-1, to=1-2]
	\arrow["{\mathrm{id} \otimes w}"', from=1-1, to=2-2]
	\arrow["s", from=1-2, to=2-2]
\end{tikzcd}\quad
\begin{tikzcd}[ampersand replacement=\&, row sep=scriptsize]
	A \& {A^{\otimes 2}} \\
	{A^{\otimes 2}} \& A
	\arrow["{\mathbf{(d)}}"{description, pos=0.65}, shift left=3, draw=none, from=1-1, to=2-2]
	\arrow["{\mathrm{id}\otimes w}", from=1-1, to=1-2]
	\arrow["{w \otimes \mathrm{id}}"', from=1-1, to=2-1]
	\arrow["{\mathrm{id}}"', from=1-1, to=2-2]
	\arrow["c", from=1-2, to=2-2]
	\arrow["c"', from=2-1, to=2-2]
\end{tikzcd}\]
\[\begin{tikzcd}[ampersand replacement=\&,cramped,row sep=scriptsize]
	{A^{\otimes 3}} \& {A^{\otimes 2}} \\
	{A^{\otimes 2}} \& A
	\arrow["{\mathbf{(e)}}"{description}, draw=none, from=1-1, to=2-2]
	\arrow["{{c \otimes \mathrm{id}}}", from=1-1, to=1-2]
	\arrow["{{\mathrm{id} \otimes c}}"', from=1-1, to=2-1]
	\arrow["c", from=1-2, to=2-2]
	\arrow["c"', from=2-1, to=2-2]
\end{tikzcd}\quad
\begin{tikzcd}[ampersand replacement=\&,cramped,row sep=scriptsize]
	{A^{\otimes 2}} \& {A^{\otimes 2}} \\
	\& A
	\arrow["{\mathbf{(f)}}"{description, pos=0.65}, shift left=3, draw=none, from=1-1, to=2-2]
	\arrow["s", from=1-1, to=1-2]
	\arrow["c"', from=1-1, to=2-2]
	\arrow["c", from=1-2, to=2-2]
\end{tikzcd}\quad
\begin{tikzcd}[ampersand replacement=\&,cramped,row sep=scriptsize]
	{A^{\otimes 3}} \& {A^{\otimes 3}} \& {A^{\otimes 3}} \\
	{A^{\otimes 2}} \&\& {A^{\otimes 2}}
	\arrow["{\mathbf{(g)}}"{description}, draw=none, from=1-1, to=2-3]
	\arrow["{{s \otimes \mathrm{id}}}", from=1-1, to=1-2]
	\arrow["{{\mathrm{id} \otimes c}}"', from=1-1, to=2-1]
	\arrow["{{\mathrm{id} \otimes s}}", from=1-2, to=1-3]
	\arrow["{{c \otimes \mathrm{id}}}", from=1-3, to=2-3]
	\arrow["s"', from=2-1, to=2-3]
\end{tikzcd}
\]
\caption{Symmetric object properties}
\label{fig_SymOpProp}
\end{figure*}

\section{Categorical Background}
\label{Section:CategoricalBackground}

\subsection{Categories of contexts}
\label{Subsection:CategoriesOfContexts}

The structural operations on contexts ---weakening, contraction, and
exchange--- may respectively be captured by structural morphisms, 
$w: I \to A$, $c: A \otimes A \to A$ and $s: A \otimes A \to A \otimes A$, on
an object $A$ in a monoidal category $(\otimes, I)$, subject to appropriate
axioms.  For a single-sorted theory, the category of contexts may then be
modelled as the PRO (PROduct category~\cite{MacLaneAlgebra,Markl}) on an
object with structural morphisms corresponding to the operations permitted by
the substructural theory.  The following definition respectively provides such
objects corresponding to linear, affine, relevant, and cartesian theories.

\begin{definition}
Let $A$ be an object in a monoidal category $(
\otimes, I)$.
\begin{enumerate}
\item 
  $(A, s)$ is a \emph{symmetric object} if it satisfies \textbf{(a)} and
  \textbf{(b)} in Fig.~\ref{fig_SymOpProp}.  

\item 
  $(A, w, s)$ is a \emph{symmetric pointed object} if it satisfies
  \textbf{(a)}--\textbf{(c)} in Fig.~\ref{fig_SymOpProp}.

\item 
  $(A, c, s)$ is a \emph{symmetric multiplicative object} if it satisfies 
  \textbf{(a)}, \textbf{(b)}, and \textbf{(e)}--\textbf{(g)} in 
  Fig.~\ref{fig_SymOpProp}.

\item (Grandis~\cite{G01}) 
  $(A, c, w, s)$ is a \emph{symmetric monoid} if it satisfies 
  \textbf{(a)}--\textbf{(g)} in Fig.~\ref{fig_SymOpProp}.
\end{enumerate}
\end{definition}

\begin{definition}
The \emph{categories of linear ($\mathbb{B}$), affine ($\mathbb{I}$),
relevant~($\mathbb{S}$), and cartesian ($\mathbb{F}$) contexts} are
respectively defined as the free strict monoidal categories on a symmetric
object, symmetric pointed object, symmetric multiplicative object and
symmetric monoid.
\end{definition}

The four categories of the above definition are equivalent to categories whose
objects are the sets $\mathbf{n} = \{1, \ldots, n\}$ for each natural number
$n$ with morphisms as follows.  In the case of $\mathbb{B}$ they are the
bijections, 
in the case of $\mathbb{I}$ they are the injections, 
in the case of $\mathbb{S}$ they are the surjections, 
and in the case of $\mathbb{F}$ they are all functions 
We adopt this presentation, writing the respective generating symmetric
object, symmetric pointed object, symmetric multiplicative object and
symmetric monoid as $(\mathbf{1}, s)$, $(\mathbf{1}, w, s)$,
$(\mathbf{1}, c, s)$, and $(\mathbf{1}, c, w, s)$.

Recall that a symmetric monoidal category 
is \emph{semicocartesian} (\textit{resp}.~\textit{semicartesian}) when the
monoidal unit is initial (\textit{resp}.~terminal) in the category; 
it is \textit{corelevant} (\textit{resp}.~\textit{relevant}) when it is
equipped with codiagonals (\textit{resp}.~diagonals); and 
it is \emph{cocartesian} (\textit{resp}.~\textit{cartesian}) when the monoidal
tensor is given by the categorical coproduct (\textit{resp}.~categorical
product).
We have the following alternative characterisations.

\begin{proposition}\label{prop_ctxSymMon}
\begin{enumerate}
\item 
  $\mathbb{B}$ is the free symmetric strict monoidal category on one object,
  $\mathbf{1}$.  

\item 
  $\mathbb{I}$ is the free semicocartesian strict monoidal category on one
  object, $\mathbf{1}$.

\item 
  $\mathbb{S}$ is the free corelevant strict monoidal category on one object,
  $\mathbf{1}$.  

\item 
  $\mathbb{F}$ is the free cocartesian strict monoidal category on one object,
  $\mathbf{1}$.
\end{enumerate}
\end{proposition}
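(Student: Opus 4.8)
The plan is to prove each clause by identifying two universal properties: the defining one---$\mathbb{X}$ is the free strict monoidal category on an object equipped with the relevant structural morphisms---and the asserted one---$\mathbb{X}$ is the free $\Theta$-structured strict monoidal category on a single object, where $\Theta$ ranges over \emph{symmetric}, \emph{semicocartesian}, \emph{corelevant}, \emph{cocartesian}. The bridge in every case rests on two observations: (i) the generating object-level structure on $\mathbf{1}$ extends, along $\otimes$, to a genuine $\Theta$-structure on the whole of $\mathbb{X}$; and (ii) in any $\Theta$-structured strict monoidal category each object canonically carries the corresponding object-level structure, which every $\Theta$-functor preserves. Granting (i) and (ii), the sought bijection between $\Theta$-functors out of $\mathbb{X}$ and objects of the target factors through the defining universal property, since a $\Theta$-functor is exactly a strict monoidal functor preserving the generating structural morphisms.

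I would treat the symmetric case (1) as the prototype. On the free strict monoidal category on a symmetric object I would define symmetries $\sigma_{\mathbf{m},\mathbf{n}}\colon \mathbf{m}\otimes\mathbf{n}\to\mathbf{n}\otimes\mathbf{m}$ by composing tensor copies of the generator $s=\sigma_{\mathbf{1},\mathbf{1}}$; axioms \textbf{(a)} and \textbf{(b)} are precisely the involution and Yang--Baxter relations which, together with the interchange law of $\otimes$, furnish exactly Mac Lane's coherence data, so $\sigma$ is well defined and coherent. Conversely, every object $D$ of a symmetric strict monoidal category $\mathcal{D}$ yields a symmetric object $(D,\sigma_{D,D})$---\textbf{(a)} being involutivity of the symmetry and \textbf{(b)} following from the hexagon and naturality---and a strict monoidal functor is symmetric precisely when it preserves the swap $s$. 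The defining universal property of $\mathbb{B}$ then produces, for each $D$, a unique strict monoidal functor sending $(\mathbf{1},s)$ to $(D,\sigma_{D,D})$, necessarily symmetric; this is the required correspondence.

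For the extensions (2)--(4) I would adjoin the further generators and repeat the pattern. In a semicocartesian category the unique map $!_D\colon I\to D$ makes each object a symmetric pointed object, axiom \textbf{(c)} being an instance of the naturality of $\sigma$; dually, a codiagonal $\nabla_D\colon D\otimes D\to D$ in a corelevant category furnishes a symmetric multiplicative object satisfying \textbf{(e)}--\textbf{(g)}, and a cocartesian structure supplies both, rendering each object a symmetric monoid. Because $\Theta$-functors preserve these canonical structures and the generators $w,c$ extend along $\otimes$, the two-step argument above again reduces each clause to the defining universal property. For (4) I would additionally invoke the dual of Fox's theorem: a natural symmetric monoid structure on every object is the same datum as a cocartesian structure, so the symmetric monoid on $\mathbf{1}$ is exactly what forces $\otimes$ to be the categorical coproduct.

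\textbf{The main obstacle} is the converse half of observation (i) for (2)--(4): showing that adjoining $w$ and $c$ to the free symmetric category genuinely forces the global property---that the unit $\mathbf{0}$ becomes initial in $\mathbb{I}$, that $\mathbb{S}$ acquires natural coherent codiagonals, and that $\otimes$ becomes a coproduct in $\mathbb{F}$---rather than merely adjoining extra arrows. Using the stated presentations by bijections, injections, surjections and functions, this is a normal-form argument: one must show the relations \textbf{(a)}--\textbf{(g)} complete, so that every morphism factors, uniquely modulo the symmetric-object relations, as a bijection composed with a canonical structural map built from $w$ and $c$, mirroring the surjection--injection factorisation in $\mathbb{F}$. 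This completeness yields the uniqueness needed for initiality and for the coproduct universal property, and is the technical heart of the proof; the existence halves of the universal properties are routine via the extension-along-$\otimes$ construction.
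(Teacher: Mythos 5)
The paper does not actually prove Proposition~\ref{prop_ctxSymMon}: it is recalled as a classical fact (the presentations of bijections, injections, surjections and functions by the generators $s$, $w$, $c$ subject to \textbf{(a)}--\textbf{(g)} go back to the PROP literature, \emph{e.g.}\ Grandis and the cited notes of Fiore), so there is no in-paper argument to compare against. Your strategy is the standard and correct one: extend the generating structure on $\mathbf{1}$ along $\otimes$ to a coherent global $\Theta$-structure, observe that every object of a $\Theta$-structured strict monoidal category carries the corresponding object-level structure preserved by $\Theta$-functors, and match the two universal properties. You also correctly locate the genuine mathematical content in the completeness of the relations \textbf{(a)}--\textbf{(g)} --- the normal-form argument showing, \emph{e.g.}, that all maps $\mathbf{0}\to\mathbf{n}$ built from $w$ and $s$ coincide (so the unit is genuinely initial in $\mathbb{I}$) and that the induced codiagonals and symmetries are natural with respect to \emph{all} generated morphisms; since you flag this rather than claim to have discharged it, the proposal is an accurate outline rather than a complete proof, which is consistent with the level of detail the paper itself supplies.
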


We therefore, in the remainder of this section, take the 
\textit{category of contexts} as some (small) monoidal category 
$(\mathbb{C}, \otimes, I)$, while also considering specialisations of
structures in the cases that $\mathbb{C}$ is symmetric, semicocartesian,
corelevant, or cocartesian monoidal. We refer to objects of $\mathbb{C}$ as
\textit{contexts} and morphisms as \textit{context renamings}. The tensor
product models \textit{context concatenation}. 

\subsection{The universe of discourse}
\label{subsec_UniDisc}

The category in which we model syntax and substitution for a given theory is
the category of (covariant) presheaves over the category of contexts, 
$\mathcal{C} = \mathbf{Set}^\mathbb{C}$. 
An object $X$ in $\mathcal{C}$ is intuitively understood to consist of `terms'
of an abstract syntax $X$, organised into contexts, together with `term
renamings' induced by `context renamings'.

As a category of presheaves, $\mathcal{C}$ is a Grothendieck topos.  It is
thus complete and cocomplete, with limits and colimits given pointwise in
$\mathbf{Set}$.
It is equipped with a cartesian monoidal structure
$(\mathcal{C}, \times, 1)$ and a cocartesian monoidal structure 
$(\mathcal{C}, +, 0)$. 
The cartesian structure is closed, with the right adjoint to each 
$(-) \times X: \mathcal{C} \to \mathcal{C}$ written as $(-)^X$. 

$\mathcal{C}$ exhibits a further monoidal structure induced by the monoidal tensor on $\mathbb{C}$, referred to as the \textit{Day convolution} \cite{D70, IK86}. Writing $\mathcal{Y}: \mathbb{C}^\mathrm{op} \to \mathcal{C}$ for the Yoneda embedding on $\mathbb{C}$, we have the following definition.

\begin{definition}
The \textit{Day convolution}, 
$(-) \hat{\otimes} (-) : \mathcal{C} \times \mathcal{C} \to \mathcal{C}$ is
induced as the left Kan extension of 
$\mathcal{Y}(- \otimes -) 
 : \mathbb{C}^\mathrm{op} \times \mathbb{C}^\mathrm{op} \to \mathcal{C}$ 
along $\mathcal{Y} \times \mathcal{Y}$ and is given, for each 
$X, Y \in \mathcal{C}$ and $A \in \mathbb{C}$, by the coend formula:
\begin{equation*}\textstyle
(X \hat{\otimes} Y)(A) 
= \int^{B_1, B_2 \in \mathbb{C}} 
    X(B_1) \times Y(B_2) \times \mathbb{C}(B_1\!\otimes\!B_2, A) 
\end{equation*}
\end{definition}

The Day convolution has $J = \mathcal{Y}(I)$ as its monoidal unit and equips
$\mathcal{C}$ with a monoidal structure $(\mathcal{C}, \hat{\otimes}, J)$.
Furthermore, this monoidal structure is also closed, with the right adjoint to
$(-) \hat{\otimes} X$ (for each $X \in \mathcal{C}$) written as 
$X \multimap (-) : \mathcal{C} \to \mathcal{C}$, and given by the end formula:
\begin{equation*}\textstyle
(X \multimap Y)(A) 
= \int_{B \in \mathbb{C}} 
    \mathbf{Set}\big(\, X(B) \, , \, Y(A \otimes B) \, \big)
\end{equation*}

The Day convolution, being canonically induced by the monoidal tensor on the
category of contexts, models pairings of terms. Because both the Day and the
cartesian tensor are closed, they distribute over the cocartesian tensor. That
is, we have the (canonical) natural isomorphisms 
$X \times Z + Y \times Z \cong (X + Y) \times Z$ and 
$X \hat\otimes Z + Y \hat{\otimes} Z \cong (X + Y) \hat{\otimes} Z$. With
respect to the specific categories of contexts, we have the following.

\begin{proposition}\label{prop_daySymMon}
    \begin{enumerate}
        \item The Day convolution on $\mathcal{B} = \mathbf{Set}^\mathbb{B}$ induces a symmetric monoidal structure.
        \item The Day convolution on $\mathcal{I} = \mathbf{Set}^\mathbb{I}$ induces a semicartesian monoidal structure.
        \item The Day convolution on $\mathcal{S} = \mathbf{Set}^\mathbb{S}$ induces a relevant monoidal structure.
        \item The Day convolution on $\mathcal{F} = \mathbf{Set}^\mathbb{F}$ induces a cartesian monoidal structure.
    \end{enumerate}
\end{proposition}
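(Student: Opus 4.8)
The plan is to transfer each structural feature from the category of contexts $\mathbb{C}$ to the Day convolution on $\mathcal{C} = \mathbf{Set}^{\mathbb{C}}$, working directly with the defining coend. Since each of $\mathbb{B}, \mathbb{I}, \mathbb{S}, \mathbb{F}$ is in particular symmetric monoidal (Proposition~\ref{prop_ctxSymMon}), I would first record the classical fact (Day~\cite{D70}) that the Day convolution of a symmetric monoidal category is again symmetric monoidal; this settles part~(1) and supplies the symmetric monoidal backbone underlying parts~(2)--(4). Concretely, the braiding $X \hat\otimes Y \cong Y \hat\otimes X$ is induced on the coend by the symmetry $B_1 \otimes B_2 \cong B_2 \otimes B_1$ of $\mathbb{C}$ together with the exchange of the two coend variables, and the coherence conditions descend from those of $\mathbb{C}$.

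For part~(2) it suffices to show that the Day unit $J = \mathcal{Y}(I)$ is terminal in $\mathcal{I}$, since a symmetric monoidal category with terminal unit is semicartesian by definition. As $\mathbb{I}$ is semicocartesian, its unit $I$ is initial, so $J(A) = \mathbb{I}(I, A)$ is a singleton for every context $A$; hence $J$ is the constant singleton presheaf, which is the terminal object of $\mathcal{I}$ (terminal objects in presheaf categories being computed pointwise). For part~(4) I would instead identify the Day tensor itself: since $\mathbb{F}$ is cocartesian its tensor is the coproduct, so $\mathbb{F}(B_1 \otimes B_2, A) \cong \mathbb{F}(B_1, A) \times \mathbb{F}(B_2, A)$; feeding this into the coend formula and applying Fubini together with the density (co-Yoneda) isomorphism $\int^{B} X(B) \times \mathbb{F}(B, A) \cong X(A)$ yields a natural isomorphism $(X \hat\otimes Y)(A) \cong X(A) \times Y(A)$. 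Together with $J \cong 1$ as in part~(2), this exhibits the Day structure as (monoidally isomorphic to) the cartesian one.

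The substantive case is part~(3), for which I would construct the diagonals directly from the codiagonals of $\mathbb{S}$. Writing $c_A : A \otimes A \to A$ for the corelevant codiagonal, I would define $\Delta_X : X \to X \hat\otimes X$ componentwise by sending $x \in X(A)$ to the coend class $\big[\, x, x, c_A \,\big] \in (X \hat\otimes X)(A)$. Naturality in $X$ is immediate, and $\Delta_X$ is a morphism of $\mathcal{S}$ (i.e.\ natural in the context $A$) precisely because $c$ is natural: for $g : A \to A'$ the coend dinaturality identifies $\big[\, x, x, g \circ c_A \,\big]$ with $\big[\, X(g)\,x, X(g)\,x, c_{A'} \,\big]$ exactly when $g \circ c_A = c_{A'} \circ (g \otimes g)$, which is the naturality square of the codiagonal. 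The comultiplication laws for a relevant structure then follow by dualising the symmetric-multiplicative-object axioms: cocommutativity of $\Delta$ from $c \circ s = c$ (\textbf{(f)} in Fig.~\ref{fig_SymOpProp}), coassociativity from \textbf{(e)}, and the coherence with the braiding from \textbf{(g)}.

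The main obstacle I anticipate is precisely the bookkeeping of part~(3): one must check that the assignment $x \mapsto \big[\, x, x, c_A \,\big]$ is well defined on coend equivalence classes, and that each required diagram of natural transformations in $\mathcal{S}$ reduces, after unwinding the coend, to the corresponding equation among the structural morphisms of $\mathbb{S}$. By contrast, parts~(1), (2) and~(4) are comparatively routine once the symmetric monoidal structure is available. Conceptually, all four cases reflect that the Yoneda embedding $\mathcal{Y} : \mathbb{C}^{\mathrm{op}} \to \mathcal{C}$ is strong symmetric monoidal, so that initial-unit and codiagonal structure on $\mathbb{C}$ dualise to terminal-unit and diagonal structure on representables; the explicit coend construction is what allows the diagonal to be extended to all presheaves, which is not automatic since $X \hat\otimes X$ depends quadratically on $X$.
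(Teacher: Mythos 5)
Your proposal is correct and follows the standard argument that the paper itself leaves implicit: Proposition~\ref{prop_daySymMon} is stated without proof, being the classical transfer of monoidal structure along Day convolution (symmetry from the symmetry of $\mathbb{C}$, terminality of $J=\mathcal{Y}(I)$ from initiality of $I$, the co-Yoneda collapse $X\hat\otimes Y\cong X\times Y$ in the cocartesian case, and diagonals $x\mapsto[x,x,c_A]$ induced by the codiagonals in the corelevant case). The only cosmetic remark is that your worry about well-definedness on coend classes in part~(3) is moot, since $\Delta_X$ is defined on elements of $X(A)$ rather than on a coend; the real work there is exactly the dinaturality and axiom-chasing you describe.
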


In these concrete cases, the generating object, $\mathbf{1}$, is equipped with
a dual structure in the opposite category, which is preserved by the Yoneda
embedding. We refer to $V = \mathcal{Y}(\mathbf{1})$ as the 
\textit{presheaf of variables}.

\begin{proposition}\label{prop_VSymOb}
    \begin{enumerate}
        \item $\mathbf{1}$ and $V$ are symmetric objects in $\mathbb{B}^\mathrm{op}$ and $\mathcal{B}$, respectively.
        \item $\mathbf{1}$ and $V$ are symmetric copointed objects in $\mathbb{I}^\mathrm{op}$ and $\mathcal{I}$, respectively.
        \item $\mathbf{1}$ and $V$ are symmetric comultiplicative objects in $\mathbb{S}^\mathrm{op}$ and $\mathcal{S}$, respectively.
        \item $\mathbf{1}$ and $V$ are symmetric comonoids in $\mathbb{F}^\mathrm{op}$ and $\mathcal{F}$, respectively.
    \end{enumerate}
\end{proposition}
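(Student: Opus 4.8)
The plan is to prove the statement in two stages: first exhibit the dual algebraic structure on $\mathbf{1}$ in $\mathbb{C}^\mathrm{op}$, and then transport it along the Yoneda embedding $\mathcal{Y}\colon\mathbb{C}^\mathrm{op}\to\mathcal{C}$, which I will use as a \emph{strong symmetric monoidal} functor into $(\mathcal{C},\hat\otimes,J)$. Throughout, the essential observation is that each of the notions of symmetric object, (co)pointed object, (co)multiplicative object and (co)monoid is defined purely by equational diagrams in the monoidal structure, so all of them are preserved by strong symmetric monoidal functors, while the ``co-'' variants are obtained simply by reversing arrows.

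For the first halves (the statements about $\mathbf{1}$ in $\mathbb{C}^\mathrm{op}$): the generating structural morphisms $s\colon\mathbf{1}\otimes\mathbf{1}\to\mathbf{1}\otimes\mathbf{1}$, $w\colon I\to\mathbf{1}$ and $c\colon\mathbf{1}\otimes\mathbf{1}\to\mathbf{1}$ become, in $\mathbb{C}^\mathrm{op}$, a symmetry $s^\mathrm{op}$, a copoint $w^\mathrm{op}\colon\mathbf{1}\to I$ and a comultiplication $c^\mathrm{op}\colon\mathbf{1}\to\mathbf{1}\otimes\mathbf{1}$. Since the defining axioms \textbf{(a)}--\textbf{(g)} of Fig.~\ref{fig_SymOpProp} are commuting diagrams, passing to the opposite category reverses every arrow and turns the symmetric-pointed, symmetric-multiplicative and symmetric-monoid axioms into those of a symmetric copointed object, a symmetric comultiplicative object and a symmetric comonoid respectively; the symmetric-object axioms \textbf{(a)}, \textbf{(b)} are self-dual. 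Equivalently, a co-$X$ object in a monoidal category $\mathcal{D}$ is by definition an $X$ object in $\mathcal{D}^\mathrm{op}$, and $(\mathbb{C}^\mathrm{op})^\mathrm{op}=\mathbb{C}$ carries the generating $X$-structure on $\mathbf{1}$ by construction. As a cross-check, dualising Proposition~\ref{prop_ctxSymMon} shows that $\mathbb{B}^\mathrm{op}$, $\mathbb{I}^\mathrm{op}$, $\mathbb{S}^\mathrm{op}$, $\mathbb{F}^\mathrm{op}$ are respectively symmetric, semicartesian, relevant and cartesian monoidal categories, in each of which every object canonically bears exactly the claimed dual structure.

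For the second halves (the statements about $V$ in $\mathcal{C}$): by the defining left Kan extension property of the Day convolution, the Yoneda embedding is strong symmetric monoidal, with coherent isomorphisms $\mathcal{Y}(A)\hat\otimes\mathcal{Y}(B)\cong\mathcal{Y}(A\otimes B)$ and $J=\mathcal{Y}(I)$ (computed from the coend by the density formula, and symmetric because each $\mathbb{C}$ here is symmetric). A strong symmetric monoidal functor sends $X$ objects to $X$ objects and co-$X$ objects to co-$X$ objects, since it carries both the structure maps and the commuting diagrams witnessing the axioms. Applying this to $\mathbf{1}$, which carries the dual structure in $\mathbb{C}^\mathrm{op}$ by the previous step, shows that $V=\mathcal{Y}(\mathbf{1})$ carries the corresponding structure in $(\mathcal{C},\hat\otimes,J)$, with structure maps $\mathcal{Y}(s^\mathrm{op})$, $\mathcal{Y}(w^\mathrm{op})$ and $\mathcal{Y}(c^\mathrm{op})$; this is consistent with the monoidal specialisations recorded in Proposition~\ref{prop_daySymMon}.

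I expect the only real work to lie in pinning down the strong symmetric monoidal structure of $\mathcal{Y}$ and, in particular, in tracking the symmetry so that the braid axiom \textbf{(b)} and the compatibility axioms \textbf{(f)}, \textbf{(g)} transport correctly; this is standard Day-convolution theory, but the coherence and symmetry bookkeeping is the delicate part. The remaining diagram chases, reversing \textbf{(a)}--\textbf{(g)}, are routine.
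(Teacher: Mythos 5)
Your proposal is correct and follows essentially the same route as the paper, which justifies the proposition by the single observation that the generating object $\mathbf{1}$ carries the dual structure in the opposite category and that this structure is preserved by the Yoneda embedding, viewed as a strong symmetric monoidal functor into $(\mathcal{C},\hat\otimes,J)$. Your write-up merely makes explicit the two standard ingredients the paper leaves implicit: that the axioms of Fig.~\ref{fig_SymOpProp} dualise by arrow reversal, and that $\mathcal{Y}(A)\hat\otimes\mathcal{Y}(B)\cong\mathcal{Y}(A\otimes B)$ by the defining Kan-extension property of the Day convolution.
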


\subsection{Context extension}
\label{ContextExtension}

Major motivation for the presheaf approach to modelling variable binding is
that one may model the operation of context extension by an endofunctor on
$\mathcal{C}$.  

\medskip

Firstly, observe that each object $A$ in $\mathbb{C}$ induces the endofunctor
$(-) \otimes A : \mathbb{C} \to \mathbb{C}$.

\begin{definition}
For each object $A \in \mathbb{C}$, the functor of \emph{context extension} by
$A$, $\delta_A: \mathcal{C} \to \mathcal{C}$, is the presheaf restriction
along $(-) \otimes A$, given explicitly as $\delta_A(X) = X(- \otimes A)$.
\end{definition}

The presheaf $\delta_A(X)$, when evaluated at a context $B \in \mathbb{C}$,
returns the terms for $X$ in the \textit{extended context} $B \otimes A$.
Each $\delta_A$ has a left and right adjoint, induced as the left and right
\textit{Yoneda extension} of $- \otimes A$ as in the following diagram.

\[\begin{tikzcd}[ampersand replacement=\&,row sep=scriptsize]
	{\mathbb{C}^\mathrm{op}} \&\& {\mathcal{C}} \\
	\\
	{\mathbb{C}^\mathrm{op}} \&\& {\mathcal{C}}
	\arrow["{\mathcal{Y}}", hook, from=1-1, to=1-3]
	\arrow["{(- \otimes A)^\mathrm{op}}"', from=1-1, to=3-1]
	\arrow[""{name=0, anchor=center, inner sep=0}, 
  curve={height=-18pt}, from=1-3, to=3-3]
	\arrow[""{name=1, anchor=center, inner sep=0}, "{- \hat\otimes \mathcal{Y}(A)}"', curve={height=18pt}, from=1-3, to=3-3]
	\arrow["{\mathcal{Y}}"', hook, from=3-1, to=3-3]
	\arrow[""{name=2, anchor=center, inner sep=0}, "{\delta_A}"{description}, from=3-3, to=1-3]
	\arrow["\dashv"{anchor=center}, draw=none, from=1, to=2]
	\arrow["\dashv"{anchor=center}, draw=none, from=2, to=0]
\end{tikzcd}\]
%

The fact that the left adjoint to $\delta_A$ is 
$- \hat{\otimes} \mathcal{Y}(A)$ indicates that there is a canonical
isomorphism $\delta_A \cong {\mathcal{Y}(A) \multimap (-)}$.  
%
As $\delta_A$ is both a left and a right adjoint, it is monoidal with respect
to both the cartesian and cocartesian structures on $\mathcal{C}$.  However,
we emphasise that it is not, in general, monoidal with respect to the Day
convolution.

\section{Structural Endofunctors}
\label{Section:StructuralEndofunctors}

\subsection{Symmetric endofunctors}
\label{SymmetricEndofunctors}

The endofunctors, $(-) \otimes A$ on $\mathbb{C}$ and $\delta_A$ on
$\mathcal{C}$, inherit structural properties of the monoidal structure
$\otimes$ on $\mathbb{C}$.  In this section, we define what it means for an
endofunctor to be endowed with such structure and exhibit how it is induced. 

\begin{definition}
Let $(\mathrm{End}(\mathcal{E}), \circ, \mathrm{id_\mathcal{E}})$ be the
monoidal category of endofunctors on a category $\mathcal{E}$, where $\circ$
denotes functor composition.  
\begin{enumerate}
\item 
  A \emph{symmetric endofunctor} on $\mathcal{E}$ is a symmetric object in
  $\mathrm{End}(\mathcal{E})$.  

\item 
  A \emph{symmetric pointed endofunctor} on $\mathcal{E}$ is a symmetric
  pointed object in $\mathrm{End}(\mathcal{E})$.

\item 
  A \emph{symmetric multiplicative endofunctor} on $\mathcal{E}$ is a
  symmetric multiplicative object in $\mathrm{End}(\mathcal{E})$.  

\item A
  \emph{symmetric monad} on $\mathcal{E}$ is a symmetric monoid in
  $\mathrm{End}(\mathcal{E})$.
\end{enumerate}
\end{definition}

Noting that the definition of a symmetric endofunctor is self-dual, we have
dual definitions of a \textit{symmetric copointed endofunctor},
\textit{symmetric comultiplicative endofunctor}, and 
\textit{symmetric comonad}.

\begin{proposition}\label{prop_deltaSymEndo}
\begin{enumerate}
\item 
  If $\mathbb{C}$ is symmetric monoidal, every object $A$ is canonically a
  symmetric object $(A, s_A)$. 
  Then, $\big((-) \otimes A, \mathrm{id}\otimes s_A\big)$ and 
  $(\delta_A, \mathsf{swap}_A)$ are symmetric endofunctors.

\item 
  If $\mathbb{C}$ is semicocartesian monoidal, every object $A$ is canonically
  a symmetric pointed object $(A, w_A, s_A)$. 
  Then, 
  $\big((-) \otimes A, \mathrm{id}\otimes w_A, \mathrm{id}\otimes s_A\big)$
  and $(\delta_A, \mathsf{up}_A, \mathsf{swap}_A)$ are symmetric pointed
  endofunctors.

\item 
  If $\mathbb{C}$ is corelevant monoidal, every object $A$ is canonically a
  symmetric multiplicative object $(A, c_A, s_A)$. 
  Then, 
  $\big((-) \otimes A, \mathrm{id} \otimes c_A, \mathrm{id} \otimes s_A\big)$
  and $(\delta_A, \mathsf{cont}_A, \mathsf{swap}_A)$ are symmetric
  multiplicative endofunctors.

\item 
  If $\mathbb{C}$ is cocartesian monoidal, every object $A$ is canonically a
  symmetric monoid $(A, c_A, w_A, s_A)$. 
  Then, 
  $\big( 
    (-) \otimes A , 
    \mathrm{id}\otimes c_A , 
    \mathrm{id} \otimes w_A ,
    \mathrm{id} \otimes s_A 
   \big)$
  and $(\delta_A, \mathsf{cont}_A, \mathsf{up}_A, \mathsf{swap}_A)$ are
  symmetric monads.
\end{enumerate}
Above,\\[-6mm]
\begin{align*}
\mathsf{swap}_{A, X, B} 
&= X(\mathrm{id}_B \otimes s_A) : \delta_A^2(X)(B) \to \delta_A^2(X)(B) 
\\
\mathsf{up}_{A, X, B} 
&= X(\mathrm{id}_B \otimes w_A) : X(B) \to \delta_A(X)(B) 
\\
\mathsf{cont}_{A, X, B} 
&= X(\mathrm{id}_B \otimes c_A) : \delta_A^2(X)(B) \to \delta_A(X)(B)
\end{align*}
\end{proposition}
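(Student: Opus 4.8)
My overall plan is to realise both families as the image of the object $A$ under a strict monoidal functor into a category of endofunctors, and then to appeal to the fact that strict (indeed strong) monoidal functors preserve every structure cut out by commutative diagrams in the monoidal structure — in particular symmetric objects, symmetric pointed objects, symmetric multiplicative objects, and symmetric monoids. Because $\mathbb{C}$ is free strict monoidal, all tensor coherence is strict and this preservation is immediate: a strict monoidal functor carries each defining equation of Fig.~\ref{fig_SymOpProp} to its literal image.

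I would first record the ``canonically a \dots'' clauses. When $\mathbb{C}$ is symmetric monoidal, $s_A := \beta_{A,A}$ (the self-braiding) satisfies \textbf{(a)}--\textbf{(b)} by the involutivity and hexagon laws of the symmetry; when the unit is initial, $w_A := {!}_A : I \to A$ satisfies \textbf{(c)} by uniqueness together with naturality of $\beta$; when codiagonals exist, $c_A := \nabla_A$ satisfies \textbf{(e)}--\textbf{(g)}; and the cocartesian case combines these, also yielding the unit law \textbf{(d)}. For context extension the crucial computation is that it composes strictly, $\delta_A \circ \delta_B = \delta_{A\otimes B}$ with $\delta_I = \mathrm{id}_{\mathcal C}$, directly from $\delta_A(X)=X(-\otimes A)$ and associativity/unitality of $\otimes$. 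Hence $D : A \mapsto \delta_A$, sending a renaming $f:A\to A'$ to the natural transformation with components $X(\mathrm{id}_B \otimes f)$, is a strict monoidal functor $\mathbb{C}\to\mathrm{End}(\mathcal C)$. Unwinding definitions gives $D(s_A)=\mathsf{swap}_A$, $D(w_A)=\mathsf{up}_A$, $D(c_A)=\mathsf{cont}_A$ under the identifications $\delta_{A\otimes A}=\delta_A^2$ and $\delta_I=\mathrm{id}$, so that preservation of structure by $D$ delivers exactly the claimed symmetric endofunctor / pointed endofunctor / multiplicative endofunctor / monad structure on $\delta_A$ (the monad laws being the images of the monoid laws).

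The functor $T : A \mapsto (-)\otimes A$, with $T(f)=\mathrm{id}\otimes f$, is treated the same way, except that self-composition reverses the tensor: $T(A)\circ T(B) = (-)\otimes(B\otimes A)=T(B\otimes A)$. I would therefore view $T$ as a strict monoidal functor out of $\mathbb{C}^{\mathrm{rev}}$, the category $\mathbb{C}$ with tensor reversed. The step that requires care --- and the main subtlety of the argument --- is that each of the four structures is invariant under tensor reversal: using involutivity \textbf{(a)} and commutativity \textbf{(f)}, the braid relation \textbf{(b)} is stable under interchanging the two elementary swaps, associativity \textbf{(e)} becomes two-sided, and the weakening/unit laws \textbf{(c)},\textbf{(d)} are carried to themselves. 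Thus $(A,s_A,\dots)$ is again a symmetric object / pointed object / multiplicative object / monoid in $\mathbb{C}^{\mathrm{rev}}$, and since on the diagonal $A\otimes^{\mathrm{rev}}A = A\otimes A$ its image under $T$ consists of the morphisms $\mathrm{id}\otimes s_A$, $\mathrm{id}\otimes w_A$, $\mathrm{id}\otimes c_A$, we obtain precisely the stated structure on $(-)\otimes A$.

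What remains is routine. The preservation lemma is trivial given strictness, and the naturality of $\mathsf{swap}$, $\mathsf{up}$, $\mathsf{cont}$ in both the presheaf and the context is automatic, as they are assembled from the reindexing $X(\mathrm{id}_B\otimes-)$. I expect the only genuinely delicate bookkeeping to be the tensor-reversal for $(-)\otimes A$: one must confirm that the elementary whiskerings $\sigma\otimes\mathrm{id}$ and $\mathrm{id}\otimes\sigma$ of the candidate symmetry $\sigma=\mathrm{id}\otimes s_A$, computed in $\mathrm{End}(\mathbb{C})$, are the $L_X=X\otimes(-)$ images of the object-level maps with the expected (flipped) placement of the swap, so that the endofunctor axioms reduce verbatim to the object axioms. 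Should this abstract route feel opaque, the same conclusions can be reached by direct componentwise verification, whiskering each diagram of Fig.~\ref{fig_SymOpProp} by $X\otimes(-)$ and by $X(\mathrm{id}_B\otimes-)$ respectively.
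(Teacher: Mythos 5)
The paper states this proposition without proof, treating it as a routine consequence of functoriality, so there is no ``official'' argument to match yours against; judged on its own terms, your proposal is correct and is a clean way to package the verification. Your two key structural observations are exactly right: $A \mapsto \delta_A$ is a strict monoidal functor $\mathbb{C} \to \mathrm{End}(\mathcal{C})$ (since $\delta_A\circ\delta_B(X)(C) = X(C\otimes A\otimes B) = \delta_{A\otimes B}(X)(C)$, with \emph{no} reversal), whereas $A \mapsto (-)\otimes A$ only becomes strict monoidal after reversing the tensor on the source, and the whole burden then falls on the claim that symmetric / symmetric pointed / symmetric multiplicative objects and symmetric monoids are stable under tensor reversal. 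That claim is true, and you are right to single it out as the one delicate point, but your justification of it is slightly off in the details: the braid relation \textbf{(b)} is reversal-stable on its own (the equation $aba=bab$ is symmetric in $a\leftrightarrow b$; neither \textbf{(a)} nor \textbf{(f)} is needed), the mirrored unit law \textbf{(c)} needs \textbf{(a)} as you say, and the one axiom you do not mention, \textbf{(g)}, is the only one requiring an actual computation --- its mirror follows by pre- and post-composing \textbf{(g)} with swaps and cancelling via \textbf{(a)}, or more simply from naturality of the braiding since $s_A=\beta_{A,A}$, $w_A$, $c_A$ are the canonical structural morphisms here. With that one verification supplied, your argument is complete; the componentwise fallback you mention at the end is essentially what the paper implicitly intends, and your monoidal-functor packaging buys a uniform treatment of all four cases at the cost of having to prove the reversal-invariance lemma once.
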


We remark that, for each $X$ in $\mathcal{C}$, the endofunctor 
$(-) \hat{\otimes} X$ is endowed with the appropriate dual structure induced
by the monoidal structure on $\mathbb{C}$, due to
Proposition~\ref{prop_daySymMon}.

\subsection{Tensorial strength}

In~\cite{K70}, Kock defines a \textit{strong endofunctor} $T$ to be one with a
\textit{(right) tensorial strength} 
$\mathbf{str}_{A,B} : T(A) \otimes B \to T(A \otimes B)$ and a 
\textit{strong monad} $(T, \mu, \eta)$ to be one such such that diagrams
\textbf{(b)} and \textbf{(c)} of Fig.~\ref{fig_strProp} commute, which
assert that the strength respects the monad structure of $T$.
This definition may be extended to  structural endofunctors by asking that the
strength also respects the symmetry.

\begin{figure*}
    \[
    \begin{tikzcd}[ampersand replacement=\&,column sep=small, row sep=scriptsize]
	{T^2(A)\!\otimes\!B} \& {T(T(A)\!\otimes\!B)} \& {T^2(A\!\otimes\!B)} \\
	{T^2(A)\!\otimes\!B} \& {T(T(A)\!\otimes\!B)} \& {T^2(A\!\otimes\!B)}
        \arrow["{\mathbf{(a)}}"{description}, draw=none, from=1-2, to=2-2]
	\arrow["{\mathbf{str}_{T(A),B}}", shift left, draw=none, from=1-1, to=1-2]
	\arrow[from=1-1, to=1-2]
	\arrow["{\varsigma_A \otimes \mathrm{id}}"', from=1-1, to=2-1]
	\arrow["{T(\mathbf{str}_{A,B})}", shift left, draw=none, from=1-2, to=1-3]
	\arrow[from=1-2, to=1-3]
	\arrow["{\varsigma_{A\otimes B}}", from=1-3, to=2-3]
	\arrow["{\mathbf{str}_{T(A),B}}"', shift right, draw=none, from=2-1, to=2-2]
	\arrow[from=2-1, to=2-2]
	\arrow["{T(\mathbf{str}_{A,B})}"', shift right, draw=none, from=2-2, to=2-3]
	\arrow[from=2-2, to=2-3]
\end{tikzcd} \;
\begin{tikzcd}[ampersand replacement=\&, cramped, row sep=scriptsize, column sep=small]
	{A\!\otimes\! B} \\
	{T(A)\!\otimes\!B} \& {T(A\!\otimes\!B)}
	\arrow["{\mathbf{(b)}}"{description, pos=0.2}, shift right=3, draw=none, from=1-1, to=2-2]
	\arrow["{\eta_A \otimes \mathrm{id}}"', from=1-1, to=2-1]
	\arrow["{\eta_{A\otimes B}}", from=1-1, to=2-2]
	\arrow["{\mathbf{str}_{A,B}}"', from=2-1, to=2-2]
\end{tikzcd} \;
    \begin{tikzcd}[ampersand replacement=\&,cramped, row sep=scriptsize,column sep=small]
	{T^2(A)\!\otimes\! B} \& {T(T(A)\!\otimes\!B)} \& {T^2(A\!\otimes\! B)} \\
	{T(A)\!\otimes\! B} \&\& {T(A\! \otimes\! B)}
        \arrow["{\mathbf{(c)}}"{description}, shift left=5, draw=none, from=2-1, to=2-3]
	\arrow["{\mathbf{str}_{T(A), B}}", shift left, draw=none, from=1-1, to=1-2]
	\arrow[from=1-1, to=1-2]
	\arrow["{\mu_A\otimes \mathrm{id}}"', from=1-1, to=2-1]
	\arrow["{T(\mathbf{str}_{A,B})}", shift left, draw=none, from=1-2, to=1-3]
	\arrow[from=1-2, to=1-3]
	\arrow["{\mu_{A\otimes B}}", from=1-3, to=2-3]
	\arrow["{\mathbf{str}_{A,B}}"', from=2-1, to=2-3]
\end{tikzcd}
    \]
    \caption{Strength properties}
    \label{fig_strProp}
\end{figure*}
\begin{figure*}
    \begin{equation*}
            \begin{tikzcd}[ampersand replacement=\&,cramped, row sep=scriptsize]
	{T \circ T \circ S} \& {T \circ S \circ T} \& {S \circ T \circ T} \\
	{T \circ T\circ S} \& {T\circ S \circ T} \& {S\circ T \circ T}
        \arrow["{\mathbf{(a)}}"{description}, draw=none, from=1-2, to=2-2]
	\arrow["{T\tau}", from=1-1, to=1-2]
	\arrow["{\varsigma S}"', from=1-1, to=2-1]
	\arrow["{\tau T}", from=1-2, to=1-3]
	\arrow["{S\tau}", from=1-3, to=2-3]
	\arrow["{T\tau}"', from=2-1, to=2-2]
	\arrow["{\tau T}"', from=2-2, to=2-3]
\end{tikzcd}
 \quad\quad
\begin{tikzcd}[ampersand replacement=\&,cramped, row sep=scriptsize]
	S \\
	{T\circ S} \& {S\circ T}
	\arrow["{\mathbf{(b)}}"{description}, shift right=3, draw=none, from=1-1, to=2-2]
	\arrow["{{\eta S}}"', from=1-1, to=2-1]
	\arrow["{{S\eta}}", from=1-1, to=2-2]
	\arrow["\tau"', from=2-1, to=2-2]
\end{tikzcd} \quad\quad
    \begin{tikzcd}[ampersand replacement=\&,cramped, row sep=scriptsize]
	{T\circ T\circ S} \& {T\circ S\circ T} \& {S\circ T \circ T} \\
	{T\circ S} \&\& {S\circ T}
        \arrow["{\mathbf{(c)}}"{description}, shift left=5, draw=none, from=2-1, to=2-3]
	\arrow["{T\tau}", from=1-1, to=1-2]
	\arrow["{\mu S}"', from=1-1, to=2-1]
	\arrow["{\tau T}", from=1-2, to=1-3]
	\arrow["{S\mu}", from=1-3, to=2-3]
	\arrow["\tau"', from=2-1, to=2-3]
\end{tikzcd}
    \end{equation*}
    \caption{Distributive law properties}
    \label{fig_distLawProp}
\end{figure*}

\begin{definition}
Let $(T,\mathbf{str})$ be a strong endofunctor.
\begin{enumerate}
\item 
  For a symmetric endofunctor $(T, \varsigma)$, $T$ is a 
  \emph{strong symmetric endofunctor} if \textbf{(a)} in 
  Fig.~\ref{fig_strProp} commutes.

\item 
  For a symmetric pointed endofunctor $(T, \eta, \varsigma)$, $T$ is a
  \emph{strong symmetric pointed endofunctor} if \textbf{(a)} and
  \textbf{(b)} in Fig.~\ref{fig_strProp} commute.

\item 
  For a symmetric multiplicative endofunctor $(T, \mu, \varsigma)$, $T$ is a
  \emph{strong symmetric multiplicative endofunctor} if \textbf{(a)} and
  \textbf{(c)} in Fig.~\ref{fig_strProp} commute.  

\item 
  For a symmetric monad $(T, \mu, \eta, \varsigma)$, $T$ is a
  \emph{strong symmetric monad} if \textbf{(a)}--\textbf{(c)} in 
  Fig.~\ref{fig_strProp} commute.
\end{enumerate}
\end{definition}

The isomorphism $\delta_A\!\cong\!{\mathcal{Y}(A)\!\multimap\!(-)}$ endows
$\delta_A$ with a \textit{left tensorial strength},
$\mathbf{str'}_{X,Y} 
 : X \hat\otimes \delta_A(Y) \to \delta_A(X \hat\otimes Y)$. 
If $\mathbb{C}$ is symmetric monoidal, then the Day convolution on
$\mathcal{C}$ is too, in which case $\delta_A$ has a right tensorial strength,
\[
\mathbf{str}
: 
\delta_A(X) \hat{\otimes} Y 
  \cong 
Y \hat\otimes \delta_A(X) 
  \xrightarrow{\mathbf{str'}} 
\delta_A(Y \hat\otimes X) 
  \cong 
\delta_A(X \hat{\otimes} Y) 
\]
which, for each type of monoidal structure on $\mathbb{C}$, respects the
induced structure on $\delta_A$.

\subsection{Symmetric distributive laws}

We perform a similar analysis on the notion of a \textit{distributive law}, by
first considering the usual notion on a monad, extending the definition to
preserve symmetry and then considering each condition individually. We
therefore obtain the following definitions.

\begin{definition}
Let $T$ and $S$ be endofunctors on a category and let 
$\tau: T\circ S \to S\circ T$ be a natural transformation.  
\begin{enumerate}
\item 
  If $(T, \varsigma)$ is a symmetric endofunctor, then $\tau$ is a
  \emph{symmetric transformation} if \textbf{(a)} in 
  Fig.~\ref{fig_distLawProp} commutes.  

\item 
  If $(T, \eta, \varsigma)$ is a symmetric pointed endofunctor, $\tau$ is a
  \emph{symmetric pointed transformation} if \textbf{(a)} and \textbf{(b)}
  in Fig.~\ref{fig_distLawProp} commute.

\item 
  If $(T, \mu, \varsigma)$ is a symmetric multiplicative endofunctor, $\tau$
  is a \emph{symmetric multiplicative transformation} if \textbf{(a)} and
  \textbf{(c)} in Fig.~\ref{fig_distLawProp} commute.  

\item 
  If $(T, \mu, \eta, \varsigma)$ is a symmetric monad, $\tau$ is a
  \emph{symmetric distributive law} if \textbf{(a)}--\textbf{(c)} in 
  Fig.~\ref{fig_distLawProp} commute.
\end{enumerate}
\end{definition}

Similar dual definitions may be given for a 
\textit{symmetric copointed transformation}, a 
\textit{symmetric comulitplicative transformation}, and a
\textit{symmetric codistributive law}. 
In fact, we have already encountered examples of such structures: The
symmetry, $\varsigma: T^2 \to T^2$, of a symmetric endofunctor, symmetric
pointed endofunctor, symmetric multiplicative endofunctor or symmetric monad
is respectively a symmetric transformation, symmetric pointed transformation,
symmetric multiplicative transformation, or symmetric distributive law between
$T$ and itself.  Strength is also an example of such a natural transformation.

\begin{proposition}\label{prop_strDistLaw}
If $F$ is a strong endofunctor on a symmetric
(\textit{resp.}~semicocartesian/corelevant/cocartesian) monoidal category
$(\otimes, I)$, then for each object $B$, the component of the strength, 
\[
  \mathbf{str}_{-,B}: (- \otimes B) \circ F \to F \circ (- \otimes B)
\]
is a symmetric transformation (\textit{resp.}~symmetric pointed
transformation/symmetric multiplicative transformation/symmetric distributive
law) between the symmetric endofunctor (\textit{resp.} symmetric pointed
endofunctor/symmetric multiplicative endofunctor/symmetric monad) 
$(-) \otimes B$ and $F$.  
\end{proposition}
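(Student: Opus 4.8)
The plan is to read the claim through the template of Fig.~\ref{fig_distLawProp} with $T = (-)\otimes B$ and $S = F$, so that the transformation in question is $\tau = \mathbf{str}_{-,B} : T\circ S \to S\circ T$. As in Proposition~\ref{prop_deltaSymEndo} (whose reasoning applies verbatim to any monoidal category of the stated kind), the object $B$ carries a canonical symmetry $s_B : B\otimes B\to B\otimes B$, a point $w_B : I\to B$ (the unique map out of the initial unit, in the semicocartesian case), and a multiplication $c_B : B\otimes B\to B$ (the codiagonal, in the corelevant case); these equip $T=(-)\otimes B$ with $\varsigma=\mathrm{id}\otimes s_B$, $\eta=\mathrm{id}\otimes w_B$, and $\mu=\mathrm{id}\otimes c_B$, making it a symmetric endofunctor, symmetric pointed endofunctor, symmetric multiplicative endofunctor, or symmetric monad, respectively. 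By MacLane coherence I may assume $(\otimes,I)$ strict, and the only properties of $F$ I shall invoke are the defining coherence axioms of its strength: naturality of $\mathbf{str}$ in both arguments, the unit law $\mathbf{str}_{X,I}=\mathrm{id}$, and the associativity law $\mathbf{str}_{X\otimes B,\,C}\circ(\mathbf{str}_{X,B}\otimes\mathrm{id}_C)=\mathbf{str}_{X,\,B\otimes C}$.

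The crux is a single reduction shared by all three conditions. Computing the top edge of Fig.~\ref{fig_distLawProp} at a component $X$ gives the doubled strength $\mathbf{str}_{X\otimes B,\,B}\circ(\mathbf{str}_{X,B}\otimes\mathrm{id}_B)$, which the associativity law collapses to the single strength $\mathbf{str}_{X,\,B\otimes B}$ at the combined object. After this collapse each diagram becomes exactly an instance of naturality of $\mathbf{str}$ in its second argument along the relevant structural morphism of $B$. Concretely, condition \textbf{(a)} becomes $F(\mathrm{id}\otimes s_B)\circ\mathbf{str}_{X,\,B\otimes B}=\mathbf{str}_{X,\,B\otimes B}\circ(\mathrm{id}\otimes s_B)$, which is naturality along $s_B$; condition \textbf{(c)} becomes $F(\mathrm{id}\otimes c_B)\circ\mathbf{str}_{X,\,B\otimes B}=\mathbf{str}_{X,B}\circ(\mathrm{id}\otimes c_B)$, which is naturality along $c_B$; and condition \textbf{(b)} becomes $\mathbf{str}_{X,B}\circ(\mathrm{id}\otimes w_B)=F(\mathrm{id}\otimes w_B)$, which is naturality along $w_B:I\to B$ combined with the unit law $\mathbf{str}_{X,I}=\mathrm{id}$.

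It then remains to assemble the four cases, each of which requires precisely the subset of \textbf{(a)}--\textbf{(c)} just verified: the symmetric (linear) case needs \textbf{(a)}; the symmetric pointed (affine) case needs \textbf{(a)} and \textbf{(b)}; the symmetric multiplicative (relevant) case needs \textbf{(a)} and \textbf{(c)}; and the cartesian case (symmetric distributive law) needs all three. I expect the only real obstacle to be the coherence bookkeeping: matching the whiskering conventions of Fig.~\ref{fig_distLawProp}, where the two copies of $T$ appear separately, against the naturality squares, which are phrased at the combined object $B\otimes B$. Working in a strict monoidal category removes every associator and unitor and makes the three displayed identities literally the diagrams \textbf{(a)}--\textbf{(c)}, so that this obstacle dissolves; the general case follows by transporting along the strictification equivalence.
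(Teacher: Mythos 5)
Your proof is correct and is precisely the argument the statement calls for: the paper states this proposition without proof, and your reduction---collapsing the composite $(\tau T)(T\tau)$ to the single strength $\mathbf{str}_{X,\,B\otimes B}$ via the associativity axiom of the strength, and then reading each of \textbf{(a)}--\textbf{(c)} in Fig.~\ref{fig_distLawProp} as naturality of $\mathbf{str}_{X,-}$ along $s_B$, $c_B$, and $w_B$ (the last combined with the unit law $\mathbf{str}_{X,I}=\mathrm{id}$)---is evidently the intended one. Your reading of the right-hand vertical arrow in diagram \textbf{(a)} as $S\varsigma$ rather than the figure's $S\tau$ is the only type-correct interpretation, so that is a typo in the paper and not a gap in your argument.
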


In Section~\ref{BindingSignaturesAbstractSyntax}, we associate an endofunctor
to each binding signature, which is constructed using the Day tensor,
coproduct, and context extension.  In particular, it will be useful to lift
symmetric distributive laws and transformations from these structures to
signature endofunctors, and we introduce the following lemma to provide for
this.

\begin{lemma}\label{lem_lift}
Let $T$ be a symmetric endofunctor (resp.~symmetric pointed
endofunctor/symmetric multiplicative endofunctor/symmetric monad) which is
oplax monoidal on a monoidal category $(\otimes, I)$, and let $G_1$ and $G_2$
be two endofunctors on $\mathbb{C}$. 
If $\psi_1: TG_1 \to G_1T$ and $\psi : TG_2 \to G_2T$ are symmetric
transformations (resp.~symmetric pointed transformations/symmetric
multiplicative transformations/symmetric distributive laws), then the
composite 
$\Tilde{\psi} 
 : T\!\circ\!(G_1\!\otimes\!G_2) \to (G_1\!\otimes\!G_2)\!\circ\!T$ 
defined as
\[\begin{tikzcd}[ampersand replacement=\&,cramped,column sep=scriptsize,row sep=tiny]
	{T\!\circ\! (G_1\! \otimes\! G_2)\!} \& {\!(T\!\circ\! G_1) \!\otimes\! (T\!\circ\! G_2)\!} \& {\!(G_1\!\circ\! T) \!\otimes\! (G_2\!\circ\! T)}
	\arrow["{l_{G_1, G_2}}", shift left, draw=none, from=1-1, to=1-2]
	\arrow[from=1-1, to=1-2]
	\arrow["{\psi_1 \otimes \psi_2}", shift left, draw=none, from=1-2, to=1-3]
	\arrow[from=1-2, to=1-3]
\end{tikzcd}\]
is a symmetric transformation (resp.~symmetric pointed
transformation/symmetric multiplicative transformation/symmetric distributive
law).  
\end{lemma}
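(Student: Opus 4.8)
The plan is to verify, for $\tilde\psi$, exactly those diagrams of Fig.~\ref{fig_distLawProp} required in each case: \textbf{(a)} for a symmetric transformation, \textbf{(a)}--\textbf{(b)} for a symmetric pointed transformation, \textbf{(a)} and \textbf{(c)} for a symmetric multiplicative transformation, and \textbf{(a)}--\textbf{(c)} for a symmetric distributive law. In every case I would expand the definition $\tilde\psi = (\psi_1\otimes\psi_2)\circ l_{G_1,G_2}$ and reduce the required identity to a pasting of four kinds of facts: naturality of the oplax structure map $l$; functoriality of the pointwise tensor $\otimes$ of endofunctors, which splits and recombines tensors of composites; the corresponding diagram for $\psi_1$ and for $\psi_2$ \emph{separately}; and the compatibility of the oplax monoidal structure of $T$ with its symmetric-(pointed/multiplicative/monad) structure, which is what the hypothesis that $T$ is a symmetric endofunctor \emph{which is oplax monoidal} supplies. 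I would isolate this last compatibility first, as it is the only ingredient beyond routine whiskering bookkeeping.

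Writing $S = G_1\otimes G_2$, $l = l_{G_1,G_2}$, and $l^{(2)} = l_{TG_1,TG_2}\circ T(l)$ for the induced oplax structure map of $T^2$, the three compatibilities I would record are that the oplax structure map commutes with the unit, the multiplication, and the symmetry of $T$, respectively
\[
l\circ(\eta S) = (\eta G_1)\otimes(\eta G_2),\quad l\circ(\mu S) = \big((\mu G_1)\otimes(\mu G_2)\big)\circ l^{(2)},
\]
and $l^{(2)}\circ(\varsigma S) = \big((\varsigma G_1)\otimes(\varsigma G_2)\big)\circ l^{(2)}$. These say precisely that $\eta$, $\mu$, and $\varsigma$ are oplax monoidal natural transformations for $T$ and $T^2$, and hold by the standing hypothesis.

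Given these, the unit diagram \textbf{(b)} is immediate: $\tilde\psi\circ(\eta S) = (\psi_1\otimes\psi_2)\circ l\circ(\eta S) = \big(\psi_1\circ(\eta G_1)\big)\otimes\big(\psi_2\circ(\eta G_2)\big) = (G_1\eta)\otimes(G_2\eta) = S\eta$, using diagram \textbf{(b)} for each $\psi_i$ and functoriality of $\otimes$. The symmetry diagram \textbf{(a)} is the main computation: naturality of $l$ (instantiated at $\psi_1$ and $\psi_2$) together with functoriality of $\otimes$ first yields the factorisation
\[
(\tilde\psi T)\circ(T\tilde\psi) = \big((\psi_1 T)(T\psi_1)\otimes(\psi_2 T)(T\psi_2)\big)\circ l^{(2)}.
\]
Precomposing with $\varsigma S$ and invoking the symmetry compatibility moves $\varsigma$ across $l^{(2)}$ into a factor $(\varsigma G_1)\otimes(\varsigma G_2)$; diagram \textbf{(a)} for $\psi_1$ and $\psi_2$ then rewrites each $(\psi_i T)(T\psi_i)(\varsigma G_i)$ as $(G_i\varsigma)(\psi_i T)(T\psi_i)$, and functoriality of $\otimes$ collects the leading factor into $S\varsigma = (G_1\varsigma)\otimes(G_2\varsigma)$, giving exactly $(S\varsigma)\circ(\tilde\psi T)\circ(T\tilde\psi)$. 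The multiplication diagram \textbf{(c)} follows by the identical pattern with the $\mu$-compatibility in place of the symmetry one and diagram \textbf{(c)} for the $\psi_i$.

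The step I expect to be the main obstacle is pinning down and correctly orienting the symmetry compatibility $l^{(2)}\circ(\varsigma S) = \big((\varsigma G_1)\otimes(\varsigma G_2)\big)\circ l^{(2)}$: here the symmetry $\varsigma\colon T^2\to T^2$ must be threaded simultaneously through both copies of the oplax structure map comprising $l^{(2)}$ and through the tensor, whereas the $\eta$- and $\mu$-compatibilities are comparatively direct. Once all three compatibilities are in hand, the remaining verifications are pure bookkeeping with whiskering and the functoriality of $\otimes$.
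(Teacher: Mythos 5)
The paper states Lemma~\ref{lem_lift} without proof, so there is nothing to compare against line by line; your argument is the intended routine verification and it is correct. The factorisation $(\tilde\psi T)\circ(T\tilde\psi) = \big((\psi_1 T)(T\psi_1)\otimes(\psi_2 T)(T\psi_2)\big)\circ l^{(2)}$ via naturality of $l$, followed by the separate diagrams for $\psi_1$ and $\psi_2$ and functoriality of $\otimes$, is exactly the right reduction, and your unit and multiplication cases follow the same pattern.

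Your one substantive observation deserves emphasis: the lemma is only true if the oplax structure map $l$ is compatible with $\eta$, $\mu$, and $\varsigma$ in the sense you record (i.e.\ these are oplax monoidal transformations relative to the composite structure $l^{(2)}$ on $T^2$); a bare oplax monoidal structure on the underlying endofunctor, unrelated to the symmetric/pointed/multiplicative data, would not suffice. You are right to read the hypothesis ``symmetric monad which is oplax monoidal'' as asserting this compatibility, and in all of the paper's applications ($T=\delta$ with $\mathsf{swap}$, $\mathsf{up}$, $\mathsf{cont}$, and $l$ the canonical (co)cartesian monoidality isomorphisms, all induced by restriction along structural maps in $\mathbb{C}$) the compatibilities hold by inspection. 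So there is no gap, only a point of interpretation that your write-up correctly isolates as the one non-bookkeeping ingredient.
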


Note that in the lemma above, if $l$, $\psi_1$, and $\psi_2$ are isomorphisms,
then so is $\Tilde{\psi}$.

\section{Categorical Abstract Syntax}

The modelling of abstract syntax and its semantics by means of endofunctor
algebras is by now well established; see, for instance,~\cite{Aczel}.  In this
context, endofunctors model syntactic constructors while algebras equip
objects with semantic interpretation.  The abstract syntax is then an initial
(or free) algebra.  It is abstract in that it is characterised up to
isomorphism and thus representation independent.  Furthermore, by definition,
it comes equipped with iterators (referred to as catamorphisms in functional
programming~\cite{MeijerFokkingaPatersonHughesFunctionalProgrammingWithBananas}
and eliminators in type theory~\cite{ProgrammingWithMartinLofTypeTheory}) that
provide definitions by structural recursion equipped with induction proof
principles.
This section briefly reviews this theory; first in the classical sense
(Section~\ref{InitialAlgebraApproach}), followed by an important
generalisation due to Bird and Paterson \cite{BirdPaterson}
(Section~\ref{GeneralisedRecursion}), and then considering the particular case
of abstract syntax with variable binding
(Section~\ref{BindingSignaturesAbstractSyntax}). 

\subsection{Initial-algebra approach}\label{InitialAlgebraApproach}

A $\Sigma$-algebra, for an endofunctor $\Sigma$ on a category $\mathcal C$, is
a pair $(A, \alpha)$ where $A$ is an object in $\mathcal C$ and 
$\alpha: \Sigma(A) \to A$ is a morphism in $\mathcal{C}$.  
A $\Sigma$-homomorphism $h: (A, \alpha) \to (A', \alpha')$ is a morphism 
$h : A \to A'$ in $\mathcal{C}$ such that 
$h \, \alpha = \alpha' \, \Sigma(h)$.  $\Sigma$-algebras and
$\Sigma$-homomorphisms organise themselves into a category
$\mathbf{\Sigma}\textbf{-Alg}$ equipped with a forgetful functor 
$U : \mathbf{\Sigma}\textbf{-Alg} \to \mathcal{C}$, defined by 
$U(A, \alpha) = A$. 
If for every object $X$ in $\mathcal{C}$, 
$[\eta_X, \varphi_X] : X + \Sigma(TX) \to TX$ is an initial
$(X+\Sigma)$-algebra, then the forgetful functor $U$ has a left adjoint 
$F : \mathcal{C} \to \mathbf{\Sigma}\textbf{-Alg}$ defined by 
$F(X) = (TX, \varphi_X)$. 

The universal property of $F(X)$ provides a categorical model for structural
recursion.  We give it explicitly here to draw a parallel with the
generalisation of this notion presented in the forthcoming section.  For every
$\Sigma$-algebra $\alpha: {\Sigma(A) \to A}$ and morphism $\beta: X \to A$,
there exists a unique morphism $\mathsf{it}(\beta,\alpha): TX \to A$ such that
$\mathsf{it}(\beta,\alpha) \ \eta_X = \beta$ (the \emph{base case}) and
$\mathsf{it}(\beta,\alpha) \ \varphi_X 
 = 
 \alpha \ \Sigma(\mathsf{it}(\beta,\alpha))$ 
(the \emph{recursion}).
In this model, since by Lambek's Lemma~\cite{L68} initial endofunctor algebras
are isomorphisms, the base case and the recursion determine the
\textit{iterator} $\mathsf{it}(\beta,\alpha)$.

\subsection{Generalised recursion}\label{GeneralisedRecursion}

We recall a result of Bird and Paterson \cite{BirdPaterson} that generalises
the above model of structural recursion.  Its importance is that it provides
generalised iterators for free constructions over initial algebras

\begin{lemma}[Bird and Paterson~\cite{BirdPaterson}]\label{lem_BirdPaterson}
Consider an adjunction $F \dashv G : \mathcal{C'} \to \mathcal{C}$ and an
initial $S$-algebra $\alpha: {S(A) \xrightarrow\cong A}$ of an endofunctor $S$
on $\mathcal C$. 
Then, for all $S$-algebras $\gamma : SG(B) \to G(B)$ there exists a unique
\textit{generalised iterator} $\mathsf{git}(\gamma) : F(A) \to B$ in
$\mathcal{C'}$ such that the diagram on the right (and, equivalently, the
diagram on the left) below commutes
\[\begin{tikzcd}[ampersand replacement=\&,cramped]
	{S(A)} \& {SG(B)} \& {FS(A)} \& {FSG(B)} \\
	A \& {G(B)} \& {F(A)} \& B
	\arrow["{S(\overline{\mathsf{git}(\gamma)})}", from=1-1, to=1-2]
	\arrow["\alpha"', from=1-1, to=2-1]
	\arrow["\cong", draw=none, from=1-1, to=2-1]
	\arrow["\gamma", from=1-2, to=2-2]
	\arrow["{FS(\overline{\mathsf{git}(\gamma)})}", from=1-3, to=1-4]
	\arrow["{F(\alpha)}"', from=1-3, to=2-3]
	\arrow["\cong", draw=none, from=1-3, to=2-3]
	\arrow["{\overline\gamma}", from=1-4, to=2-4]
	\arrow["{\overline{\mathsf{git}(\gamma)}}", from=2-1, to=2-2]
	\arrow["\mathsf{git}(\gamma)", from=2-3, to=2-4]
\end{tikzcd}\]
where $\overline{\mathsf{git}(\gamma)}$ and $\overline{\gamma}$ are the
respective transposes of $\mathsf{git}(\gamma)$ and $\gamma$ over 
$F \dashv G$.  
\end{lemma}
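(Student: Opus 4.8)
The plan is to read the left-hand square as the assertion that $\overline{\mathsf{git}(\gamma)}\colon A \to G(B)$ is an $S$-algebra homomorphism from the initial algebra $(A,\alpha)$ to the algebra $(G(B),\gamma)$, to obtain such a homomorphism from initiality, and then to transport the square across the adjunction $F \dashv G$. Concretely, since $\gamma\colon SG(B) \to G(B)$ exhibits $G(B)$ as an $S$-algebra, initiality of $(A,\alpha)$ furnishes a \emph{unique} morphism $h\colon A \to G(B)$ in $\mathcal{C}$ satisfying $h\,\alpha = \gamma\,S(h)$; this equation is exactly the commutativity of the left diagram with $h$ playing the role of $\overline{\mathsf{git}(\gamma)}$. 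I would then \emph{define} $\mathsf{git}(\gamma)\colon F(A) \to B$ to be the adjoint transpose $\overline{h}$ of $h$, so that under the bijection $\mathcal{C}\big(A,G(B)\big)\cong\mathcal{C'}\big(F(A),B\big)$ we indeed recover $\overline{\mathsf{git}(\gamma)} = h$.

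The technical heart of the argument is to verify that the left square commutes if and only if the right one does; this is a direct consequence of the naturality of the adjoint transpose (equivalently, of the triangle identities for $F \dashv G$). Both legs of the left square are morphisms $S(A) \to G(B)$, so I apply the bijection $\mathcal{C}\big(S(A),G(B)\big)\cong\mathcal{C'}\big(FS(A),B\big)$ to the identity $h\,\alpha = \gamma\,S(h)$. Naturality in the domain gives $\overline{h\,\alpha} = \overline{h}\,F(\alpha) = \mathsf{git}(\gamma)\,F(\alpha)$ on the left, and $\overline{\gamma\,S(h)} = \overline{\gamma}\,FS(h) = \overline{\gamma}\,FS\big(\overline{\mathsf{git}(\gamma)}\big)$ on the right, where $\overline{\gamma}\colon FSG(B) \to B$ is the transpose of $\gamma$. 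These two composites are precisely the legs of the right square, so since the transpose is a bijection the two identities are equivalent.

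Existence then holds by construction, and uniqueness follows by the same correspondence: if $g\colon F(A) \to B$ makes the right diagram commute, its transpose $\overline{g}$ makes the left diagram commute, whence $\overline{g}$ is an $S$-algebra homomorphism $(A,\alpha)\to(G(B),\gamma)$ and therefore equals $h$ by initiality; injectivity of the transpose then forces $g = \overline{h} = \mathsf{git}(\gamma)$. The only delicate point is bookkeeping: one must invoke the naturality of the transpose in the correct variable for each leg --- precomposition by $F(\alpha)$ on the left and by $FS(h)$ on the right --- and keep the identification of $\overline{\gamma}$ with the transpose of $\gamma$ consistent throughout. Given the explicit unit/counit formulas for the transpose, this reduces to a routine diagram chase.
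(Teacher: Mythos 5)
Your proof is correct. The paper itself gives no proof of this lemma, deferring entirely to the citation of Bird and Paterson, and your argument is the standard one: read the left square as saying that $\overline{\mathsf{git}(\gamma)}$ is an $S$-algebra homomorphism $(A,\alpha)\to(G(B),\gamma)$, obtain it uniquely from initiality, and transfer existence and uniqueness across the adjunction by observing that the two legs of the left square transpose, by naturality of the bijection $\mathcal{C}(S(A),G(B))\cong\mathcal{C'}(FS(A),B)$ in the domain variable, precisely to the two legs of the right square. Your bookkeeping of the transposes ($\overline{h\,\alpha}=\overline{h}\,F(\alpha)$ and $\overline{\gamma\,S(h)}=\overline{\gamma}\,FS(h)$) is accurate, so nothing is missing.
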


We will make extensive use of the following instance of the above lemma where
the $S$-algebra $\gamma$ factors through an algebra of an endofunctor on
$\mathcal{C'}$; see, for instance,~\cite{MatthesUustalu}.

\begin{corollary}\label{Corollary:GPSR}
Under the hypothesis of Lemma~\ref{lem_BirdPaterson}, for an endofunctor $S'$
on $\mathcal C'$, a natural transformation $\psi : {FS \to S'F}$, and an 
$S'$-algebra $\beta: S'(B) \to B$, there exists a unique 
\emph{generalised iterator} $\mathsf{git}(\beta): F(A) \to B$ such that the
following commutes 
\[\begin{tikzcd}[ampersand replacement=\&,cramped, row
    sep=scriptsize]
	{S'F(A)} \&\& {S'(B )} \\
	{FS(A)} \\
	{F(A)} \&\& B
	\arrow["{S'(\mathsf{git}(\beta))}", from=1-1, to=1-3]
	\arrow["\beta", from=1-3, to=3-3]
	\arrow["{\psi_A}", from=2-1, to=1-1]
	\arrow["{F(\alpha)}"', from=2-1, to=3-1]
	\arrow["\cong", draw=none, from=2-1, to=3-1]
	\arrow["\mathsf{git}(\beta)", dashed, from=3-1, to=3-3]
\end{tikzcd}\]
Moreover, if $\psi$ is invertible, then 
$F(\alpha)\,\psi_A^{-1}: {S'F(A) \to F(A)}$ is an initial $S'$-algebra.
\end{corollary}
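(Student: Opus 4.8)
The plan is to reduce the Corollary to Lemma~\ref{lem_BirdPaterson} by packaging the data $\psi$ and $\beta$ into a single $S$-algebra on $G(B)$. Writing $\epsilon$ for the counit of $F \dashv G$, I would define $\gamma : SG(B) \to G(B)$ to be the adjoint transpose of the composite
\[
FSG(B) \xrightarrow{\psi_{G(B)}} S'FG(B) \xrightarrow{S'(\epsilon_B)} S'(B) \xrightarrow{\beta} B,
\]
so that its transpose is $\overline\gamma = \beta \, S'(\epsilon_B) \, \psi_{G(B)}$. Applying Lemma~\ref{lem_BirdPaterson} to this $\gamma$ produces a unique $\mathsf{git}(\gamma) : F(A) \to B$, and I would simply set $\mathsf{git}(\beta) := \mathsf{git}(\gamma)$. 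It then remains to show that the defining property of $\mathsf{git}(\gamma)$ is equivalent to the triangle required in the Corollary.

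For existence I would start from the right-hand square of Lemma~\ref{lem_BirdPaterson}, namely $\mathsf{git}(\gamma)\,F(\alpha) = \overline\gamma\,FS(\overline{\mathsf{git}(\gamma)})$, and substitute $\overline\gamma = \beta\,S'(\epsilon_B)\,\psi_{G(B)}$. Naturality of $\psi : FS \to S'F$ at the morphism $\overline{\mathsf{git}(\gamma)} : A \to G(B)$ rewrites $\psi_{G(B)}\,FS(\overline{\mathsf{git}(\gamma)})$ as $S'F(\overline{\mathsf{git}(\gamma)})\,\psi_A$, and the remaining factor $S'(\epsilon_B)\,S'F(\overline{\mathsf{git}(\gamma)}) = S'\big(\epsilon_B\,F(\overline{\mathsf{git}(\gamma)})\big)$ collapses via the transpose identity $\epsilon_B\,F(\overline{\mathsf{git}(\gamma)}) = \mathsf{git}(\gamma)$. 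The result is exactly $\mathsf{git}(\gamma)\,F(\alpha) = \beta\,S'(\mathsf{git}(\gamma))\,\psi_A$, which is the Corollary's triangle. For uniqueness I would run the same chain of identities in reverse: any $h : F(A) \to B$ with $h\,F(\alpha) = \beta\,S'(h)\,\psi_A$ satisfies, by the same naturality and transpose manipulations, the right-hand square of Lemma~\ref{lem_BirdPaterson} for the algebra $\gamma$, and so $h = \mathsf{git}(\gamma)$ by that lemma's uniqueness clause.

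Finally, for the ``moreover'' claim, I would observe that when $\psi$ is invertible the statement that $h : F(A) \to B$ is an $S'$-algebra homomorphism from $(F(A),\, F(\alpha)\,\psi_A^{-1})$ to $(B,\beta)$ — that is, $h\,F(\alpha)\,\psi_A^{-1} = \beta\,S'(h)$ — becomes, after post-composing both sides with $\psi_A$, precisely the Corollary's triangle; hence initiality of $(F(A),\, F(\alpha)\,\psi_A^{-1})$ is immediate from the unique existence already established. The one point requiring care — and the step I expect to be the main obstacle — is the bookkeeping of the two transpositions together with the naturality of $\psi$, in particular choosing the composite defining $\gamma$ so that the square of Lemma~\ref{lem_BirdPaterson} and the triangle of the Corollary genuinely correspond; beyond this there is no conceptual difficulty.
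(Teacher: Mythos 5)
Your proposal is correct and is exactly the intended derivation: the paper leaves this corollary's proof implicit (it is the standard reduction, cf.\ the remark that $\gamma$ ``factors through an algebra of an endofunctor on $\mathcal{C}'$''), and your choice $\overline\gamma = \beta\,S'(\epsilon_B)\,\psi_{G(B)}$ with the naturality-of-$\psi$ and transpose bookkeeping is precisely how the triangle of the corollary is matched to the right-hand square of Lemma~\ref{lem_BirdPaterson}. The ``moreover'' clause is likewise handled as you describe, since post-composition with the invertible $\psi_A$ identifies $S'$-algebra homomorphisms out of $(F(A),F(\alpha)\,\psi_A^{-1})$ with solutions of the corollary's triangle.
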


We remark that in the applications of this corollary below, we will consider
$S = X + \Sigma$, for an object $X$ and an endofunctor $\Sigma$, and use the
universal property of the free \mbox{$\Sigma$-algebra} on $X$.

\subsection{Abstract syntax with variable binding}
\label{BindingSignaturesAbstractSyntax}

We recall the notion of \textit{binding} (or \textit{second-order})
\textit{signature} in~\cite{FPT}.  Such signatures generalise algebraic
signatures to account for variable-binding operators.  Details on second-order
algebraic theories may be found
in~\cite{FioreSecondOrder,FioreHur,FioreMahmoud}.

\begin{definition}
A \emph{binding signature} is a pair $(\Omega, a)$ where $\Omega$ is a
\emph{set of operators} and $a : \Omega \to \mathbb{N}^*$ is an
\emph{arity function}, where $\mathbb{N}^*$ is the set of finite tuples of
natural numbers.  
\end{definition}

For an operator $\omega \in \Omega$, with arity 
$a(\omega) = (n_1, \ldots, n_k)$, $k$ is the usual arity in the algebraic
sense, specifying the number of arguments for $\omega$.  Each $n_i$
corresponds to the $i$th argument, specifying the number of variables bound by
$\omega$ in that argument.  For a category of contexts $\mathbb C$, we
associate an endofunctor on $\mathcal{C} = \mathbf{Set}^{\mathbb C}$ to each
binding signature.  This is done by first taking the coproduct over the
operators of the signature, then the Day convolution over the arguments for
each operator, and finally applying context extension $n_i$ times as
specified by the arity of the operator.  The formal definition follows.

\begin{definition}
The \emph{binding-signature endofunctor} on $\mathcal{C}$ associated to a
binding signature $(\Omega, a)$ is defined, with respect to a fixed object $A$
in $\mathbb{C}$, as
\begin{align*}\textstyle
\Sigma_{(\Omega,a)}(X) 
= \coprod_{\omega \in \Omega} \Sigma_\omega(X) 
\enspace, \quad
\Sigma_\omega(X) 
= \widehat{\bigotimes}_{i \in \mathbf{k}} \delta^{n_i}_A(X)
\end{align*}
\end{definition}

\begin{figure*}
\begin{align*}
\mathbf{swap}_X 
= \delta\left(\coprod_{\omega \in \Omega}\prod_{i \in \mathbf{k}} \delta^{n_i}(X) \right) \xrightarrow{\cong} \coprod_{\omega \in \Omega}\prod_{i \in \mathbf{k}} \delta \delta^{n_i}(X) \xrightarrow{\coprod\prod \mathsf{swap}^{n_i}} \coprod_{\omega \in \Omega}\prod_{i \in \mathbf{k}} \delta^{n_i}\delta(X) 
\\
\mathbf{str}_{X, Y} 
= \left( \coprod_{\omega \in \Omega}\prod_{i \in \mathbf{k}} \delta^{n_i}(X) \right) \times Y \to \coprod_{\omega \in \Omega}\prod_{i \in \mathbf{k}} \delta^{n_i}(X) \times Y \xrightarrow{\mathbf{str}^{n_i}} \coprod_{\omega \in \Omega}\prod_{i \in \mathbf{k}} \delta^{n_i}(X \times Y) 
\end{align*}
\caption{Swap and strength natural transformations for 
  $\Sigma: \mathcal{F} \to \mathcal{F}$} 
\label{fig_swapStrSigF}
\end{figure*}
\begin{figure*}
    \[\begin{tikzcd}[ampersand replacement=\&,cramped,column sep=small, row sep=scriptsize ]
	{J \hat\otimes X} \& X \\
	{\Sigma_\mathsf{sub}(X)}
        \arrow["{\mathbf{(a)}}"{description, pos=0.25}, shift left=3, draw=none, from=2-1, to=1-2]
	\arrow["\cong", from=1-1, to=1-2]
	\arrow["{\nu\hat\otimes \mathrm{id}}"', from=1-1, to=2-1]
	\arrow["\sigma"', from=2-1, to=1-2]
\end{tikzcd}
\quad
\begin{tikzcd}[ampersand replacement=\&,cramped,column sep=small, row sep=scriptsize]
	{\delta(X) \hat\otimes J} \&\& {\delta(X)} \\
	{\delta(X) \hat\otimes \delta(X)} \& {\delta\Sigma_\mathsf{sub}(X)}
	\arrow["\cong", from=1-1, to=1-3]
	\arrow[""{name=0, anchor=center, inner sep=0}, "{{\mathrm{id} \hat\otimes\nu}}"', from=1-1, to=2-1]
	\arrow["{{\mathbf{str'}}}", from=2-1, to=2-2]
	\arrow[""{name=1, anchor=center, inner sep=0}, "{{\delta(\sigma)}}"', from=2-2, to=1-3]
	\arrow["{{\mathbf{(b)}}}"{description}, shift left, draw=none, from=0, to=1]
\end{tikzcd}
\quad
\begin{tikzcd}[ampersand replacement=\&,cramped,column sep=small, row sep=scriptsize]
{\delta(X)\hat\otimes\delta(X)\hat\otimes X} \&\& {\delta\Sigma_\mathsf{sub}(X) \hat\otimes X} \\
	{\Sigma_\mathsf{sub}(X)} \& X \& {\Sigma_\mathsf{sub}(X)}
	\arrow["{{\mathbf{str'}\hat\otimes \mathrm{id}}}", from=1-1, to=1-3]
	\arrow[""{name=0, anchor=center, inner sep=0}, "{{\mathrm{id}\hat\otimes \sigma}}"', from=1-1, to=2-1]
	\arrow[""{name=1, anchor=center, inner sep=0}, "{{\delta(\sigma)\hat\otimes \mathrm{id}}}", from=1-3, to=2-3]
	\arrow["\sigma", from=2-1, to=2-2]
	\arrow["\sigma"', from=2-3, to=2-2]
	\arrow["{\mathbf{(c)}}"{description}, shift left, draw=none, from=0, to=1]
\end{tikzcd}\]
\[\begin{tikzcd}[ampersand replacement=\&,cramped, row sep=scriptsize]
	{\delta^2(X) \hat\otimes X \hat\otimes X} \&\& {\delta^2(X) \hat\otimes X \hat\otimes X} \&\& {\delta^2(X) \hat\otimes X \hat\otimes X} \\
	{\delta\Sigma_\mathsf{sub}(X)\hat\otimes X} \& {\Sigma_\mathsf{sub}(X)} \& X \& {\Sigma_\mathsf{sub}(X)} \& {\delta\Sigma_\mathsf{sub}(X)\hat\otimes X}
	\arrow["{\mathbf{(d)}}"{description}, draw=none, from=1-3, to=2-3]
	\arrow["{\mathrm{swap}\hat\otimes\mathrm{id}}", from=1-1, to=1-3]
	\arrow["{\mathbf{str}\hat\otimes \mathrm{id}}"', from=1-1, to=2-1]
	\arrow["{\mathrm{id}\hat\otimes \cong}", from=1-3, to=1-5]
	\arrow["{\mathbf{str}\hat\otimes \mathrm{id}}", from=1-5, to=2-5]
	\arrow["{\delta(\sigma) \hat\otimes \mathrm{id}}", from=2-1, to=2-2]
	\arrow["\sigma", from=2-2, to=2-3]
	\arrow["\sigma"', from=2-4, to=2-3]
	\arrow["{\delta(\sigma) \hat\otimes \mathrm{id}}"', from=2-5, to=2-4]
\end{tikzcd}
\quad
\begin{tikzcd}[ampersand replacement=\&,cramped, row sep=scriptsize]
	{X \hat\otimes X} \& X \\
	{\Sigma_\mathsf{sub}(X)}
        \arrow["{\mathbf{(e)}}"{description, pos=0.25}, shift left=3, draw=none, from=2-1, to=1-2]
	\arrow["{\pi_1}", from=1-1, to=1-2]
	\arrow["{\mathsf{up}_X\hat\otimes\mathrm{id}}"', from=1-1, to=2-1]
	\arrow["\sigma"', from=2-1, to=1-2]
\end{tikzcd}
\]
\[\begin{tikzcd}[ampersand replacement=\&,cramped,column sep=large, row sep=scriptsize]
	{\delta\delta(X) \hat\otimes \delta(X) \hat\otimes X} \& {\Sigma_\mathsf{sub}\delta(X)\hat\otimes X} \& {\Sigma_\mathsf{sub}\Sigma_\mathsf{sub}(X)} \& {\Sigma_\mathsf{sub}(X)} \\
	{\delta\Sigma_\mathsf{sub}(X) \hat\otimes X} \& {\Sigma_\mathsf{sub}(X)} \&\& X
	\arrow["{{\mathsf{swap}\hat\otimes\mathrm{id}\hat\otimes \mathrm{id}}}", from=1-1, to=1-2]
	\arrow[""{name=0, anchor=center, inner sep=0}, "{{\rho \hat\otimes \mathrm{id}}}"', from=1-1, to=2-1]
	\arrow["{{\mathbf{str}^\mathsf{sub}}}", from=1-2, to=1-3]
	\arrow["{{\Sigma_\mathsf{sub}(\sigma)}}", from=1-3, to=1-4]
	\arrow[""{name=1, anchor=center, inner sep=0}, "\sigma", from=1-4, to=2-4]
	\arrow["{{\delta(\sigma)\hat\otimes \mathrm{id}}}", from=2-1, to=2-2]
	\arrow["\sigma", from=2-2, to=2-4]
	\arrow["{\mathbf{(f)}}"{description}, draw=none, from=0, to=1]
\end{tikzcd}\]
    \caption{Substitution algebra axioms}
    \label{fig_SubstAlgAx}
\end{figure*}

While one may define a binding signature endofunctor for any $A$ in 
$\mathbb C$, in the cases of interest ($\mathcal{F}$, $\mathcal{B}$,
$\mathcal{I}$, $\mathcal S$) the category of contexts 
($\mathbb F$, $\mathbb B$, $\mathbb I$, $\mathbb S$) are freely generated by
the object $\mathbf{1}$ and we naturally restrict attention to it.  
We will therefore only consider the operation of context extension
$\delta_{\mathbf 1}$, writing it simply as $\delta$, and the signature
endofunctor induced by it.
The \emph{abstract syntax of a binding signature} $(\Omega,a)$ arises then as
the free $\Sigma_{(\Omega,a)}$-algebra on the presheaf of variables $V$
(equivalently, the initial 
$(V+\Sigma_{(\Omega,a)})$-algebra).

\section{Cartesian Theory}
\label{Section:CartesianTheory}

We revisit the theory of single-variable (capture-avoiding) substitution for
abstract syntax with variable binding of Fiore, Plotkin and Turi~\cite{FPT}.
We fully exploit the categorical theory thus far developed to provide new
constructions
and streamlined direct proofs. 

\subsection{Cartesian substitution algebras}
\label{CartesianSubstitutionAlgebras}

We begin by further studying signature endofunctors defined on $\mathcal{F}$.
By Proposition~\ref{prop_daySymMon}, the Day convolution coincides with the
cartesian product, so $J = 1$, the terminal presheaf.  In particular, signature
endofunctors are defined using the cartesian monoidal structure.

We iteratively apply Lemma~\ref{lem_lift} to the symmetric distributive law
$\mathsf{swap}: \delta^2 \to \delta^2$ to obtain a symmetric distributive law
$\mathbf{swap}: \delta\Sigma \xrightarrow{\cong} \Sigma\delta$.  This uses the
fact that $\delta$ is monoidal with respect to both the cartesian and
cocartesian tensors and is explicitly given in Fig.~\ref{fig_swapStrSigF}.

Recalling that for each object $Y$ in $\mathcal{F}$, the diagonal morphism
$\Delta_Y: Y \to Y \times Y$ makes the symmetric comonad $- \times Y$ on
$\mathcal{F}$ oplax monoidal with respect to the cartesian tensor, we
similarly apply Lemma~\ref{lem_lift} to the symmetric codistributive law
$\mathbf{str}_{-, Y} : (- \times Y)\circ\delta \to \delta\circ(- \times Y)$ of
Proposition~\ref{prop_strDistLaw} to obtain, for a binding-signature
endofunctor $\Sigma$, a symmetric codistrubutive law
$\mathbf{str}_{-, Y} : (- \times Y)\circ\Sigma \to \Sigma \circ (- \times Y)$
explicitly given in Fig.~\ref{fig_swapStrSigF}. Thus, binding-signature
endofunctors are strong.

We now describe an algebraic structure in $\mathcal{F}$ that axiomatises
single-variable substitution for cartesian theories.  Such a definition first
appeared in~\cite{FPT} and the equivalent variation below featured
in~\cite{FioreStaton}.  However, aiming at a unified theory for substructural
syntax, the definition below provides a presentation making use of the
categorical structures thus far developed.  Specifically, we consider the
\emph{substitution signature} $\Sigma_\mathsf{sub} = \delta(-) \times (-)$,
and note that it is equipped with a strength $\mathbf{str}^\mathsf{sub}$ and
swapping isomorphism $\mathbf{swap}^\mathsf{sub}$.

\begin{definition}\label{defn_cartSubstAlg}
A \emph{cartesian substitution algebra} is a triple $(X, \sigma, \nu)$ where
$X$ in an object in $\mathcal{F}$, and $\sigma: \Sigma_\mathsf{sub}(X) \to X$
and $\nu : 1 \to \delta (X)$ are morphisms in $\mathcal{F}$ such that
\textbf{(a)}, \textbf{(b)}, \textbf{(e)} and \textbf{(f)} in 
Fig.~\ref{fig_SubstAlgAx} commute, where $\hat\otimes = \times$, $J = 1$, and
$\rho$ is the isomorphism exhibiting $\delta$ as monoidal.  
\end{definition}

In this definition, $\sigma$ is the operation of substitution for $X$, while
$\nu$ specifies the generic variables for $X$.  Each of the axioms are
understood as follows: 
\textbf{(a)} is a left-unit law and says that substituting a term into a
variable returns the term;
\textbf{(b)} is a right-unit law and says that substituting in a variable
amounts to performing a contraction;
\textbf{(e)} says that substituting into a weakened term does nothing; 
while \textbf{(f)} is the \textit{syntactic substitution lemma}, which
specifies an associativity law for substitutions.

To clarify the understanding of \textbf{(b)}, we remark that in the context of
this definition, it may be equivalently replaced by the following one
\[\begin{tikzcd}[ampersand replacement=\&,cramped, row sep=scriptsize]
	{\delta^2(X) \times 1} \&\& {\delta^2(X)} \\
	{\delta^2(X)  \times \delta(X)} \& {\delta\Sigma_\mathsf{sub}(X)} \& {\delta(X)}
	\arrow["{\pi_1}", "\cong"', from=1-1, to=1-3]
	\arrow["{\mathrm{id} \times \nu}"', from=1-1, to=2-1]
	\arrow["{\mathsf{cont}_X}", from=1-3, to=2-3]
	\arrow["\cong", from=2-1, to=2-2]
	\arrow["{\delta(\sigma)}", from=2-2, to=2-3]
\end{tikzcd}\]

A morphism $f : (X, \sigma, \nu) \to (X', \sigma', \nu')$ of cartesian
substitution algebras is a morphism ${f: X \to X'}$ in $\mathcal{F}$ such that
$\delta(f) \ \nu = \nu'$ and 
$f \ \sigma = \sigma' \ \Sigma_\mathsf{sub}(f)$.

Cartesian substitution algebras and their morphisms organise into a category
$\mathbf{CSubstAlg}$.  We note the following result, a proof of which appears
in~\cite{TACPaper}, that justifies the axiomatisation.

\begin{theorem}[Fiore, Plotkin and Turi~\cite{FPT}]
\label{thm_CSubstAlgClnLaw}
The category of cartesian substitution algebras, the category of abstract
clones, the category of Lawvere theories, and the category of cartesian
one-object multicategories are equivalent.
\end{theorem}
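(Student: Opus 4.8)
The plan is to prove the fourfold equivalence by designating the category of abstract clones as a hub and exhibiting an equivalence of each of the other three categories with it; transitivity of equivalence then closes the argument. Three of the four descriptions carry the same data in different packaging, so the substantive content lies entirely in the equivalence $\mathbf{CSubstAlg}\simeq\mathbf{Clone}$, which relates the single-variable substitution primitive of a substitution algebra to the (simultaneous) composition primitive of a clone.

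I would first dispatch the folklore equivalences among abstract clones, Lawvere theories, and cartesian one-object multicategories. An abstract clone $C$ is precisely a cartesian one-object multicategory: its $n$-ary operations $C_n$ are the multimorphisms $\mathbf{1}^{\otimes n}\to\mathbf{1}$, multicategorical composition is clone substitution, and the cartesian structure (the functorial action of all maps of $\mathbb{F}$, hence projections, diagonals, and permutations) supplies exactly the clone's projections and substitution-into-projections; this is a direct unwinding of definitions. The passage to Lawvere theories is the classical matrix construction: send $C$ to the category with objects the natural numbers and $\mathrm{Hom}(\mathbf{m},\mathbf{n})=C_m^{\,n}$, i.e.\ $n$-tuples of $m$-ary operations, with composition given by clone substitution. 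One checks this category has finite products with $\mathbf{n}=\mathbf{1}^{n}$ and that every finite-product category on a single generator arises this way, giving the equivalence (Lawvere, Linton).

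The core of the argument is then $\mathbf{CSubstAlg}\simeq\mathbf{Clone}$. In one direction, from a cartesian substitution algebra $(X,\sigma,\nu)$ I would set $C_n:=X(\mathbf{n})$; the projections arise from the generic variable $\nu$ transported along the maps of $\mathbb{F}$ (equivalently, from the presheaf action $X(f)$ of the relevant functions $\mathbf{1}\to\mathbf{n}$), while clone composition is obtained by iterating the single-variable operation $\sigma$, substituting one argument at a time with the remaining arguments suitably weakened. The clone unit and projection laws follow from axioms \textbf{(a)} and \textbf{(e)} together with functoriality of $X$, and clone associativity follows by repeated appeal to the substitution lemma \textbf{(f)} and the right-unit law \textbf{(b)}. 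Conversely, from a clone $C$ I would define $X(\mathbf{n}):=C_n$ with $\mathbb{F}$-action $X(f)(t)=t(\pi_{f(1)},\ldots,\pi_{f(m)})$, recover $\nu$ from the unary projection (the identity operation), and define $\sigma$ as the special case of clone composition substituting projections for all arguments but the last; the four substitution-algebra axioms then reduce to the clone identities. Functoriality of both assignments is immediate, and one verifies that the two round-trips are naturally isomorphic to the identities, the non-trivial direction being that the iterated single-variable substitution recovered from a clone returns the original simultaneous composition.

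The main obstacle I anticipate is precisely this translation between the two substitution primitives: showing that iterating $\sigma$ yields a well-defined simultaneous substitution whose associativity (clone associativity) is a formal consequence of axioms \textbf{(a)}, \textbf{(b)}, \textbf{(e)}, \textbf{(f)} and the presheaf structure. The bookkeeping of weakenings and the order in which arguments are substituted must be handled carefully, and it is here---rather than in the classical repackagings---that all four axioms of Definition~\ref{defn_cartSubstAlg} are genuinely needed: the substitution lemma \textbf{(f)} carries the associativity, while \textbf{(a)}, \textbf{(b)}, and \textbf{(e)} respectively guarantee the units, the correct treatment of the substituted variable, and invariance under weakening.
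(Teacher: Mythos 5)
The paper does not prove this theorem: it explicitly defers the proof to the cited reference \cite{TACPaper}, so there is no in-paper argument to compare yours against. That said, your outline is the standard route --- clones as the hub, the folklore repackagings (clones $\simeq$ one-object cartesian multicategories, the Lawvere-theory ``matrix'' construction), and the substantive translation between the single-variable primitive $\sigma$ and simultaneous clone composition --- and it is sound as a plan. Two points deserve more than the passing mention you give them, because they are where the cited proof actually spends its effort. First, Definition~\ref{defn_cartSubstAlg} contains no parallel-substitution axiom (contrast axiom \textbf{(d)} in the linear and affine cases), so the commutation of substitutions into distinct variables --- which you need both to make ``substitute one argument at a time'' canonical and to prove clone associativity --- must itself be \emph{derived} from \textbf{(e)}, \textbf{(f)} and the presheaf structure before clone composition is available; this is a genuine lemma, not bookkeeping. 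Second, on the round trip starting from a substitution algebra you must check that the given presheaf action $X(f)$ coincides with substitution of projections as recovered from $\sigma$ and $\nu$; otherwise the functor to clones forgets structure and the equivalence fails. Neither point is an obstacle, but your sketch would not be complete without them.
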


\subsection{Cartesian abstract syntax}
\label{CartesianAbstractSyntax}

Recall from Section~\ref{subsec_UniDisc} the presheaf of variables, 
$V = \mathcal{Y}(\mathbf{1})$, which by Proposition~\ref{prop_VSymOb} is a
symmetric comonoid. This presheaf models cartesian variables as an object in
$\mathcal{F}$.  The abstract syntax of a binding signature is modelled by the
free $\Sigma$-algebra over $V$.  We denote this by 
$\varphi_V: \Sigma(\TV) \to \TV$ together with the morphism 
$\eta_V: V \to \TV$ provided by the initial $(V + \Sigma)$-algebra structure.
The following results appear in~\cite{FPT}.  However, we note that new direct
categorical proofs are available using Corollary~\ref{Corollary:GPSR}.

\begin{lemma}[Fiore, Plotkin and Turi~\cite{FPT}]\label{lem_TVCSubstAlg}
$\TV$ is equipped with a canonical cartesian substitution algebra structure.  
\end{lemma}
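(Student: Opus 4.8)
The plan is to exhibit the two structure maps $\sigma\colon\Sigma_\mathsf{sub}(\TV)=\delta(\TV)\times\TV\to\TV$ and $\nu\colon 1\to\delta(\TV)$, and then to verify axioms \textbf{(a)}, \textbf{(b)}, \textbf{(e)}, \textbf{(f)} of Fig.~\ref{fig_SubstAlgAx}; throughout $\hat\otimes=\times$ and $J=1$ as in Definition~\ref{defn_cartSubstAlg}. The generic-variable morphism is the evident one picking out the fresh variable: writing $\delta V\cong V+J$ for the canonical splitting of the extended presheaf of variables into old variables and the fresh one, I would set $\nu=\delta(\eta_V)\circ\nu_V$, where $\nu_V\colon J\to\delta V$ is the coprojection of the fresh variable.

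The substitution operation $\sigma$ is obtained by generalised structural recursion, via Corollary~\ref{Corollary:GPSR}, recursing on the first (substituted-into) argument. I would take $\mathcal C=\mathcal C'=\mathcal F$, the initial $S$-algebra to be $\alpha=[\eta_V,\varphi_V]\colon(V+\Sigma)(\TV)\xrightarrow{\cong}\TV$ with $S=V+\Sigma$, and the left adjoint $F=\delta(-)\times\TV$ (a composite of the left adjoints $\delta$ and $(-)\times\TV$), so that $F(\TV)=\delta(\TV)\times\TV=\Sigma_\mathsf{sub}(\TV)$. For the target endofunctor I would take $S'=(\delta V\times\TV)+\Sigma$ on $\mathcal F$, whose algebras are pairs consisting of a \emph{base} map $\delta V\times\TV\to B$ together with an ordinary $\Sigma$-algebra on $B$. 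The required $\psi\colon FS\to S'F$ is assembled from coproduct-preservation of $\delta$, distributivity of $\times$ over $+$, and, on the $\Sigma$-summand, the composite $\delta\Sigma(-)\times\TV\xrightarrow{\mathbf{swap}\times\mathrm{id}}\Sigma\delta(-)\times\TV\xrightarrow{\mathbf{str}}\Sigma(\delta(-)\times\TV)=\Sigma F$, using exactly the distributive law $\mathbf{swap}$ and the strength $\mathbf{str}$ for $\Sigma$ produced in Section~\ref{CartesianSubstitutionAlgebras} via Lemma~\ref{lem_lift}; note that the component of $\mathbf{str}$ on a binding argument $\delta^{n_i}$ is precisely the weakening of the substituend under the binder, which is what makes the recursion capture-avoiding. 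The chosen $S'$-algebra is $\beta=[\,[\eta_V\circ\pi_1,\mathrm{id}],\varphi_V\,]$, where the base map uses $\delta V\times\TV\cong(V\times\TV)+\TV$ to return the embedded variable $\eta_V$ on old variables and the substituend on the fresh variable. Corollary~\ref{Corollary:GPSR} then yields a unique $\sigma=\mathsf{git}(\beta)$ satisfying $\sigma\circ F(\alpha)=\beta\circ S'(\sigma)\circ\psi_{\TV}$.

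Axiom \textbf{(a)} fixes the first argument to the generic variable, so it follows by unfolding the recursion equation once: $\nu$ lands in the $\delta V$ part, where $\beta$ returns the substituend, giving $\sigma\circ(\nu\times\mathrm{id})\cong\mathrm{id}$. Axioms \textbf{(e)} and \textbf{(b)} are genuinely universally quantified over the first argument and so are proved by the uniqueness half of Corollary~\ref{Corollary:GPSR} (equivalently, by structural induction): for \textbf{(e)} one checks that both $\sigma\circ(\mathsf{up}_{\TV}\times\mathrm{id})$ and $\pi_1$ satisfy the same recursive characterisation—using that $\mathsf{up}$ is compatible with $\varphi_V$ and $\eta_V$—and concludes they agree; \textbf{(b)} is handled analogously, most conveniently through its reformulation with $\mathsf{cont}_X$ given after Definition~\ref{defn_cartSubstAlg}, reducing it to the monad (contraction) compatibility of $\delta$.

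The main obstacle is axiom \textbf{(f)}, the syntactic substitution lemma, an equality of two maps $\delta^2(\TV)\times\TV\times\TV\to\TV$. I would again prove it by uniqueness of the generalised iterator: exhibit both composites as $\mathsf{git}$ of a common $S'$-algebra, so that the coherence reduces to the interaction of $\mathbf{str}$ and $\mathbf{swap}$ with each other and with the algebra structure. This is where the bulk of the bookkeeping lies, since it requires the strong-symmetric-monad coherences (Figs.~\ref{fig_strProp} and~\ref{fig_distLawProp}) to commute the strengths and swaps past one another and past $\varphi_V$; carrying the second and third arguments correctly through the recursion, with the right weakenings under binders, is the delicate point.
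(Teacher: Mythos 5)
Your proposal is correct and follows essentially the same route the paper intends: the paper omits a detailed proof (citing Fiore--Plotkin--Turi and noting that new direct proofs are available via Corollary~\ref{Corollary:GPSR}), and your construction --- $\nu$ as the transpose of $\eta_V$, $\sigma=\mathsf{git}(\beta)$ for the basic-substitution algebra $[\eta_V\pi_1,\pi_2]$ on $\delta(V)\times\TV\cong V\times\TV+\TV$ with $\psi=\mathbf{str}\circ(\mathbf{swap}\times\mathrm{id})$ on the $\Sigma$-summand, and the axioms discharged by uniqueness of generalised iterators --- is exactly that proof (compare the paper's sketch for Lemma~\ref{lem_TVASubstAlg}, which defines the same $\nu$ and basic substitution ``as in the cartesian case''). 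The only cosmetic difference is that the paper routes the associativity axiom through a reformulation via (second) derived functors in the substructural cases, whereas you prove \textbf{(f)} directly by uniqueness, which is legitimate and arguably simpler in $\mathcal F$ where $\delta(X\times Y)\cong\delta(X)\times\delta(Y)$.
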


\begin{proposition}[Fiore, Plotkin and Turi~\cite{FPT}]
The $\Sigma$-algebra 
\[\begin{tikzcd}[ampersand replacement=\&,cramped,row sep=small]
	{\Sigma\delta(\TV)} \& {\delta\Sigma(\TV)} \& {\delta(\TV)}
	\arrow["{\mathbf{swap}^{-1}}", from=1-1, to=1-2]
	\arrow["\cong"', draw=none, from=1-1, to=1-2]
	\arrow["{\delta(\varphi_V)}", from=1-2, to=1-3]
\end{tikzcd}\]
together with the morphism $\delta(\eta_V): \delta (V)\to \delta (TV)$ present
$\delta (TV)$ as a free $\Sigma$-algebra over $\delta (V) \cong V+1$.
\end{proposition}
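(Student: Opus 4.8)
The plan is to realise $\delta(\TV)$, with the indicated structure, as an \emph{initial} $(\delta(V)+\Sigma)$-algebra, whence by the correspondence recalled in Section~\ref{InitialAlgebraApproach} it is the free $\Sigma$-algebra over $\delta(V)$. The whole argument is a single application of the ``moreover'' clause of Corollary~\ref{Corollary:GPSR}, so that no explicit structural recursion need be performed by hand: the only genuine work is to assemble a suitable invertible comparison $\psi$ and let the corollary supply both the algebra and its initiality.

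First I would fix the data for the corollary, taking $\mathcal{C} = \mathcal{C}' = \mathcal{F}$. Let $S = V + \Sigma$, whose initial algebra is $\alpha = [\eta_V,\varphi_V] : V + \Sigma(\TV) \to \TV$; thus $A = \TV$ is the free $\Sigma$-algebra over $V$, as recorded in Section~\ref{CartesianAbstractSyntax}. For the adjunction I take $F = \delta$ together with its right adjoint $G$, whose existence is noted in Section~\ref{ContextExtension}, so that $F \dashv G$. As the target endofunctor on $\mathcal{C}'$ I take $S' = \delta(V) + \Sigma$, so that an initial $S'$-algebra is precisely a free $\Sigma$-algebra over $\delta(V)$.

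The key step is to exhibit a natural transformation $\psi : FS \to S'F$ and to observe that it is invertible. Since $\delta$, being a left adjoint, preserves coproducts, there is a canonical isomorphism $\delta(V + \Sigma(X)) \cong \delta(V) + \delta\Sigma(X)$; postcomposing with $\mathrm{id}_{\delta(V)} + \mathbf{swap}_X$ gives $\psi_X : \delta(V + \Sigma(X)) \to \delta(V) + \Sigma\delta(X)$, i.e.\ $\psi : FS \to S'F$. Naturality is inherited from the coproduct comparison and from $\mathbf{swap}$, and $\psi$ is an \emph{isomorphism} precisely because $\mathbf{swap} : \delta\Sigma \xrightarrow{\cong} \Sigma\delta$ is, as established via Lemma~\ref{lem_lift}. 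The ``moreover'' clause of Corollary~\ref{Corollary:GPSR} then yields that $F(\alpha)\,\psi_A^{-1} : S'F(A) \to F(A)$ is an initial $S'$-algebra. Unwinding this, and using that, modulo the coproduct comparison, $F(\alpha) = [\delta(\eta_V),\delta(\varphi_V)]$ while $\psi_A^{-1} = \mathrm{id}_{\delta(V)} + \mathbf{swap}_{\TV}^{-1}$, the composite $F(\alpha)\,\psi_A^{-1} : \delta(V) + \Sigma\delta(\TV) \to \delta(\TV)$ is the copairing $[\,\delta(\eta_V),\ \delta(\varphi_V)\circ\mathbf{swap}^{-1}\,]$. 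Reading off the two components identifies the universal map as $\delta(\eta_V)$ and the $\Sigma$-algebra structure as $\delta(\varphi_V)\circ\mathbf{swap}^{-1}$, exactly as stated; initiality of this $(\delta(V)+\Sigma)$-algebra is the asserted freeness.

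The main obstacle, such as it is, is purely a matter of bookkeeping in this third step: assembling $\psi$ from the coproduct comparison and $\mathbf{swap}$ in the correct variance and confirming its naturality and invertibility; once $\psi$ is seen to be an isomorphism, the corollary does all the substantive work. The residual identification $\delta(V) \cong V + 1$ is a direct computation from $\delta(V)(B) = V(B \otimes \mathbf{1})$ with $V = \mathcal{Y}(\mathbf{1})$, the summand $1$ recording the freshly bound variable, and plays no role in the freeness argument itself.
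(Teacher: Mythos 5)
Your proposal is correct and is precisely the argument the paper intends: it gives no explicit proof but notes that ``new direct categorical proofs are available using Corollary~\ref{Corollary:GPSR}'', and your instantiation with $F=\delta\dashv G$, $S=V+\Sigma$, $S'=\delta(V)+\Sigma$, and $\psi$ assembled from the coproduct-preservation of $\delta$ and the isomorphism $\mathbf{swap}:\delta\Sigma\xrightarrow{\cong}\Sigma\delta$ is exactly how that corollary's ``moreover'' clause is meant to be applied. The unwinding of $F(\alpha)\,\psi_A^{-1}$ into the copairing $[\,\delta(\eta_V),\ \delta(\varphi_V)\circ\mathbf{swap}^{-1}\,]$ correctly recovers the stated structure.
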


We direct our attention to the universal property of the substitution algebra
$(\TV, \sigma, \nu)$; namely, that $\varphi_V$ is the initial $\Sigma$-algebra
with compatible substitution-algebra structure. To express this fact, we
recall the following definition.

\begin{definition}[Fiore, Plotkin and Turi~\cite{FPT}]
A \textit{cartesian $\Sigma$-substitution algebra} is a quadruple 
$(X, \sigma, \nu, \alpha)$ where $(X, \sigma, \nu)$ is a cartesian
substitution algebra and $(X, \alpha)$ is a $\Sigma$-algebra such that the
following diagram commutes
\[\begin{tikzcd}[ampersand replacement=\&,cramped, row sep=scriptsize]
	{\Sigma\Sigma_\mathsf{sub}(X)} \&\& {\Sigma(X)} \\
	{\delta\Sigma(X) \times X} \\
	{\Sigma_\mathsf{sub}(X)} \&\& X
	\arrow["{\Sigma(\sigma)}", from=1-1, to=1-3]
	\arrow["\alpha", from=1-3, to=3-3]
	\arrow["{\mathbf{str}(\mathbf{swap}\times \mathrm{id})}", from=2-1, to=1-1]
	\arrow["{\delta(\alpha)\times \mathrm{id}}"', from=2-1, to=3-1]
	\arrow["\sigma", from=3-1, to=3-3]
\end{tikzcd}\]
\end{definition}

A morphism of such structures is a morphism in $\mathcal{F}$ that is both a
$\Sigma$-homomorphism and a morphism of cartesian substitution algebras, and
we obtain the category $\Sigma$-$\mathbf{CSubstAlg}$. 

\begin{theorem}[Fiore, Plotkin and Turi~\cite{FPT}]
For a signature $\Sigma$, $(\TV, \sigma, \nu, \varphi_V)$ is an initial object
in $\Sigma$-$\mathbf{CSubstAlg}$.  
\end{theorem}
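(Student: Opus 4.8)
The plan is to reduce the asserted initiality in $\Sigma$-$\mathbf{CSubstAlg}$ to the initiality of $\TV$ as the initial $(V+\Sigma)$-algebra, and then to check separately that the induced comparison map respects the substitution structure. Throughout I write $\hat\otimes=\times$ and $J=1$, as in Definition~\ref{defn_cartSubstAlg}. First I would fix an arbitrary cartesian $\Sigma$-substitution algebra $(X,\sigma',\nu',\alpha')$ and extract its underlying variables map: using the isomorphism $\delta\cong V\multimap(-)$ and the Day adjunction $(-)\times V\dashv\delta$, the point $\nu':1\to\delta(X)$ transposes to a morphism $v':V\to X$. The pair $[\,v',\alpha'\,]:V+\Sigma(X)\to X$ is then a $(V+\Sigma)$-algebra, so by initiality of $\TV$ there is a unique $(V+\Sigma)$-homomorphism $h:\TV\to X$; by construction it satisfies the base equation $h\,\eta_V=v'$ and the recursion $h\,\varphi_V=\alpha'\,\Sigma(h)$, the latter saying precisely that $h$ is a $\Sigma$-homomorphism $(\TV,\varphi_V)\to(X,\alpha')$.

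Next I would verify that $h$ is a morphism of cartesian substitution algebras. Preservation of the point, $\delta(h)\,\nu=\nu'$, is immediate: by naturality of the adjunction transpose this equation is equivalent to $h\,\eta_V=v'$, using that the canonical structure of Lemma~\ref{lem_TVCSubstAlg} has $\nu$ transpose to $\eta_V$, so it holds by construction. The substantive step, and the main obstacle, is preservation of substitution, $h\,\sigma=\sigma'\,\Sigma_{\mathsf{sub}}(h)$. Here I would exploit the fact that $\delta(\TV)$ is itself a free $\Sigma$-algebra, over $\delta(V)\cong V+1$, with structure $\delta(\varphi_V)\,\mathbf{swap}^{-1}$. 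Transposing the two composites $h\,\sigma$ and $\sigma'\,(\delta(h)\times h)$ across the closed structure of $\mathcal{F}$ yields maps $\delta(\TV)\to X^{\TV}$, and I would equip $X^{\TV}$ with a $\Sigma$-algebra structure built from $\alpha'$ and the strength $\mathbf{str}$. The compatibility diagram defining a cartesian $\Sigma$-substitution algebra is exactly what makes both transposes $\Sigma$-homomorphisms out of the free $\Sigma$-algebra $\delta(\TV)$, while agreement on the generators $V+1$ reduces, via the $\multimap$/$\delta$ adjunction, to the unit and weakening axioms \textbf{(a)}, \textbf{(b)}, and \textbf{(e)} of Fig.~\ref{fig_SubstAlgAx} (with the syntactic substitution lemma \textbf{(f)} controlling the interaction on compound terms). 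The uniqueness clause of the generalised iterator, Corollary~\ref{Corollary:GPSR} applied with $\delta(\TV)$ free, then forces the two transposes, hence $h\,\sigma$ and $\sigma'\,\Sigma_{\mathsf{sub}}(h)$, to coincide.

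Finally, uniqueness of $h$ in $\Sigma$-$\mathbf{CSubstAlg}$ requires no further work: any morphism $f:(\TV,\sigma,\nu,\varphi_V)\to(X,\sigma',\nu',\alpha')$ is in particular a $\Sigma$-homomorphism with $\delta(f)\,\nu=\nu'$, and the latter forces $f\,\eta_V=v'$ by the same transpose argument, so $f$ is a $(V+\Sigma)$-homomorphism out of the initial $(V+\Sigma)$-algebra and therefore equals $h$. I expect the only real difficulty to lie in the third paragraph, namely in pinning down the $\Sigma$-algebra structure on the exponential $X^{\TV}$ so that both transposes become homomorphisms, and in the finite case analysis over the two summands of $\delta(V)\cong V+1$ that establishes agreement on generators; everything else is formal bookkeeping with the adjunctions and the universal property.
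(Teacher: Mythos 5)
Your proposal is correct and follows essentially the same route as the paper's own (sketched) proof: the comparison morphism is induced by the initial $(V+\Sigma)$-algebra $[\varphi_V,\eta_V]$, and its compatibility with the substitution structure is established by generalised structural recursion (Corollary~\ref{Corollary:GPSR}), using the freeness of $\delta(\TV)$ over $\delta(V)\cong V+1$ and the transposition into $X^{\TV}$. Your write-up merely makes explicit the details that the paper leaves implicit in its ``Proof (idea)''.
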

\begin{proof}[Proof (idea)]
Lemma~\ref{lem_TVCSubstAlg} indicates that $(\TV, \sigma, \nu, \varphi_V)$ is
an object in \mbox{$\Sigma$-$\mathbf{CSubstAlg}$}.  Regarding it being
initial, the unique 
morphism to any other cartesian \mbox{$\Sigma$-substitution} algebra is
induced by the initial $(V + \Sigma)$-algebra $[\varphi_V, \eta_V]$.  The fact
that it is a morphism in $\Sigma$-$\mathbf{CSubstAlg}$ follows by an
application of Corollary~\ref{Corollary:GPSR}.  
\end{proof}

\section{Linear Theory}
\label{Section:LinearTheory}

We now consider single-variable substitution for linear theories, 
left open by Tanaka~\cite{Tanaka} when developing the case of simultaneous
substitution.
As mentioned at the end of the introduction, this involves the crucial
development of \textit{derived functors} for signature endofunctors
(Section~\ref{subsec_linDerFun}), which are needed to account for the specific
interaction between context extension and the term pairing that occurs in the
linear setting.

\subsection{Linear substitution algebras}

Recall from Proposition~\ref{prop_ctxSymMon} that the category of contexts for
linear theories, $\mathbb{B}$, is symmetric monoidal.  The Day convolution on
the universe of discourse, $\mathcal{B}$, is also symmetric monoidal
(Proposition~\ref{prop_daySymMon}) and $\delta: \mathcal{B} \to \mathcal{B}$
is a symmetric endofunctor (Proposition~\ref{prop_deltaSymEndo}).  
The following axiomatisation of single-variable substitution (referred to as
partial composition in operad theory~\cite{Markl}) for \emph{non-unital}
linear theories first appeared in~\cite{FioreCT2014}.  However, again aiming
at a unified theory for substructural syntax, the definition below provides a
presentation making use of the developed categorical structures.

\begin{definition}\label{defn_linSubstAlg}
A \textit{linear substitution algebra} is a triple $(X, \sigma, \nu)$ where
$X$ is an object in $\mathcal{B}$, and $\sigma: \Sigma_\mathsf{sub}(X) \to X$
and $\nu: J \to \delta(X)$ are morphisms in $\mathcal{B}$ such that
\textbf{(a)}, \textbf{(b)}, \textbf{(c)}, and \textbf{(d)} in 
Fig.~\ref{fig_SubstAlgAx} commute.  
\end{definition}


\begin{figure*}
\[\begin{tikzcd}[ampersand replacement=\&,cramped,column sep=large, row sep=scriptsize]
	{\delta\Sigma_\mathsf{sub}(X)\hat\otimes X} \& {\Sigma_\mathsf{sub}^\dagger(X, \delta(X))\hat\otimes X} \& {\Sigma_\mathsf{sub}^\dagger(X, \Sigma_\mathsf{sub}(X))} \& {\Sigma_\mathsf{sub}^\dagger(X)} \\
	{\Sigma_\mathsf{sub}(X)} \&\&\& X
	\arrow["{\mathbf{swap}^\mathsf{sub}\hat\otimes \mathrm{id}}", from=1-1, to=1-2]
	\arrow["{\delta(\sigma) \hat\otimes \mathrm{id}}"', from=1-1, to=2-1]
	\arrow["{\mathbf{str}^\mathsf{sub}}", from=1-2, to=1-3]
	\arrow["{\Sigma_\mathsf{sub}^\dagger(\mathrm{id}, \sigma)}", from=1-3, to=1-4]
	\arrow["{\sigma^\dagger}", from=1-4, to=2-4]
	\arrow["\sigma", from=2-1, to=2-4]
\end{tikzcd}\]
    \caption{Extended substitution lemma}
    \label{fig_ExtSubLem}
\end{figure*}

\begin{figure*}
    {\small \begin{center} \begin{tabular}{|c|c|c|}
    \hline
         & $\Sigma_\mathsf{sub}^\dagger(X,Y)$ & $\mathbf{swap}^\mathsf{sub}$ and $\mathbf{str}^\mathsf{sub}$ \\
         \hline
        $\mathcal{B}$ & $\delta(Y) \hat\otimes X\!+\!\delta(X) \hat\otimes Y$ & \tlc{$\mathbf{swap}^\mathsf{sub} = \delta(\delta(X)\hat\otimes X) \xrightarrow{(\mathsf{swap}\hat\otimes \mathrm{id} + \mathrm{id})\mathcal{L}} \delta^2(X)\hat\otimes X + \delta(X)\hat\otimes\delta(X)$}{$\mathbf{str}^\mathsf{sub} = (\delta(Y)\hat{\otimes}X + \delta(X) \hat\otimes Y) \hat\otimes Z \to \delta(Y\hat\otimes Z) \hat{\otimes}X + \delta(X) \hat\otimes Y \hat\otimes Z$} \\
        \hline
        $\mathcal{I}$ & $\!\delta(Y) \hat\otimes X\!+\!\delta(X) \hat\otimes Y\!+\!\delta(X) \hat\otimes X$ & \tlc{$\mathbf{swap}^\mathsf{sub} = \delta(\delta(X)\hat\otimes X)  \xrightarrow{(\mathbf{swap}\hat\otimes \mathrm{id}\!+\!\mathrm{id})\mathcal{K}} \delta^2(X)\hat\otimes X \!+\! \delta(X) \hat\otimes \delta(X) \!+\! \delta(X) \hat\otimes X$}{$\mathbf{str}^\mathsf{sub} = (\delta(Y) \hat\otimes X \!+\! \delta(X) \hat\otimes Y \!+\! \delta(X) \hat\otimes X) \hat\otimes Z \to \delta(Y \hat\otimes Z) \hat\otimes X  \!+\! \delta(X) \hat\otimes Y \hat\otimes Z \!+\! \delta(X) \hat\otimes X$} \\
        \hline
        $\mathcal{S}$ & $\delta(Y) \hat\otimes X \!+\! \delta(X) \hat\otimes Y \!+\! \delta(Y) \hat\otimes Y$ & \tlc{$\mathbf{swap}^\mathsf{sub} = \delta(\delta(X) \hat\otimes X) \xrightarrow{(\mathbf{swap}\hat\otimes \mathrm{id} + \mathrm{id} + \mathbf{swap} \hat\otimes \mathrm{id}) \mathcal{H}} \delta^2(X) \hat\otimes X \!+\! \delta(X) \hat\otimes \delta(X) \!+\! \delta^2(X) \hat\otimes \delta(X)$}{$\mathbf{str}^\mathsf{sub} = (\delta(Y) \hat\otimes X \!+\! \delta(X) \hat\otimes Y \!+\! \delta(Y) \hat\otimes Y) \hat\otimes Z \to \delta(Y\hat\otimes Z) \hat\otimes X \!+\! \delta(X) \hat\otimes Y \hat\otimes Z \!+\! \delta(Y \hat\otimes Z) \hat\otimes Y \hat\otimes Z$} \\
        \hline
    \end{tabular} \end{center} }
    \caption{Derived functor with $\mathbf{swap}$ and $\mathbf{str}$ morphisms
      for $\Sigma_\mathsf{sub}(-) = \delta(-) \hat\otimes (-)$}
    \label{tab_derSubSigEndo}
\end{figure*} 

As in Definition~\ref{defn_cartSubstAlg}, $\sigma$ is the operation of
substitution and $\nu$ specifies generic variables, and axioms \textbf{(a)}
and \textbf{(b)} have the same intuitive understanding.  Axioms \textbf{(c)}
and \textbf{(d)} are the operad (or multicategory) laws of associativity and
exchange~\cite{Markl}, which model the behaviour of sequential and parallel
composition, respectively.  Comparing this to
Definition~\ref{defn_cartSubstAlg}, observe the absence of axiom \textbf{(e)}:
there is no weakening on linear contexts for substitution to respect.  An
\textit{extended substitution lemma} is in fact encoded in the associativity
and exchange for operads, and in Section~\ref{subsec_linDerFun} we develop a
categorical construction of a \textit{linear derived functor} for a signature
endofunctor to express this.  Furthermore, this is also required in the theory
of linear abstract syntax of Section~\ref{Subsection:LinearAbstractSyntax}.

The morphisms of linear substitution algebras are similar to those of
cartesian substitution algebras, and these structures organise themselves into
a category, $\mathbf{LSubstAlg}$.  We have the following result, analogous to
Theorem~\ref{thm_CSubstAlgClnLaw}.

\begin{theorem}
The category of linear substitution algebras, the category of symmetric
operads, and the category of one-object symmetric multicategories are
equivalent.  
\end{theorem}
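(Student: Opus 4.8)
The plan is to factor the claimed triple equivalence into two parts. The identification of symmetric operads with one-object symmetric multicategories is standard---a one-object symmetric multicategory is precisely a symmetric operad, its object of multimorphisms of arity $n$ being the $n$-ary operations---so it suffices to exhibit an equivalence between $\mathbf{LSubstAlg}$ and the category of symmetric operads. Throughout I would work in the concrete description of $\mathcal{B}=\mathbf{Set}^{\mathbb{B}}$ as Joyal's species, reading a presheaf $X$ as assigning to each $n$ the set $X[\mathbf{n}]$ of $n$-ary operations with $S_n$-action supplied by functoriality over $\mathbb{B}$. Under this reading the three pieces of data of a linear substitution algebra acquire operadic meaning: the Day tensor $\hat\otimes$ is the Cauchy product of species; context extension $\delta$ is the species derivative, $\delta(X)[\mathbf{n}]\cong X[\mathbf{n}+1]$, so that $\Sigma_\mathsf{sub}(X)=\delta(X)\hat\otimes X$ records a pair of operations together with a distinguished input slot; the unit $\nu:J\to\delta(X)$ picks out an element of $X[\mathbf{1}]$, the prospective operadic identity; and $\sigma:\Sigma_\mathsf{sub}(X)\to X$ is partial composition, substituting the second operation into the distinguished slot of the first.

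First I would define a functor $\mathbf{LSubstAlg}\to\mathbf{SymOp}$. On objects it keeps the underlying species $X$, takes the operadic identity to be $\nu$, and defines the family of partial compositions $\circ_i$ by precomposing $\sigma$ with the $\mathbb{B}$-action that carries the $i$-th input into the distinguished derivative slot; equivariance of the $\circ_i$ is then automatic from functoriality of $X$. The real content is the dictionary between the diagrams of Fig.~\ref{fig_SubstAlgAx} and the partial-composition axioms of \cite{Markl}: \textbf{(a)} and \textbf{(b)} become the left and right unit laws $\mathrm{id}\circ_1 g=g$ and $f\circ_i\mathrm{id}=f$ for the identity $\nu$; \textbf{(c)} becomes the \emph{sequential} (nested) associativity of $\circ_i$; and \textbf{(d)}, which is exactly the axiom carrying the symmetry $\mathsf{swap}$ of $\delta^2$, becomes the \emph{parallel} (exchange) law $(f\circ_i g)\circ_j h=(f\circ_j h)\circ_i g$ for distinct slots. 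Invoking Markl's theorem that these laws present symmetric operads, the assignment lands in $\mathbf{SymOp}$.

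Conversely I would define $\mathbf{SymOp}\to\mathbf{LSubstAlg}$ on the same underlying species, setting $\nu$ from the unit and recovering $\sigma$ as the instance of partial composition at the distinguished slot, and then verify that the two functors are mutually inverse and send morphisms to morphisms; on morphisms both sides reduce to preservation of the unit and of partial composition, so functoriality and the triangle identities are routine. The whole argument runs in close parallel to the cartesian case of Theorem~\ref{thm_CSubstAlgClnLaw} (whose proof is in \cite{TACPaper}), with the cartesian product replaced by the Cauchy product and abstract clones replaced by symmetric operads; the non-unital fragment of \cite{FioreCT2014} corresponds to dropping $\nu$ together with the unit laws \textbf{(a)} and \textbf{(b)}.

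I expect the crux to be the dictionary of the second paragraph, specifically matching the \emph{categorical} associativity diagrams \textbf{(c)} and \textbf{(d)}---phrased via the strengths $\mathbf{str}$, $\mathbf{str'}$ and the symmetry acting on $\delta^2$---to the \emph{index-based} sequential and parallel laws. The delicate point is that a linear substitution algebra carries only the single substitution $\sigma$ into the distinguished slot, so I must check that the $\circ_i$ reconstructed through the symmetric-group action are well defined and that the two associativity diagrams, once transported across the Cauchy-product decompositions and the derivative isomorphisms $\delta^{2}(X)[\mathbf{n}]\cong X[\mathbf{n}+2]$, coincide on the nose with the classical relations. This is in essence bookkeeping over the coend defining $\hat\otimes$, but it is where all the genuine verification resides.
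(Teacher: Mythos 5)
Your proposal is correct and follows exactly the route the paper intends: the paper gives no explicit proof of this theorem, but its surrounding remarks (that $\sigma$ is ``partial composition in operad theory'' in the sense of \cite{Markl}, that \textbf{(a)}, \textbf{(b)} are the unit laws, and that \textbf{(c)}, \textbf{(d)} are the operad laws of sequential associativity and exchange) constitute precisely the dictionary you set out, with the remaining content being the standard species-level bookkeeping you correctly identify as the crux. Your identification of the non-unital fragment with \cite{FioreCT2014} and the reconstruction of the $\circ_i$ from the single distinguished-slot composition via the $\mathbb{B}$-action also match the paper's framing.
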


\subsection{Linear derived functors}
\label{subsec_linDerFun}

In contrast to cartesian theories, the Day convolution in $\mathcal{B}$ does
not coincide with the cartesian monoidal tensor, and $\delta$ is not monoidal
with respect to it.  In particular, for a signature endofunctor $\Sigma$, one
may not apply Lemma~\ref{lem_lift} to induce swapping and strength on
$\Sigma$.  Instead, observe that the universal morphism
$[\mathbf{str}, \mathbf{str'}]
 : \delta(X) \hat\otimes Y + X \hat\otimes \delta(Y) 
   \to 
   \delta(X \hat\otimes Y)$ 
is an isomorphism, say with inverse 
\begin{equation*}
 \mathcal{L} 
 : \delta(X \hat\otimes Y) 
   \xrightarrow{\cong} 
   \delta(X) \hat\otimes Y + X \hat\otimes \delta(Y)
\end{equation*}

$\mathcal{L}$ is referred to as the \textit{Leibniz isomorphism} and makes
$\delta$ a derivative operator on $\mathcal{B}$~\cite{J81, J86}.  As in
classical differential calculus, the Leibniz isomorphism may be recursively
applied to a finitary $\hat\otimes$-product, 
$\widehat\bigotimes_{i\in \mathbf{n}} X_i$, to obtain a natural isomorphism
denoted by $\mathcal{L}^n$.

Consider a signature endofunctor 
$\Sigma = \coprod_{\omega \in \Omega} \Sigma_\omega$.  Each $\Sigma_\omega$
(with $a(\omega) = (n_1, \ldots, n_k)$) is a $k$-ary $\hat\otimes$-product, so
one may apply $\mathcal{L}^k$ to $\delta\Sigma_\omega$, followed by the
isomorphism $\mathsf{swap}^{n_i}: \delta\delta^{n_i} \to \delta^{n_i}\delta$
to the newly introduced $\delta$ in each term. To express this, we define the
\textit{linear derived functor} for an operator $\omega$.

\begin{definition}
For an operator $\omega \in \Omega$, the \emph{linear derived functor} of
$\Sigma_\omega$ is the bifunctor 
$\Sigma_\omega^\dagger: \mathcal{B} \times \mathcal{B} \to \mathcal{B}$ with
$\Sigma_\omega^\dagger(X, Y)$ given by
\begin{equation*}\textstyle
\coprod_{j \in \mathbf{k}} 
  \left(\!\! 
    \left( \widehat{\bigotimes_{i \in \mathbf{j-1}}} \delta^{n_i} (X) \!\right)
    \!\hat{\otimes} 
    \delta^{n_j}(Y) \hat\otimes\!
      \left(
        \widehat{\bigotimes_{i \in \mathbf{k-j}}} \delta^{n_{j+i}}(X) 
      \right)
    \!\!\right) 
\end{equation*}
\end{definition}

This functor comes canonically equipped with a swapping isomorphism:
\[
\mathbf{swap}_{\omega,X} 
=  
\delta\Sigma_\omega(X) 
  \xrightarrow
    {\left(
       \coprod_{j \in \mathbf{k}}
         \mathrm{id} \hat\otimes \mathsf{swap}^{n_j} \hat\otimes \mathrm{id}
     \right)\mathcal{L}^k} 
\Sigma_\omega^\dagger(X, \delta(X)) 
\]
Additionally, because there is only one instance of $Y$ in each summand of 
$\Sigma_\omega^\dagger(X, Y)$, the bifunctor admits a strength in its second
argument: 
\[
\mathbf{str}_{\omega,X, Y, Z} 
 : \Sigma_\omega^\dagger(X , Y) \hat\otimes Z 
   \to 
   \Sigma_\omega^\dagger (X , Y \hat\otimes Z)
\]

To complete the construction, we define the \textit{linear derived functor} of
a signature $\Sigma$ as 
$\Sigma^\dagger = \coprod_{\omega \in \Omega}\Sigma_\omega^\dagger$.

Noting that $\delta$ is monoidal with respect to the cocartesian monoidal
tensor, we define a swapping isomorphism for the signature endofunctor:
\begin{gather*}\textstyle
\mathbf{swap}_X \!
\colon 
  \delta\Sigma(X) 
    \xrightarrow{\cong} 
  \coprod_{\omega \in \Omega} \delta\Sigma_\omega(X) 
    \xrightarrow{\coprod\mathbf{swap}_{\omega,X}}  
  \Sigma^\dagger(X, \delta(X)) 
\end{gather*}
Moreover, due to the Day convolution distributing over the cocartesian
monoidal tensor, $\Sigma^\dagger$ admits a strength in its second argument.
\begin{gather*}\textstyle
\mathbf{str}_{X,Y,Z}\!
\colon \!
  \Sigma^\dagger\!(\!X, Y) \hat\otimes Z \!
    \xrightarrow{\cong}\!\!
  \coprod_{\omega \in \Omega}\!\!\Sigma_\omega^\dagger\!(\!X, Y)\hat\otimes Z \!
    \xrightarrow{\!\coprod \!\mathbf{str}_\omega\!}\!\! 
  \Sigma^\dagger\!(\!X, \! Y \hat\otimes Z)
\end{gather*}
As an illustrative example, and for use in the next proposition, we consider
the construction for the substitution signature endofunctor,
$\Sigma_\mathsf{sub} = \delta(-) \hat\otimes (-)$ in
Fig.~\ref{tab_derSubSigEndo}.

We also note that the linear derived functor induces a 
\textit{linear derived endofunctor}, obtained by evaluating at the diagonal.
This endofunctor is simply a coproduct of the operators of the signature:
\begin{equation*}\textstyle
\Sigma^\dagger(X) 
= \Sigma^\dagger(X, X) 
= \coprod_{\omega \in \Omega} \coprod_{i \in \mathbf{k}} \Sigma_\omega(X)
\end{equation*}
In particular, for a $\Sigma$-algebra $\alpha: \Sigma(X) \to X$, we write 
$\alpha^\dagger: \Sigma^\dagger(X) \to X$ for the $\Sigma^\dagger$-algebra
induced by the components of $\alpha$.

We conclude the section by using these linear derived functors to provide an
equivalent definition of a linear substitution algebra in terms of an extended
substitution lemma.

\begin{proposition}\label{prop_linSubAlgAlt}
A triple $(X, \sigma , \nu)$ is a linear substitution algebra if and only if
it satisfies \textbf{(a)} and \textbf{(b)} in Fig.~\ref{fig_SubstAlgAx} and
the diagram in Fig.~\ref{fig_ExtSubLem}.  
\end{proposition}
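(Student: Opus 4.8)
The plan is to reduce the biconditional to a single clean statement: since axioms \textbf{(a)} and \textbf{(b)} appear in both characterisations, it suffices to show that, for any triple $(X,\sigma,\nu)$, the diagram of Fig.~\ref{fig_ExtSubLem} commutes if and only if both \textbf{(c)} and \textbf{(d)} of Fig.~\ref{fig_SubstAlgAx} commute. The key observation is that the linear derived functor for the substitution signature is the binary coproduct $\Sigma_\mathsf{sub}^\dagger(X,Y) = \delta(Y)\hat\otimes X + \delta(X)\hat\otimes Y$ of Fig.~\ref{tab_derSubSigEndo}, and I expect the two coproduct summands of the extended substitution lemma to be exactly the exchange axiom \textbf{(d)} and the associativity axiom \textbf{(c)}, respectively. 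Notably, this core equivalence should not require \textbf{(a)} or \textbf{(b)} at all; those axioms are simply common to both sides.

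First I would check that every arrow of Fig.~\ref{fig_ExtSubLem} respects this binary coproduct: the maps $\mathbf{swap}^\mathsf{sub}$ and $\mathbf{str}^\mathsf{sub}$ are coproduct morphisms by their explicit descriptions in Fig.~\ref{tab_derSubSigEndo}, the map $\Sigma_\mathsf{sub}^\dagger(\mathrm{id},\sigma)$ is componentwise, $\sigma^\dagger = [\sigma,\sigma]$, and $(-)\hat\otimes X$ preserves coproducts by the distributivity of the Day tensor noted in Section~\ref{subsec_UniDisc}. Writing $\mathcal{L}^{-1} = [\mathbf{str},\mathbf{str'}]$ and $\mathbf{swap}^\mathsf{sub} = (\mathsf{swap}\hat\otimes\mathrm{id} + \mathrm{id})\,\mathcal{L}$, the source $\delta\Sigma_\mathsf{sub}(X)\hat\otimes X$ is exhibited via $\mathcal{L}\hat\otimes\mathrm{id}$ as the coproduct $\delta^2(X)\hat\otimes X\hat\otimes X + \delta(X)\hat\otimes\delta(X)\hat\otimes X$, with injections $\mathbf{str}\hat\otimes\mathrm{id}$ and $\mathbf{str'}\hat\otimes\mathrm{id}$. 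Since these injections are jointly epic, the outer diagram commutes if and only if its precomposite with each injection does.

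Restricting along the first injection, the $\mathsf{swap}$ inside $\mathbf{swap}^\mathsf{sub}$ and the symmetry exchanging the two $X$-factors inside $\mathbf{str}^\mathsf{sub}$ combine so that the top path reproduces the swapped side of \textbf{(d)}, while the bottom path, precomposed with $\mathbf{str}\hat\otimes\mathrm{id}$, reproduces its unswapped side; the restriction is therefore exactly \textbf{(d)}. Restricting along the second injection, $\mathbf{swap}^\mathsf{sub}$ acts as the identity and $\mathbf{str}^\mathsf{sub}$ merely reassociates, so the top path becomes $\mathrm{id}\hat\otimes\sigma$ followed by $\sigma$, while the bottom path, precomposed with $\mathbf{str'}\hat\otimes\mathrm{id}$, becomes the right side of \textbf{(c)}; the restriction is therefore exactly \textbf{(c)}. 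The coproduct universal property then yields the claimed equivalence. The hard part will be the symmetric-monoidal coherence bookkeeping in these two traces --- verifying that the symmetries and strengths packaged inside $\mathbf{swap}^\mathsf{sub}$ and $\mathbf{str}^\mathsf{sub}$ agree on the nose with the generators $\mathsf{swap}\hat\otimes\mathrm{id}$, $\mathrm{id}\hat\otimes\cong$, $\mathbf{str}\hat\otimes\mathrm{id}$, and $\mathbf{str'}\hat\otimes\mathrm{id}$ appearing in \textbf{(c)} and \textbf{(d)} --- but naturality of the strengths and coherence of the Day structure on $\mathcal{B}$ should reduce this to a routine chase.
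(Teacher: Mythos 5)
Your proposal is correct and follows the route the paper intends: the whole point of the linear derived functor $\Sigma_\mathsf{sub}^\dagger(X,Y)=\delta(Y)\hat\otimes X+\delta(X)\hat\otimes Y$ and the explicit $\mathbf{swap}^\mathsf{sub}$, $\mathbf{str}^\mathsf{sub}$ of Fig.~\ref{tab_derSubSigEndo} is that precomposing the diagram of Fig.~\ref{fig_ExtSubLem} with the jointly epic injections $\mathbf{str}\hat\otimes\mathrm{id}$ and $\mathbf{str'}\hat\otimes\mathrm{id}$ recovers exactly the exchange axiom \textbf{(d)} and the associativity axiom \textbf{(c)} of Fig.~\ref{fig_SubstAlgAx}, with \textbf{(a)} and \textbf{(b)} common to both characterisations. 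Your identification of which summand yields which axiom, and the observation that the equivalence does not use \textbf{(a)} or \textbf{(b)}, are both accurate.
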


\subsection{Linear abstract syntax}
\label{Subsection:LinearAbstractSyntax}

Recall from Proposition~\ref{prop_VSymOb} that the presheaf of variables, 
$V = \mathcal{Y}(\mathbf{1})$, is a symmetric object in $\mathcal{B}$. As
before, the abstract syntax for a binding signature is modelled by the free
\mbox{$\Sigma$-algebra} on the presheaf of variables $V$ and is denoted by 
$\varphi_V: \Sigma(\TV) \to \TV$, with $\eta_V : V \to \TV$.

\begin{lemma}\label{lem_TVLSubstAlg}
$\TV$ is equipped with a canonical linear substitution algebra structure.
\end{lemma}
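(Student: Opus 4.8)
The plan is to equip $\TV$ with both a generic-variable morphism $\nu$ and a single-variable substitution operation $\sigma$, and then to verify axioms \textbf{(a)}--\textbf{(d)} of Fig.~\ref{fig_SubstAlgAx}. For $\nu$, note that in the linear setting $\delta(V) \cong J$: a bijection $\mathbf{1} \to B \otimes \mathbf{1}$ exists only when $B \cong \mathbf{0}$, reflecting that a linear one-variable context consists of exactly the fresh variable. I would therefore take $\nu : J \xrightarrow{\cong} \delta(V) \xrightarrow{\delta(\eta_V)} \delta(\TV)$, picking out the generic variable.

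The central step is the construction of $\sigma : \Sigma_\mathsf{sub}(\TV) = \delta(\TV) \hat\otimes \TV \to \TV$ by generalised structural recursion. Using closedness of the Day convolution, I would transpose $\sigma$ to $\bar\sigma : \delta(\TV) \to (\TV \multimap \TV)$ and build $\bar\sigma$ by recursion on $\delta(\TV)$. The reason the linear case departs from the cartesian one is that $\delta$ is not monoidal for $\hat\otimes$, so $\delta\Sigma(\TV)$ is not $\Sigma\delta(\TV)$ but rather $\Sigma^\dagger(\TV, \delta(\TV))$ via the isomorphism $\mathbf{swap}$ built from the Leibniz isomorphism $\mathcal{L}$. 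Consequently $\delta(\TV)$ is naturally the \emph{initial algebra} for the endofunctor $S'(-) = \delta(V) + \Sigma^\dagger(\TV, -)$ --- precisely the \emph{Moreover} clause of Corollary~\ref{Corollary:GPSR}, applicable because the relevant swap/Leibniz isomorphism is invertible.

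I would then define $\bar\sigma = \mathsf{git}(\beta)$ as the unique $S'$-algebra homomorphism into $(\TV \multimap \TV, \beta)$, where $\beta$ has two components: on the summand $\delta(V) \cong J$, the identity (transposing $J \hat\otimes \TV \cong \TV$), capturing that substituting into the single hole returns the term; and on $\Sigma^\dagger(\TV, \TV \multimap \TV)$, the composite $\Sigma^\dagger(\TV, \TV \multimap \TV) \hat\otimes \TV \xrightarrow{\mathbf{str}} \Sigma^\dagger(\TV, (\TV \multimap \TV) \hat\otimes \TV) \to \Sigma^\dagger(\TV, \TV) = \Sigma^\dagger(\TV) \xrightarrow{\varphi_V^\dagger} \TV$, built from the strength $\mathbf{str}$ of the derived functor, Day evaluation, and the derived algebra $\varphi_V^\dagger$. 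Intuitively $\Sigma^\dagger(\TV, -)$ isolates the single operator-argument containing the substituted variable, so this component performs the substitution recursively there and then reapplies the operator; $\sigma$ is recovered as the transpose of $\bar\sigma$.

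Finally I would verify the axioms. Axiom \textbf{(a)} (left unit) is immediate from the $\delta(V)$-component of $\beta$ being the identity, and \textbf{(b)} (right unit) follows by unfolding $\nu$ against the recursion. For \textbf{(d)} (exchange) I would appeal to the symmetric structure: $\mathbf{swap}$ and the symmetry $\varsigma$ on $\delta$ are respected throughout the construction, so the required commutation transfers along the iterator. The main obstacle is axiom \textbf{(c)} (associativity, the substitution lemma). Here I would route through the equivalent formulation of Proposition~\ref{prop_linSubAlgAlt}, establishing the \emph{extended substitution lemma} of Fig.~\ref{fig_ExtSubLem}; the delicate part is the coherent interaction of $\mathbf{str}^\mathsf{sub}$ and $\mathbf{swap}^\mathsf{sub}$ with the derived-functor strength through the recursion, which I would discharge by exhibiting both sides of \textbf{(c)} as $S'$-algebra homomorphisms out of the initial algebra $\delta(\TV)$ and invoking uniqueness of the generalised iterator.
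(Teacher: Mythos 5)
Your construction follows the paper's route: $\nu$ is the transpose of $\eta_V$ across $(-)\hat\otimes V\dashv\delta$ (equivalently, $\delta(\eta_V)$ precomposed with $J\cong\delta(V)$), and $\sigma$ is obtained by transposing over the Day closed structure and applying generalised structural recursion (Corollary~\ref{Corollary:GPSR}) to the presentation of $\delta(\TV)$ as the free $\Sigma^\dagger(\TV,-)$-algebra on $\delta(V)\cong J$. The algebra on $\TV\multimap\TV$ you describe --- built from the derived-functor strength, Day evaluation, and $\varphi_V^\dagger$ --- is the intended one, and routing the verification of \textbf{(c)} and \textbf{(d)} through the extended substitution lemma of Proposition~\ref{prop_linSubAlgAlt} is exactly the strategy the paper makes explicit in the parallel proofs of Lemmas~\ref{lem_TVASubstAlg} and~\ref{lem_TVRSubstAlg}.

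The gap is in your last step. You propose to discharge the extended substitution lemma by exhibiting both sides as homomorphisms out of the initial $S'$-algebra $\delta(\TV)$, with $S'=\delta(V)+\Sigma^\dagger(\TV,-)$. But after transposing away the outer $\hat\otimes\,\TV$, the two sides of the diagram in Fig.~\ref{fig_ExtSubLem} are morphisms out of $\delta\Sigma_\mathsf{sub}(\TV)\cong\delta^2(\TV)\hat\otimes\TV+\delta(\TV)\hat\otimes\delta(\TV)$, not out of $\delta(\TV)$, and $\Sigma_\mathsf{sub}(\TV)$ carries no freeness of its own, so the uniqueness you invoke is not available at that object. Running the argument requires a further initiality statement --- that $\delta^2(\TV)$ is free over $\delta^2(V)$ for a functor obtained by deriving $\Sigma$ a second time, together with a treatment of the mixed summand $\delta(\TV)\hat\otimes\delta(\TV)$. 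This is precisely the \emph{second derived functor} that the paper names as the missing ingredient in the proof ideas for the affine and relevant analogues; the same device is needed here, and without it the appeal to uniqueness of the generalised iterator does not go through.
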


\begin{proposition}
The $\Sigma^\dagger(\TV,-)$-algebra,
\[\begin{tikzcd}[ampersand replacement=\&,cramped]
	{\Sigma^\dagger(\TV, \delta(\TV))} \& {\delta\Sigma(\TV)} \& {\delta(\TV)}
	\arrow["{\mathbf{swap}^{-1}}", from=1-1, to=1-2]
	\arrow["\cong"', draw=none, from=1-1, to=1-2]
	\arrow["{\delta(\varphi_V)}", from=1-2, to=1-3]
\end{tikzcd}\]
together with the morphism $\delta(\eta_V): \delta(V) \to \delta(\TV)$
present $\delta(\TV)$ as a free $\Sigma^\dagger(\TV,-)$-algebra over
$\delta(V) \cong J$.
\end{proposition}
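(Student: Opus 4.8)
The plan is to follow the blueprint of the analogous cartesian result (Section~\ref{CartesianAbstractSyntax}): first obtain the algebra structure by \emph{implicit differentiation} of the recursively defined syntax, and then establish freeness by generalised structural recursion. The guiding identity is the Leibniz/chain rule for context extension: differentiating the defining isomorphism of $\TV$ and rewriting the result with $\mathbf{swap}$ exhibits $\delta(\TV)$ as the least solution of the differentiated equation, exactly as for recursively defined species~\cite{J81,J86}.

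For the structure, I would observe that $[\eta_V,\varphi_V]$, being an initial $(V+\Sigma)$-algebra, is by Lambek's Lemma~\cite{L68} an isomorphism $V+\Sigma(\TV)\xrightarrow{\cong}\TV$. Applying $\delta$ ---which, as a left adjoint (Section~\ref{ContextExtension}), preserves the coproduct--- and then the swapping isomorphism $\mathbf{swap}\colon\delta\Sigma\xrightarrow{\cong}\Sigma^\dagger(-,\delta(-))$ of Section~\ref{subsec_linDerFun} gives
\[
\delta(\TV)\;\cong\;\delta V+\delta\Sigma(\TV)\;\cong\;\delta V+\Sigma^\dagger\big(\TV,\delta(\TV)\big),
\]
whose two components are precisely $\delta(\eta_V)$ (using $\delta V\cong J$) and $\delta(\varphi_V)\,\mathbf{swap}^{-1}$. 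This is the claimed $(\delta V+\Sigma^\dagger(\TV,-))$-algebra.

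Freeness is where the difficulty lies, and it is of a different nature than in the cartesian case. There $\delta$ distributes over the tensor, so $\mathbf{swap}\colon\delta\Sigma\cong\Sigma\delta$ factors through $\delta$ and one may read off an invertible $\psi\colon\delta(V+\Sigma(-))\cong(\delta V+\Sigma)\circ\delta(-)$, whence the initial $(\delta V+\Sigma)$-algebra follows from the ``Moreover'' clause of Corollary~\ref{Corollary:GPSR}. Here, by contrast, $\delta$ is a \emph{derivative}: under the Leibniz isomorphism $\mathcal{L}$ only one tensor factor is differentiated, so the undifferentiated factors of $\Sigma^\dagger(-,\delta(-))$ remain at the recursion variable and $\delta\Sigma$ does \emph{not} factor through $\delta$. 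Consequently no natural $\psi$ into $(\delta V+\Sigma^\dagger(\TV,-))\circ\delta(-)$ exists and the cartesian shortcut is unavailable; equivalently, the first argument of $\Sigma^\dagger(\TV,-)$ is a fixed parameter that is \emph{equal to} the object $\TV$ being recursed over, and this parameter cannot be recovered from an arbitrary algebra carrier. I expect to resolve this by supplying the parameter from the initial-algebra structure itself: writing $\TV=\operatorname{colim}_n T_n$ for the (finitary) initial-algebra chain, which $\delta$ preserves, the canonical inclusions $\iota_n\colon T_n\to\TV$ fill the first argument of $\Sigma^\dagger(\TV,-)$ at each stage. Given any $(\delta V+\Sigma^\dagger(\TV,-))$-algebra $(Z,h,\beta)$ one then builds $g=\operatorname{colim}_n g_n\colon\delta(\TV)\to Z$ with
\[
g_{n+1}=\big[\,h,\ \beta\circ\Sigma^\dagger(\iota_n,g_n)\,\big]\circ(\mathrm{id}+\mathbf{swap}),
\]
recursing only in the differentiated slot while threading the undifferentiated $\TV$-subterms unchanged into $\beta$. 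The main obstacle is then to verify that this colimit morphism is well defined and is the unique $(\delta V+\Sigma^\dagger(\TV,-))$-algebra morphism out of $\delta(\TV)$ ---that is, that $\delta(\TV)$ genuinely realises the \emph{initial} such algebra and not merely an algebra--- which I would settle by chain induction, checking that the $g_n$ form a cocone and that the two defining equations $g\,\delta(\eta_V)=h$ and $g\,\delta(\varphi_V)\,\mathbf{swap}^{-1}=\beta\,\Sigma^\dagger(\mathrm{id}_{\TV},g)$ hold and characterise $g$ uniquely.
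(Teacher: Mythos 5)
Your argument is correct in outline, but it establishes freeness by a genuinely different route from the one the paper sets up. The paper's intended proof is yet another application of Corollary~\ref{Corollary:GPSR}: take $F=\delta$ (a left adjoint), $S=V+\Sigma$ with initial algebra $\alpha=[\eta_V,\varphi_V]$ on $\TV$, and invoke the ``Moreover'' clause with the invertible $\psi=\mathrm{id}_{\delta V}+\mathbf{swap}$, so that the algebra in the statement is literally $F(\alpha)\,\psi_{\TV}^{-1}$. You are right that this does not typecheck naively---$\mathbf{swap}_X$ lands in $\Sigma^\dagger(X,\delta(X))$, which is not of the form $S'\circ\delta$ for a fixed endofunctor $S'$---but your conclusion that the ``cartesian shortcut is unavailable'' overstates the obstruction: it is repaired by the standard parameterisation trick of replacing $F=\delta$ by $\langle\mathrm{id},\delta\rangle\colon\mathcal{B}\to\mathcal{B}\times\mathcal{B}$ (still a left adjoint) and taking for $S'$ the triangular endofunctor $(X,Y)\mapsto\big(V+\Sigma(X),\,\delta(V)+\Sigma^\dagger(X,Y)\big)$, for which $\psi$ \emph{is} natural and invertible; the resulting initial $S'$-algebra is carried by $(\TV,\delta(\TV))$, and projecting onto the second component---the first being pinned to $\mathrm{id}_{\TV}$ by initiality of $\TV$---gives exactly the stated freeness. (This is the parameterised generalised iteration of Bird--Paterson and Matthes--Uustalu that the paper's remark after Corollary~\ref{Corollary:GPSR} alludes to.) Your alternative---presenting $\delta(\TV)$ as the $\delta$-image of the initial chain and building the fold stagewise via $g_{n+1}=[h,\beta\circ\Sigma^\dagger(\iota_n,g_n)]\circ(\mathrm{id}+\mathbf{swap}_{T_n})$---is sound: the cocone condition follows from naturality of $\mathbf{swap}$ in $X$ together with bifunctoriality of $\Sigma^\dagger$, and the two defining equations and uniqueness follow by the same chain induction. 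What it buys is concreteness (it is essentially the one-hole-context computation for recursive species); what it costs is the extra hypothesis that $\Sigma$ be $\omega$-cocontinuous (true here, but an additional verification) and a by-hand re-derivation of recursion theory that the adjoint-based Corollary~\ref{Corollary:GPSR} packages once and reuses uniformly across the cartesian, linear, affine, and relevant cases. Your identification of the algebra structure itself (Lambek, preservation of coproducts by $\delta$, and $\mathbf{swap}$) agrees with the paper.
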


\begin{definition}
A \emph{linear $\Sigma$-substitution algebra} is a quadruple 
$(X, \sigma, \nu, \alpha)$ where $(X, \sigma, \nu)$ is a linear substitution
algebra and $(X, \alpha)$ is a $\Sigma$-algebra such that the following
diagram commutes
\begin{equation}\label{diag_SubstAlgMorCond}
\begin{tikzcd}[ampersand replacement=\&,cramped, row sep=scriptsize]
	{\Sigma^\dagger(X,\Sigma_\mathsf{sub}(X))} \&\& {\Sigma^\dagger(X)} \\
	{\delta\Sigma(X)\hat\otimes X} \\
	{\Sigma_\mathsf{sub}(X)} \&\& X
	\arrow["{\Sigma^\dag(\mathrm{id}, \sigma)}", from=1-1, to=1-3]
	\arrow["{\alpha^\dagger}", from=1-3, to=3-3]
	\arrow["{\mathbf{str}(\mathbf{swap} \hat\otimes\mathrm{id})}", from=2-1, to=1-1]
	\arrow["{\delta(\alpha)\hat\otimes \mathrm{id}}"', from=2-1, to=3-1]
	\arrow["\sigma", from=3-1, to=3-3]
\end{tikzcd}
\end{equation}
\end{definition}

A morphism for such structures is a morphism in $\mathcal{B}$ that is both a
$\Sigma$-algebra homomorphism and a morphism of linear substitution algebras,
and we obtain the category \mbox{$\Sigma$-$\mathbf{LSubstAlg}$}.

\begin{theorem}
For a binding signature $\Sigma$, $(\TV, \sigma, \nu, \varphi_V)$ is an
initial object in $\Sigma$-$\mathbf{LSubstAlg}$.  
\end{theorem}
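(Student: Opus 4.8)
The plan is to follow exactly the pattern of the cartesian theorem, replacing the cartesian product and the distributive law $\mathbf{swap}\colon\delta\Sigma\to\Sigma\delta$ by the Day tensor and the linear derived functor $\Sigma^\dagger$ together with its swapping isomorphism and strength. First I would record that $(\TV,\sigma,\nu,\varphi_V)$ is an object of $\Sigma$-$\mathbf{LSubstAlg}$: that $(\TV,\sigma,\nu)$ is a linear substitution algebra is Lemma~\ref{lem_TVLSubstAlg}, that $(\TV,\varphi_V)$ is a $\Sigma$-algebra holds by construction, and the compatibility diagram~(\ref{diag_SubstAlgMorCond}) should commute by the very construction of $\sigma$ on $\TV$ from $\varphi_V$ and $\mathbf{swap}$; that is, for $(\TV,\ldots)$ the diagram~(\ref{diag_SubstAlgMorCond}) is essentially the recursion clause defining $\sigma$.

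Next I would construct the underlying morphism. Given any linear $\Sigma$-substitution algebra $(X,\sigma',\nu',\alpha')$, I would transpose $\nu'\colon J\to\delta(X)\cong V\multimap X$ across the Day adjunction $(-)\hat\otimes V\dashv V\multimap(-)$ and use $J\hat\otimes V\cong V$ to obtain a \emph{variables-as-terms} map $\hat\nu'\colon V\to X$. Then $[\hat\nu',\alpha']$ is a $(V+\Sigma)$-algebra structure on $X$, and initiality of $[\eta_V,\varphi_V]$ yields a unique $(V+\Sigma)$-homomorphism $h\colon\TV\to X$. By construction $h\,\varphi_V=\alpha'\,\Sigma(h)$, so $h$ is a $\Sigma$-homomorphism, while $h\,\eta_V=\hat\nu'$; since in the canonical structure $\nu$ transposes to $\eta_V$, naturality of transposition gives $\delta(h)\,\nu=\nu'$, so $h$ respects $\nu$.

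The main work is to show $h$ respects $\sigma$, i.e.\ $h\,\sigma=\sigma'\,\Sigma_\mathsf{sub}(h)$, and this is where Corollary~\ref{Corollary:GPSR} does the heavy lifting, exactly as signalled for the cartesian theorem. I would instantiate it with the adjunction $F=(-)\hat\otimes\TV\dashv\TV\multimap(-)=G$ on $\mathcal{B}$, with $S=\delta(V)+\Sigma^\dagger(\TV,-)$ whose initial algebra is $\delta(\TV)$ by the preceding proposition presenting $\delta(\TV)$ as the free $\Sigma^\dagger(\TV,-)$-algebra over $\delta(V)\cong J$, and with $S'(Z)=\TV+\Sigma^\dagger(\TV,Z)$. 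The transformation $\psi\colon FS\to S'F$ is assembled from distributivity of $\hat\otimes$ over $+$, the unit isomorphisms $\delta(V)\hat\otimes\TV\cong J\hat\otimes\TV\cong\TV$, and the strength $\mathbf{str}$ of $\Sigma^\dagger$; crucially $\psi$ is invertible because in the linear setting $Y$ occurs exactly once in each summand of $\Sigma^\dagger(\TV,Y)$, so $\mathbf{str}$ is an isomorphism. The \emph{moreover} clause of Corollary~\ref{Corollary:GPSR} then exhibits $\Sigma_\mathsf{sub}(\TV)=\delta(\TV)\hat\otimes\TV$ as an initial $S'$-algebra.

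It then suffices to equip $X$ with an $S'$-algebra structure $\beta\colon\TV+\Sigma^\dagger(\TV,X)\to X$ --- built from $h$ on the $\TV$-summand and from $\sigma'$, $(\alpha')^\dagger$, and the extended substitution lemma of Fig.~\ref{fig_ExtSubLem} on the $\Sigma^\dagger$-summand --- and to check that both $h\,\sigma$ and $\sigma'\,\Sigma_\mathsf{sub}(h)$ are $S'$-homomorphisms $\Sigma_\mathsf{sub}(\TV)\to(X,\beta)$; uniqueness of the generalised iterator then forces them equal. The hard part will be precisely these two verifications: for $h\,\sigma$ it reduces to the defining compatibility~(\ref{diag_SubstAlgMorCond}) of $\TV$ together with Proposition~\ref{prop_linSubAlgAlt}, while for $\sigma'\,\Sigma_\mathsf{sub}(h)$ it reduces to~(\ref{diag_SubstAlgMorCond}) for $X$ plus the already-established facts that $h$ is a $\Sigma$-homomorphism respecting $\nu$; naturality of $\mathbf{swap}$, $\mathbf{str}$ and $\psi$ is what glues the two diagram chases together. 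Finally, uniqueness of the whole morphism is immediate, since any $\Sigma$-$\mathbf{LSubstAlg}$-morphism out of $(\TV,\ldots)$ is in particular a $(V+\Sigma)$-homomorphism and hence equals $h$ by initiality of $[\eta_V,\varphi_V]$.
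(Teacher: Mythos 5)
Your proposal is correct and follows essentially the same route as the paper: the object structure from Lemma~\ref{lem_TVLSubstAlg}, the underlying morphism from initiality of the $(V+\Sigma)$-algebra $[\eta_V,\varphi_V]$, and the compatibility with $\sigma$ via Corollary~\ref{Corollary:GPSR} applied through the adjunction $(-)\hat\otimes \TV\dashv \TV\multimap(-)$ and the presentation of $\delta(\TV)$ as the free $\Sigma^\dagger(\TV,-)$-algebra over $\delta(V)\cong J$. The paper only records this as a proof idea, and your instantiation of the corollary (with $\psi$ invertible because the linear strength is an isomorphism, exhibiting $\delta(\TV)\hat\otimes \TV$ as an initial $S'$-algebra) is the intended elaboration.
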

\begin{proof}[Proof (idea)]
Lemma~\ref{lem_TVLSubstAlg} indicates that $(\TV, \sigma, \nu, \varphi_V)$ is
an object in \mbox{$\Sigma$-$\mathbf{LSubstAlg}$}.  The unique morphism to any
other linear $\Sigma$-substitution algebra is induced by the initial 
\mbox{$(V + \Sigma)$-algebra} $[\varphi_V, \eta_V]$. The fact that it is a
morphism in $\Sigma$-$\mathbf{LSubstAlg}$ follows by an application of
Corollary~\ref{Corollary:GPSR}.  
\end{proof}

\section{Affine Theory}
\label{Section:AffineTheory}

We consider the affine theory proceeding analogously to that of the previous
linear case.  Further details on the universe of discourse $\mathcal{I}$ may
be found in~\cite{FioreMoggiSangiorgiLICS,FioreMoggiSangiorgi}.

\subsection{Affine substitution algebras}

Recall from Proposition~\ref{prop_daySymMon} that the Day convolution in
$\mathcal{I}$ is semicartesian monoidal, so $J = 1$, and thus the tensor is
equipped with projections. By Proposition~\ref{prop_deltaSymEndo}, 
$\delta: \mathcal{I} \to \mathcal{I}$ is a symmetric pointed endofunctor.

\begin{definition}\label{defn_affSubAlg}
An \emph{affine substitution algebra} is a triple $(X, \sigma, \nu)$ where
$X$ is an object in $\mathcal{I}$, and $\sigma: \Sigma_\mathsf{sub}(X) \to X$
and $\nu: 1 \to \delta(X)$ are morphisms in $\mathcal{I}$ such that
\textbf{(a)}, \textbf{(b)}, \textbf{(c)}, \textbf{(d)} and \textbf{(e)} in
Fig.~\ref{fig_SubstAlgAx} commute.  
\end{definition}

In this definition, the axioms \textbf{(a)} and \textbf{(b)} are interpreted
as those of Definition~\ref{defn_cartSubstAlg}, while \textbf{(c)} and
\textbf{(d)} are the operad laws of Definition~\ref{defn_linSubstAlg}.  Axiom
\textbf{(e)} is the third axiom of Definition~\ref{defn_cartSubstAlg}.  In
Section~\ref{subsec_affDerFun}, we develop the notion of an 
\textit{affine derived functor} and express the axioms \textbf{(c)},
\textbf{(d)}, and \textbf{(e)} as an \emph{extended substitution lemma} that
embodies the associativity laws of affine single-variable substitution.

The morphisms of affine substitution algebras are similar to those of
cartesian and linear substitution algebras, and we obtain the category
$\mathbf{ASubstAlg}$.

In~\cite{TanakaPower}, Tanaka and Power develop a substitution tensor product
in $\mathcal{I}$, similar to those of~\cite{FPT} in $\mathcal{F}$
and~\cite{Kelly05} in $\mathcal{B}$, for which the monoids model simultaneous
substitution.

\begin{theorem}
The category of affine substitution algebras and the category of monoids for
the substitution tensor in $\mathcal{I}$ are equivalent.  
\end{theorem}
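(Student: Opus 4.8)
The plan is to establish the equivalence by exhibiting mutually inverse functors, invoking the standard principle---made precise in the cartesian case by Theorem~\ref{thm_CSubstAlgClnLaw}---that simultaneous substitution arises by iterating single-variable substitution, while single-variable substitution is recovered as a specialisation of the simultaneous operation. Write $\bullet$ for the substitution tensor on $\mathcal{I}$ of Tanaka and Power~\cite{TanakaPower} (analogous to those of~\cite{FPT} on $\mathcal{F}$ and~\cite{Kelly05} on $\mathcal{B}$), whose monoids $(X, m \colon X \bullet X \to X, e \colon V \to X)$ carry simultaneous substitution $m$ together with a variable-inclusion $e$; here the unit laws state that substituting variables is trivial and that substitution into a variable returns the substituend, while associativity of $m$ is the simultaneous substitution lemma.

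First I would define a functor $\Phi$ from monoids to affine substitution algebras. Given $(X, m, e)$, the morphism $\nu \colon 1 \to \delta(X)$ is obtained as the transpose of $e \colon V \to X$ under the isomorphism $\delta \cong V \multimap (-)$ of Section~\ref{ContextExtension} and the adjunction $(-) \hat\otimes V \dashv V \multimap (-)$, using $J = 1$. The single-variable substitution $\sigma \colon \delta(X) \hat\otimes X \to X$ is defined by specialising $m$: to substitute a term for the final variable of a term over an extended context, one feeds $m$ the tuple consisting of variable-inclusions $e$ in all earlier positions and the substituend in the last. This specialisation exploits the semicartesian projections available in $\mathcal{I}$ and the interaction between $\delta$ and $\bullet$. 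Axioms \textbf{(a)} and \textbf{(b)} of Fig.~\ref{fig_SubstAlgAx} then follow from the monoid unit laws, while the operad laws \textbf{(c)}, \textbf{(d)} and the weakening law \textbf{(e)} follow from associativity of $m$ together with the semicartesian structure.

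Conversely, I would define a functor $\Psi$ from affine substitution algebras to monoids. The unit $e \colon V \to X$ is recovered from $\nu$ by the same transposition, and the multiplication $m \colon X \bullet X \to X$ is built by iterating $\sigma$: the tensor $X \bullet X$ packages a term over an $n$-variable context together with $n$ substituends, and $m$ applies $\sigma$ successively to replace each variable. Coherence and order-independence of this iteration are exactly what the exchange law \textbf{(d)} and the affine derived functors of Section~\ref{subsec_affDerFun} control; the affine \emph{extended substitution lemma} developed there then yields associativity of $m$, while the weakening law \textbf{(e)} yields the unit laws. Both $\Phi$ and $\Psi$ act as the identity on underlying objects and send structure-preserving maps to structure-preserving maps, hence are functorial.

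Finally, I would verify that $\Phi$ and $\Psi$ are mutually inverse up to natural isomorphism, yielding the equivalence. The round-trip $\Psi\Phi$ recovers the monoid because a simultaneous substitution is determined by its single-variable specialisations via associativity and the unit; the round-trip $\Phi\Psi$ recovers the affine substitution algebra because specialising the iterated operation to a single variable returns $\sigma$, using unit law \textbf{(a)}. The main obstacle is the coherent iteration defining $m$ from $\sigma$ and the proof of monoid associativity: this demands careful bookkeeping of context management under repeated context extension $\delta$, and it is precisely here that the weakening law \textbf{(e)}---absent in the linear theory---is indispensable, guaranteeing that substituting for one variable leaves the remaining variables untouched, so that the iteration is well-defined and the round-trip identities hold.
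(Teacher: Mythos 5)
The paper states this theorem without proof (it appears immediately after the remark on Tanaka and Power's substitution tensor, with no argument given; the closest available comparison is the cartesian analogue, Theorem~\ref{thm_CSubstAlgClnLaw}, whose proof is deferred to~\cite{TACPaper}). Your overall strategy---specialising the monoid multiplication to single-variable substitution by filling all but one position with variable-inclusions, and conversely iterating $\sigma$ to rebuild the simultaneous operation, with the operad laws \textbf{(c)}, \textbf{(d)} controlling order-independence of the iteration---is the standard and surely the intended route, so at the level of the plan you are on target.

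Two points need repair or substantiation. First, you attribute the monoid unit laws to the weakening axiom \textbf{(e)}; in fact the unit laws come from \textbf{(a)} and \textbf{(b)} (substituting into a variable returns the substituend; substituting the generic variable returns the term), exactly as in the linear case where \textbf{(e)} is absent yet unital symmetric operads still arise. The genuine role of \textbf{(e)} is different and more delicate: the substitution tensor $X \bullet X$ in $\mathcal{I}$ is a coend quotienting by the action of injective context renamings, so a term weakened before being paired with substituends is identified with the unweakened term paired with fewer substituends. For the iterated $\sigma$ to descend to a morphism out of this coend, substitution into a weakened term must discard the substituend---which is precisely axiom \textbf{(e)}. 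Your closing paragraph gestures at this (``substituting for one variable leaves the remaining variables untouched'') but misstates what \textbf{(e)} says. Second, the well-definedness of the iteration as a single morphism $X \bullet X \to X$ (independence of the order of single-variable substitutions across a context of arbitrary size, and compatibility with the coend relations) is the technical heart of the argument and is only named, not carried out; without it the round-trip verifications cannot be completed. As a proof outline the proposal is sound, but these are the steps a full proof must supply.
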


\subsection{Affine derived functors}\label{subsec_affDerFun}

As was the case for linear theories, $\delta: \mathcal{I} \to \mathcal{I}$ is
not monoidal with respect to the Day convolution in $\mathcal{I}$. 
Observe that the morphism 
\[
[\mathbf{str}, \mathbf{str'}, \mathsf{up}]
: \delta(X) \hat\otimes Y + X \hat\otimes \delta(Y) + X\hat\otimes Y 
   \to
   \delta(X \hat\otimes Y)
\]
is an isomorphism, say with inverse
\begin{equation*}
\mathcal{K}
: \delta(X \hat\otimes Y) 
  \xrightarrow{\cong} 
  \delta(X) \hat\otimes Y + X \hat\otimes \delta(Y) + X\hat\otimes Y
\end{equation*}
Recursively applying $\mathcal{K}$ to a finitary $\hat\otimes$-product,
$\widehat\bigotimes_{i \in \mathbf{n}}X_i$, yields a natural isomorphism
$\mathcal{K}^n$.

As in Section~\ref{subsec_linDerFun}, we will consider a signature endofunctor
$\Sigma = \coprod_{\omega \in \Omega}\Sigma_\omega$ and apply $\mathcal{K}^k$
to $\delta\Sigma_\omega$, for each $\omega$, followed by $\mathsf{swap}^{n_i}$
to the newly introduced $\delta$.  To this end, we define
the \textit{affine derived functor} for $\Sigma_\omega$.

\begin{definition}
For an operator $\omega \in \Omega$, the \emph{affine derived functor} of 
$\Sigma_\omega$ is the bifunctor  
$\Sigma^\dagger_\omega : \mathcal{I} \times \mathcal{I} \to \mathcal{I}$
with $\Sigma^\dagger_\omega (X, Y)$ given by 
\begin{align*}\small
\coprod_{j \in \mathbf{k}}\!\!
  \left( \!\!
    \left( 
      \widehat{\bigotimes_{i \in \mathbf{j-1}}}\!\delta^{n_i}\!(X) 
    \!\!\right)
    \!\!\hat{\otimes}
    \delta^{n_j}\!(Y) 
    \hat\otimes\!\!
    \left( 
      \widehat{\bigotimes_{i \in \mathbf{k-j}}} \!\delta^{n_{j+i}}\!(X) 
    \!\!\right) 
  \!\!\right)
\!\!+\!\!\!\!
\coprod_{i \in \mathbf{k-1}} \!\!\!\Sigma_\omega(X) 
\end{align*}
\end{definition}

Observe that the first summand of the above expression is the formula for the
linear derived functor.  Thus, the affine derived functor is similarly
equipped with a swapping isomorphism and a strength:  
\begin{align*}
\mathbf{swap}_{\omega}X 
& = 
\delta\Sigma_\omega(X) 
  \xrightarrow{\mathbf{swap} \mathcal{K}^k}  
\Sigma^\dagger_\omega(X, \delta(X)) 
\\
\mathbf{str}_{X, Y, Z} 
& = 
\Sigma^\dagger_\omega(X, Y) \hat\otimes Z 
  \xrightarrow{\mathbf{str_\omega} + \pi_1} 
\Sigma^\dagger_\omega(X, Y \hat\otimes Z) 
\end{align*}

Define the \textit{affine derived functor} for a signature endofunctor as
$\Sigma^\dagger = \coprod_{\omega \in \Omega} \Sigma^\dagger_\omega$.  The
above morphisms, together with the facts that $\delta$ is monoidal with
respect to the cocartesian tensor and that the Day convolution distributes
over the cocartesian tensor, induce a swapping isomorphism and a strength:
\begin{align*}
\mathbf{swap}_X & : \delta\Sigma(X) \to \Sigma^\dagger(X, \delta(X))
\\
\mathbf{str}_{X,Y,Z}
& : \Sigma^\dagger(X, Y)\hat\otimes Z \to \Sigma^\dagger(X, Y \hat\otimes Z)
\end{align*}
In Fig.~\ref{tab_derSubSigEndo}, we consider the substitution
signature endofunctor, $\Sigma_\mathsf{sub} = \delta(-) \hat\otimes (-)$, to
illustrate the construction.

Evaluating the bifunctor $\Sigma^\dagger$ at the diagonal provides the
endofunctor on $\mathcal I$:
\begin{equation*}
\Sigma^\dagger(X) 
= \Sigma^\dagger(X, X) 
= \coprod_{\omega \in \Omega} 
    \coprod_{i \in \mathbf{k}} 
      \Sigma_\omega(X) 
      + 
      \coprod_{\omega \in \Omega} \coprod_{i \in \mathbf{k-1}} \Sigma_\omega(X)
\end{equation*}
Therefore, for each $\Sigma$-algebra, $\alpha: \Sigma(X) \to X$, we obtain a
$\Sigma^\dagger$-algebra, written 
$\alpha^\dagger : \Sigma^\dagger(X) \to (X)$, induced by the components of
$\alpha$. 

We may now state an analogous result to Proposition~\ref{prop_linSubAlgAlt}
which, using the isomorphism $\mathcal{K}$, captures the last three axioms of
Definition~\ref{defn_affSubAlg} as an extended substitution lemma.

\begin{proposition}\label{prop_affSubAlgAlt}
A triple $(X, \sigma, \nu)$ is an affine substitution algebra if and only if
it satisfies \textbf{(a)} and \textbf{(b)} in Fig.~\ref{fig_SubstAlgAx}, and
the diagram in Fig.~\ref{fig_ExtSubLem}.  
\end{proposition}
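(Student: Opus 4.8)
The plan is to use that $\mathbf{swap}^\mathsf{sub}$ is an isomorphism to split the single commuting condition of Fig.~\ref{fig_ExtSubLem} into one condition per coproduct summand, and then to match these with the remaining axioms \textbf{(c)}, \textbf{(d)} and \textbf{(e)} of Fig.~\ref{fig_SubstAlgAx}. Since $\mathbf{swap}^\mathsf{sub}$ is built from the isomorphisms $\mathcal{K}$ and $\mathsf{swap}$, and the Day tensor distributes over coproducts, precomposition with $\mathbf{swap}^\mathsf{sub}\hat\otimes\mathrm{id}$ exhibits the domain $\delta\Sigma_\mathsf{sub}(X)\hat\otimes X$ as the three-fold coproduct $\delta^2(X)\hat\otimes X\hat\otimes X + \delta(X)\hat\otimes\delta(X)\hat\otimes X + \delta(X)\hat\otimes X\hat\otimes X$ (cf.\ the $\mathcal{I}$-row of Fig.~\ref{tab_derSubSigEndo}). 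By the universal property of the coproduct, Fig.~\ref{fig_ExtSubLem} commutes if and only if it commutes after restriction to each of the three summands, and I would treat these three cases separately.

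For the first two summands I would observe that the first two components of the affine isomorphism $\mathcal{K}$ are precisely those of the Leibniz isomorphism $\mathcal{L}$, and that $\mathbf{str}^\mathsf{sub}$, $\Sigma_\mathsf{sub}^\dagger(\mathrm{id},\sigma)$ and $\sigma^\dagger$ restrict on them to the corresponding linear data. Consequently the two resulting conditions are identical to those produced by the linear extended substitution lemma, so the same summand-wise identification used in the proof of Proposition~\ref{prop_linSubAlgAlt} shows the first summand to be equivalent to the exchange axiom \textbf{(d)} and the second to the associativity axiom \textbf{(c)}.

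The genuinely new content is the third summand, arising from the $\mathsf{up}$-component of $\mathcal{K}$. Along the upper path, $\mathbf{str}^\mathsf{sub}$ acts here by the projection discarding the tensor factor (the $\pi_1$-summand of the affine strength), $\Sigma_\mathsf{sub}^\dagger(\mathrm{id},\sigma)$ acts as the identity since this summand does not mention the second argument, and $\sigma^\dagger$ then applies $\sigma$; the upper path is thus $\sigma\circ\pi_1 : (\delta(X)\hat\otimes X)\hat\otimes X \to X$. Along the lower path, the third component of $(\mathbf{swap}^\mathsf{sub})^{-1}$ is $\mathsf{up}_{\Sigma_\mathsf{sub}(X)}$, and naturality of $\mathsf{up}$ rewrites $\delta(\sigma)\circ\mathsf{up}_{\Sigma_\mathsf{sub}(X)}$ as $\mathsf{up}_X\circ\sigma$, so the lower path is $\sigma\circ(\mathsf{up}_X\hat\otimes\mathrm{id})\circ(\sigma\hat\otimes\mathrm{id})$. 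Hence the third condition is exactly the weakening axiom \textbf{(e)} precomposed with $\sigma\hat\otimes\mathrm{id}$.

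I expect the main obstacle to be this last step: promoting the precomposed form to \textbf{(e)} itself. One direction is immediate, since \textbf{(e)} plainly implies its precomposition by naturality of $\pi_1$. For the converse I would use the left-unit axiom \textbf{(a)}, which supplies the section $\nu\hat\otimes\mathrm{id}$ with $\sigma\circ(\nu\hat\otimes\mathrm{id})$ equal to the unitor $J\hat\otimes X\cong X$; precomposing the third condition with $(\nu\hat\otimes\mathrm{id})\hat\otimes\mathrm{id}$ and simplifying by \textbf{(a)} recovers \textbf{(e)} in full. Combining the three equivalences, Fig.~\ref{fig_ExtSubLem} together with \textbf{(a)} and \textbf{(b)} is equivalent to \textbf{(a)}, \textbf{(b)}, \textbf{(c)}, \textbf{(d)} and \textbf{(e)}, that is, to $(X,\sigma,\nu)$ being an affine substitution algebra.
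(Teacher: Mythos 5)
The paper states this proposition without proof, so there is nothing to compare against line by line; judged on its own terms, your argument is correct and follows the route the derived-functor machinery is evidently designed for: invert $\mathbf{swap}^\mathsf{sub}\hat\otimes\mathrm{id}$, distribute $\hat\otimes$ over the coproduct, and match the three summands of $\Sigma^\dagger_\mathsf{sub}(X,\delta(X))\hat\otimes X$ against \textbf{(d)}, \textbf{(c)}, and the weakening axiom. Your identification of the genuine subtlety is also the right one: the $\mathsf{up}$-summand of $\mathcal{K}$ only yields \textbf{(e)} precomposed with $\sigma\hat\otimes\mathrm{id}$ (via naturality of $\mathsf{up}$ and of $\pi_1$), and recovering \textbf{(e)} itself requires the section of $\sigma$ supplied by the left-unit axiom \textbf{(a)} together with coherence of the semicartesian unitors --- which is precisely why \textbf{(a)} must be retained as a separate hypothesis in the statement. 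The only caveat is that the first two summands are discharged by appeal to the (likewise unproved) Proposition~\ref{prop_linSubAlgAlt}; this is legitimate since the first two components of $\mathcal{K}$ are those of $\mathcal{L}$ and the affine $\mathbf{str}^\mathsf{sub}$ and $\sigma^\dagger$ restrict to the linear data there, but a fully self-contained write-up would still need to carry out that summand-wise computation once.
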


\subsection{Affine abstract syntax}

Recall from Proposition~\ref{prop_VSymOb} that the presheaf of variables 
$V = \mathcal{Y}(\mathbf{1})$ is a symmetric copointed object in
$\mathcal{I}$. The abstract syntax of a binding signature is modelled as the
free $\Sigma$-algebra on $V$, denoted by $\varphi_V: \Sigma(\TV) \to \TV$,
with $\eta_V : V \to \TV$.

\begin{lemma}\label{lem_TVASubstAlg}
$\TV$ is equipped with a canonical affine substitution algebra structure.
\end{lemma}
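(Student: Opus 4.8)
The plan is to follow the template already established for the cartesian (Lemma~\ref{lem_TVCSubstAlg}) and linear (Lemma~\ref{lem_TVLSubstAlg}) cases, now using the affine derived functor $\Sigma^\dagger$ of Section~\ref{subsec_affDerFun} in place of its linear counterpart. First I would supply the generic-variable structure $\nu : 1 \to \delta(\TV)$ as the composite of $\delta(\eta_V)$ with the canonical point $1 \to \delta(V)$ picking out the freshly bound variable (recall that in the affine setting $J = 1$ and that the copointed structure on $V$ from Proposition~\ref{prop_VSymOb} distinguishes such a point). This part is routine. I would also first establish, exactly as for the linear proposition preceding Lemma~\ref{lem_TVLSubstAlg}, that $\delta(\TV)$ is the free $\Sigma^\dagger(\TV,-)$-algebra over $\delta(V)$, since this freeness drives the main construction.

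The substantial construction is that of the substitution operation $\sigma : \Sigma_\mathsf{sub}(\TV) = \delta(\TV) \hat\otimes \TV \to \TV$, which I would obtain by generalised structural recursion. Concretely, I would instantiate Corollary~\ref{Corollary:GPSR} with the closed structure $F = (-) \hat\otimes \TV \dashv \TV \multimap (-) = G$ of the Day convolution, so that $F$ applied to the initial $S$-algebra $\delta(\TV)$ (for $S = \delta(V) + \Sigma^\dagger(\TV,-)$) yields precisely $\Sigma_\mathsf{sub}(\TV)$. The natural transformation $\psi : F S \to S' F$ required by the corollary is built, after distributing $\hat\otimes$ over the coproduct, from the strength $\mathbf{str}$ of $\Sigma^\dagger$, which transports the substituend into each binding subcontext; taking $S'(Z) = \delta(V)\hat\otimes\TV + \Sigma^\dagger(\TV, Z)$, the target algebra $\beta : S'(\TV) \to \TV$ is then assembled from the variable clause together with the derived algebra $\varphi_V^\dagger : \Sigma^\dagger(\TV) \to \TV$ for the constructor case. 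The resulting generalised iterator is $\sigma$.

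Verification of the axioms then splits. Axioms \textbf{(a)} and \textbf{(b)} of Fig.~\ref{fig_SubstAlgAx} are read off essentially directly from the variable clause and the defining recursion of $\sigma$. For \textbf{(c)}, \textbf{(d)} and \textbf{(e)} I would invoke Proposition~\ref{prop_affSubAlgAlt}, which repackages exactly these three axioms as the single extended substitution lemma of Fig.~\ref{fig_ExtSubLem}. That lemma I would prove by a structural-recursion argument: both composites around the diagram are morphisms out of the free algebra $\delta(\TV) \hat\otimes \TV$, so by the uniqueness clause of Corollary~\ref{Corollary:GPSR} it suffices to exhibit both composites as generalised iterators for one and the same $\Sigma^\dagger$-algebra, i.e. to check agreement on the base and recursion clauses.

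The main obstacle is the extended substitution lemma itself. The diagram chase must reconcile three pieces of structure --- the swapping isomorphism $\mathbf{swap}^\mathsf{sub}$, the strength $\mathbf{str}^\mathsf{sub}$, and the associativity of $\sigma$ --- and, crucially, it must account for the extra summands carried by the affine derived functor (those arising from the isomorphism $\mathcal{K}$ rather than $\mathcal{L}$, together with the $\pi_1$-component of its strength). Relative to the linear case, the genuinely new work is checking that this weakening summand is compatible with axiom \textbf{(e)}, so that the three affine axioms collapse to the same extended substitution lemma; this is where the bulk of the effort lies, and where naturality of $\mathbf{str}$ and $\mathbf{swap}$ together with the coherence of the semicartesian Day structure must be used carefully.
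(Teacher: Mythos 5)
Your overall strategy coincides with the paper's: $\nu$ is obtained by transposing $\eta_V$ across $(-)\hat\otimes V \dashv \delta$ (equivalently, your composite through the point of $\delta(V)\cong V+1$), a basic substitution $\beta:\delta(V)\hat\otimes \TV\to\TV$ handles the variable clause, $\sigma$ is produced by Corollary~\ref{Corollary:GPSR} from the presentation of $\delta(\TV)$ as the free $\Sigma^\dagger(\TV,-)$-algebra on $\delta(V)$, and the axioms beyond \textbf{(a)} and \textbf{(b)} are discharged through Proposition~\ref{prop_affSubAlgAlt}. This is the right skeleton.

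There is, however, one concrete gap in the final step. You propose to prove the extended substitution lemma by a uniqueness-of-iterators argument ``out of the free algebra $\delta(\TV)\hat\otimes\TV$,'' but the source of the diagram in Fig.~\ref{fig_ExtSubLem} is $\delta\Sigma_{\mathsf{sub}}(\TV)\hat\otimes\TV = \delta(\delta(\TV)\hat\otimes\TV)\hat\otimes\TV$, which involves a \emph{doubly} extended context $\delta^2(\TV)$. The free-algebra presentation you have established (via $\mathbf{swap}:\delta\Sigma\cong\Sigma^\dagger(-,\delta(-))$ and one application of $\mathcal{K}$) only controls a single application of $\delta$; to exhibit both composites as generalised iterators out of a common initial algebra you must first present $\delta^2(\TV)$ (equivalently $\delta$ of the derived presentation) as a free algebra, which requires iterating the Leibniz-type analysis once more. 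This is exactly the \emph{affine second derived functor} that the paper's proof explicitly introduces and that your proposal omits; without it the uniqueness clause of Corollary~\ref{Corollary:GPSR} does not apply to the diagram as stated. Your identification of the weakening summands of $\mathcal{K}$ and the $\pi_1$-component of the strength as the genuinely new bookkeeping relative to the linear case is accurate, but that bookkeeping has to be carried out at the level of this second derived functor, not of $\Sigma^\dagger$ alone.
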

\begin{proof}[Proof (idea)]
The morphism $\nu : 1 \to \delta(\TV)$ is the transpose of $\eta_V$ over the
adjunction $(-) \hat\otimes V \dashv \delta$. 
One then observes that $\delta(V) \cong V + 1$ and defines 
\textit{basic substitution} as in the cartesian case:
\begin{equation*}
\beta 
\, = \, 
\delta(V) \hat\otimes \TV 
  \xrightarrow{\cong} 
V \hat\otimes \TV + 1 \hat\otimes \TV 
  \xrightarrow{[\eta_V\pi_1, \pi_2]} 
\TV
\end{equation*}
The proof is concluded by invoking Proposition~\ref{prop_affSubAlgAlt} and
appropriately defining an \textit{affine second derived functor}. 
\end{proof}

Denote the canonical affine substitution algebra on $\TV$ by 
$(\TV, \sigma, \nu)$.

\begin{proposition}
The $\Sigma^\dagger(\TV, -)$-algebra, 
\[\begin{tikzcd}[ampersand replacement=\&,cramped]
	{\Sigma^\dagger(\TV, \delta(\TV))} \& {\delta\Sigma(\TV)} \& {\delta(\TV)}
	\arrow["{\mathbf{swap}^{-1}}", from=1-1, to=1-2]
	\arrow["\cong"', draw=none, from=1-1, to=1-2]
	\arrow["{\delta(\varphi_V)}", from=1-2, to=1-3]
\end{tikzcd}\] 
together with the morphism $\delta(\eta_V) : \delta(V) \to \delta(\TV)$ 
present $\delta(\TV)$ as a free $\Sigma^\dagger(\TV,-)$-algebra over 
$\delta(V) \cong V+1$.
\end{proposition}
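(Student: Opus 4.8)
The plan is to mirror the linear case (the preceding proposition), adapting it to the affine derived functor. First I would record that the displayed structure is well defined: by the construction in Section~\ref{subsec_affDerFun}, the affine swapping morphism $\mathbf{swap}_{\TV}\colon \delta\Sigma(\TV)\to\Sigma^\dagger(\TV,\delta(\TV))$ is an isomorphism (it is assembled from $\mathcal{K}$ and the $\mathsf{swap}^{n_i}$), so $\delta(\varphi_V)\,\mathbf{swap}_{\TV}^{-1}$ is a genuine $\Sigma^\dagger(\TV,-)$-algebra on $\delta(\TV)$; together with $\delta(\eta_V)\colon\delta(V)\to\delta(\TV)$ and the identification $\delta(V)\cong V+1$ already used in Lemma~\ref{lem_TVASubstAlg}, this exhibits $\delta(\TV)$ as a $\big(\delta(V)+\Sigma^\dagger(\TV,-)\big)$-algebra. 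It then remains to prove that this algebra is initial, i.e.\ free over $\delta(V)$.

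For initiality I would invoke the generalised structural recursion of Corollary~\ref{Corollary:GPSR}. Take $F=\delta$, which is a left adjoint (Section~\ref{ContextExtension}) with right adjoint $G$, and hence preserves coproducts and the initial-algebra colimit; take $S=V+\Sigma$, whose initial algebra is $[\eta_V,\varphi_V]$ with carrier $A=\TV$; and take the target endofunctor $S'=\delta(V)+\Sigma^\dagger(\TV,-)$ on $\mathcal{I}$. The comparison $\psi\colon FS\to S'F$ is built from the canonical isomorphism $\delta(V+\Sigma(-))\cong\delta(V)+\delta\Sigma(-)$ followed by $\mathrm{id}_{\delta(V)}+\mathbf{swap}$; since $\delta$ preserves coproducts and $\mathbf{swap}$ is invertible, $\psi$ is invertible, so the \emph{Moreover} clause of Corollary~\ref{Corollary:GPSR} identifies $F(\alpha)\,\psi^{-1}=[\,\delta(\eta_V),\ \delta(\varphi_V)\,\mathbf{swap}^{-1}\,]$ as the initial $S'$-algebra, which is precisely the displayed structure. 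Concretely, this amounts to transposing a candidate homomorphism $h\colon\delta(\TV)\to B$ across $\delta\dashv G$ to $\bar h\colon\TV\to G(B)$ and checking, via $\mathbf{swap}$ and the strength $\mathbf{str}$ of Fig.~\ref{tab_derSubSigEndo}, that the two homomorphism equations for $h$ become exactly the equations making $\bar h$ the unique $(V+\Sigma)$-algebra homomorphism out of the initial $\TV$ that extends the transpose of the $\delta(V)$-component.

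The step I expect to be the main obstacle is reconciling the fixed first argument of $\Sigma^\dagger(\TV,-)$ with the generic first argument carried by $\mathbf{swap}_X\colon\delta\Sigma(X)\cong\Sigma^\dagger(X,\delta(X))$: the recursion must act only on the distinguished (second) argument, while the remaining arguments are transported as the constant context $\TV$. Making this one-sided recursion coherent with the $S'$-algebra structure at $A=\TV$ is exactly what the second-argument strength of the affine derived functor provides, and it is the reason $\Sigma^\dagger$ (rather than $\Sigma$ itself, as in the cartesian case) must appear. A secondary subtlety special to the affine setting is the extra summand $\coprod_{i\in\mathbf{k-1}}\Sigma_\omega$ of $\Sigma^\dagger$, coming from weakening; I would verify its compatibility using the semicartesian projections built into $\mathbf{str}$, so that the whole verification reduces to the naturality of the affine swap and strength together with the universal property of $\TV$. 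The argument is otherwise identical to the linear case, \emph{mutatis mutandis}.
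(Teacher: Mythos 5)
You have correctly identified both the intended tool (the \emph{Moreover} clause of Corollary~\ref{Corollary:GPSR} applied to the initial $(V+\Sigma)$-algebra along the left adjoint $F=\delta$) and the crux of the matter, namely the mismatch between the fixed first argument of $\Sigma^\dagger(\TV,-)$ and the varying first argument of $\mathbf{swap}_X\colon\delta\Sigma(X)\xrightarrow{\cong}\Sigma^\dagger(X,\delta(X))$. However, your resolution of that crux does not work. The corollary requires a \emph{natural transformation} $\psi\colon FS\to S'F$ --- its components are needed at $G(B)$ for an arbitrary $S'$-algebra $B$, not only at $\TV$ --- and the composite you propose, $\delta(V+\Sigma(X))\cong\delta(V)+\delta\Sigma(X)\xrightarrow{\mathrm{id}+\mathbf{swap}_X}\delta(V)+\Sigma^\dagger(X,\delta(X))$, lands in $\Sigma^\dagger(X,\delta(X))$ rather than in $S'F(X)=\delta(V)+\Sigma^\dagger(\TV,\delta(X))$; there is no natural comparison $\Sigma^\dagger(X,\delta(X))\to\Sigma^\dagger(\TV,\delta(X))$ for general $X$. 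The second-argument strength of $\Sigma^\dagger$ is of no help here: it modifies the second argument only (it is what the \emph{subsequent} result, Lemma~\ref{lem_TVASubstAlg}, needs in order to carry a parameter through the recursion), so it cannot repair the first-argument mismatch. As written, the instance of Corollary~\ref{Corollary:GPSR} you invoke is not well-formed, and this is precisely the step where the substructural cases genuinely differ from the cartesian one (where $\mathbf{swap}\colon\delta\Sigma\to\Sigma\delta$ is an isomorphism of endofunctor composites and the naive instantiation does go through).

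The repair is to fold the first argument into the recursion. Apply Corollary~\ref{Corollary:GPSR} with $F=\langle\mathrm{id},\delta\rangle\colon\mathcal{I}\to\mathcal{I}\times\mathcal{I}$ (a left adjoint, with right adjoint $(X,Y)\mapsto X\times G_\delta(Y)$ for $G_\delta$ the right adjoint of $\delta$), with $S=V+\Sigma$ as before, and with the triangular endofunctor $S'(X,Y)=\bigl(V+\Sigma(X),\,\delta(V)+\Sigma^\dagger(X,Y)\bigr)$ on $\mathcal{I}\times\mathcal{I}$. Now $\psi=\bigl(\mathrm{id},\,(\mathrm{id}+\mathbf{swap})\circ\cong\bigr)$ \emph{is} an invertible natural transformation $FS\to S'F$, by naturality of $\mathbf{swap}$ in its single variable, so the \emph{Moreover} clause exhibits $\bigl([\eta_V,\varphi_V],[\delta(\eta_V),\delta(\varphi_V)\,\mathbf{swap}^{-1}]\bigr)$ on $(\TV,\delta(\TV))$ as an initial $S'$-algebra; the standard description of initial algebras of such triangular endofunctors (the second component of the initial $S'$-algebra is the initial algebra of $\delta(V)+\Sigma^\dagger(\TV,-)$) then yields exactly the stated proposition. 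Your remaining remark --- that the extra weakening summand $\coprod_{i\in\mathbf{k-1}}\Sigma_\omega$ of the affine $\Sigma^\dagger$ needs attention via the semicartesian projections in $\mathbf{str}$ --- again concerns the strength and is not needed for this proposition; that summand is handled automatically by $\mathcal{K}$ and the coproduct structure in the construction of $\mathbf{swap}$.
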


\begin{definition}
An \emph{affine $\Sigma$-substitution algebra} is a quadruple 
$(X, \sigma, \nu, \alpha)$ where $(X, \sigma, \nu)$ is an affine substitution
algebra and $(X, \alpha)$ is a $\Sigma$-algebra such that
(\ref{diag_SubstAlgMorCond}) commutes.
\end{definition}

A morphism for such structures is a morphism in $\mathcal{I}$ that is both a
$\Sigma$-algebra homomorphism and a morphism of affine substitution algebras.
We obtain the category $\Sigma$-$\mathbf{ASubstAlg}$.

\begin{theorem}
For a signature $\Sigma$, $(\TV, \sigma, \nu , \varphi_V)$ is an initial
object in $\Sigma$-$\mathbf{ASubstAlg}$.  
\end{theorem}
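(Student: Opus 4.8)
The plan is to follow the template already established for the cartesian and linear cases verbatim. By Lemma~\ref{lem_TVASubstAlg}, the quadruple $(\TV, \sigma, \nu, \varphi_V)$ is an object of $\Sigma$-$\mathbf{ASubstAlg}$, so the entire task reduces to producing, for each affine $\Sigma$-substitution algebra $(X, \sigma', \nu', \alpha')$, a unique morphism $h \colon \TV \to X$ and verifying it lies in the category. The candidate $h$ is forced by initiality of $(V+\Sigma)$: first I would transpose $\nu' \colon 1 \to \delta(X)$ over the adjunction $(-)\hat\otimes V \dashv \delta$ to obtain a variable-assignment $v' \colon V \cong 1 \hat\otimes V \to X$, which together with $\alpha'$ equips $X$ with a $(V+\Sigma)$-algebra structure; then $h$ is taken to be the unique homomorphism out of the initial $(V+\Sigma)$-algebra $[\eta_V, \varphi_V]$.

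With $h$ so defined, two of the three required compatibilities are immediate. By construction $h$ is a $\Sigma$-homomorphism, $h\,\varphi_V = \alpha'\,\Sigma(h)$, so it is a $\Sigma$-algebra morphism; and transposing the base-case equation $h\,\eta_V = v'$ back over $(-)\hat\otimes V \dashv \delta$ yields the variable-compatibility $\delta(h)\,\nu = \nu'$. Uniqueness of $h$ is likewise immediate from initiality. The only genuine content is therefore the substitution-compatibility $h\,\sigma = \sigma'\,\Sigma_\mathsf{sub}(h)$.

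The hard part will be exactly this last equation, and I would prove it by generalised structural recursion on the term being substituted into, using that $\delta(\TV)$ is the free $\Sigma^\dagger(\TV,-)$-algebra over $\delta(V) \cong V+1$ supplied by the preceding proposition. Concretely, I would apply Corollary~\ref{Corollary:GPSR} with $S = \Sigma^\dagger(\TV,-)$, assembling a target $\Sigma^\dagger$-algebra on $X$ from $\sigma'$ and $(\alpha')^\dagger$ through the derived-functor morphisms $\mathbf{swap}^\mathsf{sub}$ and $\mathbf{str}^\mathsf{sub}$; the induced generalised iterator is then characterised by a single recursion clause that is precisely diagram~(\ref{diag_SubstAlgMorCond}). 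Since $(\TV, \sigma, \nu, \varphi_V)$ satisfies the analogous clause --- this being exactly its membership in $\Sigma$-$\mathbf{ASubstAlg}$ via Lemma~\ref{lem_TVASubstAlg} --- both $h\,\sigma$ and $\sigma'\,\Sigma_\mathsf{sub}(h)$ solve one and the same universal problem and so coincide by the uniqueness clause of the corollary. I expect the main obstacle to be purely bookkeeping: namely, threading the extra weakening summand recorded by the affine Leibniz-type isomorphism $\mathcal{K}$ (absent in the linear case) correctly through $\Sigma^\dagger$ and axiom~\textbf{(e)} when checking that the assembled target structure is a genuine $\Sigma^\dagger(\TV,-)$-algebra.
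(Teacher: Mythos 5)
Your proposal matches the paper's own argument: membership of $(\TV,\sigma,\nu,\varphi_V)$ in $\Sigma$-$\mathbf{ASubstAlg}$ via Lemma~\ref{lem_TVASubstAlg}, the candidate morphism forced by initiality of the $(V+\Sigma)$-algebra $[\eta_V,\varphi_V]$, and the substitution-compatibility established by an application of Corollary~\ref{Corollary:GPSR} through the free $\Sigma^\dagger(\TV,-)$-algebra structure on $\delta(\TV)$. The paper only records this as a proof idea, and your elaboration (including the uniqueness-of-iterator argument and the bookkeeping for the extra weakening summand in $\mathcal{K}$) is a faithful filling-in of that sketch rather than a different route.
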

\begin{proof}[Proof (idea)]
That $(\TV, \sigma, \nu, \varphi_V)$ is 
in $\Sigma$-$\mathbf{ASubstAlg}$ follows from Lemma~\ref{lem_TVASubstAlg}. 
The unique morphism to any other affine $\Sigma$-substitution algebra is
induced by the initial \mbox{$(V + \Sigma)$-algebra} $[\varphi_V, \eta_V]$.
The fact that it is a morphism in $\Sigma$-$\mathbf{ASubstAlg}$ follows by an
application of Corollary~\ref{Corollary:GPSR}.  
\end{proof}

\section{Relevant Theory}
\label{Section:RelevantTheory}

\subsection{Relevant substitution algebras}

Recall from Proposition~\ref{prop_daySymMon} that the Day convolution on
$\mathcal{S}$ is relevant monoidal and thus it is equipped with diagonals. 
By Proposition~\ref{prop_deltaSymEndo}, $\delta: \mathcal{S} \to \mathcal{S}$
is a symmetric multiplicative endofunctor. In this case, $\delta$ is a lax
monoidal endofunctor with respect to the Day convolution, witnessed by the
natural transformation
\begin{equation*}
\rho_{X, Y} 
= 
\delta(X) \hat\otimes \delta(Y) 
  \xrightarrow{\delta(\mathbf{str'})\mathbf{str}} 
\delta^2(X \hat\otimes Y) 
  \xrightarrow{\mathsf{cont}} 
\delta(X \hat\otimes Y)
\end{equation*}

\begin{definition}\label{defn_revSubAlg}
A \emph{relevant substitution algebra} is a triple $(X, \sigma, \nu)$ where
$X$ is an object $\mathcal{S}$, and $\sigma: \Sigma_\mathsf{sub}(X) \to X$ and
$\nu : J \to \delta(X)$ are morphisms in $\mathcal{S}$ such that \textbf{(a)},
\textbf{(b)}, \textbf{(c)}, \textbf{(d)}, and \textbf{(f)} in
Fig.~\ref{fig_SubstAlgAx} commute.  
\end{definition}

In this definition, axioms \textbf{(a)} and \textbf{(b)} are the same as those
of Definition~\ref{defn_linSubstAlg} and axioms \textbf{(c)} and \textbf{(d)}
are the operad laws.  Axiom \textbf{(f)} is precisely the substitution lemma
of Definition~\ref{defn_cartSubstAlg}.  In the next section we will again
develop derived functors for this theory, providing an alternative
axiomatisation which captures axioms \textbf{(c)}, \textbf{(d)}, and
\textbf{(f)} as an extended substitution lemma.

The morphisms of relevant substitution algebras are similar to those of the
previous cases, and we obtain the category $\mathbf{RSubstAlg}$.

\subsection{Relevant derived functors}
\label{RelevantDerivedFunctors}

Observe that the morphism 
\[
[\mathbf{str}, \mathbf{str'}, \rho]
 \colon 
 \delta(X) \hat\otimes Y + X \hat\otimes \delta(Y) 
   + \delta(X) \hat\otimes \delta(Y)
 \to 
 \delta(X \hat\otimes Y)
\]
is an isomorphism, say with inverse
\begin{equation*}
\mathcal{H} 
: 
\delta(X \hat\otimes Y) 
  \xrightarrow{\cong} 
\delta(X) \hat\otimes Y + X \hat\otimes \delta(Y) 
  + \delta(X) \hat\otimes \delta(Y)
\end{equation*}

As before, we apply $\mathcal{H}$ recursively to a finitary
$\hat\otimes$-product, $\widehat{\bigotimes}_{i \in \mathbf{n}}X_i$.  The
codomain of the resulting isomorphism is the coproduct over all instances of
the finitary product, where at least one of the $X_i$ has $\delta$ applied to
it.  To express this formally, let $S(n) = \{0, 1\}^n \setminus \{0\}^n$ be 
the set of all binary $n$-tuples excluding the tuple containing only $0$s.
Note that this set is equipped with $n$ projections 
$\pi_i : S(n) \to \{0, 1\}$. Then,
\begin{equation*}\textstyle
\mathcal{H}^n 
: 
\delta\left( \widehat{\bigotimes_{i \in \mathbf{n}}} X_i \right) 
  \xrightarrow{\cong} 
\coprod_{j \in S(n)} 
  \widehat{\bigotimes_{i \in \mathbf{n}}} \delta^{\pi_i(j)}(X_i) 
\end{equation*}

Considering the signature endofunctor 
$\Sigma = \coprod_{\omega \in \Omega} \Sigma_\omega$ we aim to define a
\textit{relevant derived functor} for each $\Sigma_\omega$. For objects $X$
and $Y$ in $\mathcal{S}$, let $\ell_{X,Y}: \{0, 1 \} \to \{X, Y\}$ be the
labelling function defined by $\ell_{X,Y}(0) = X$ and $\ell_{X,Y}(1) = Y$.

\begin{definition}
For an operator $\omega \in \Omega$, the \emph{relevant derived functor} of
$\Sigma_\omega$ is the bifunctor 
$\Sigma^\dagger_\omega : \mathcal{S} \times \mathcal{S} \to \mathcal{S}$
defined by
\begin{equation*}\textstyle
\Sigma^\dagger_\omega (X, Y) 
= \coprod_{j \in S(k)}
    \widehat{\bigotimes_{i \in \mathbf{k}}}\ \delta^{n_i}(\ell_{X,Y}(\pi_i(j)))
\end{equation*}
\end{definition}

As previously, the relevant derived functor is canonically equipped with a
swapping isomorphism 
\begin{equation*}
\mathbf{swap}_X\!\colon \!\!\delta\Sigma_\omega\!(X) \!\!
\xrightarrow{\!\mathcal{H}^k\!\!} \!\!\!\!\!
\coprod_{j \in S(k)} \!
  \widehat{\bigotimes_{i \in \mathbf{k}}} 
    \delta^{\pi_i\!(j)}\!\delta^{n_i}\!(X)\!\! 
\xrightarrow{\!\coprod \!\widehat\bigotimes \mathsf{swap}^{n_i}\!\!\!} \!\!
\Sigma^\dagger_\omega\!(\!X,\! \delta(X)\!)
\end{equation*}
The Day convolution has diagonals and distributes over the cocartesian tensor,
so the strength of $\delta$ equips the relevant derived functor with a
strength in its second argument: 
\[
\mathbf{str}_{X,Y,Z} 
: \Sigma^\dagger_\omega (X, Y) \hat\otimes Z 
  \to 
  \Sigma^\dagger_\omega(X, Y \hat\otimes Z)
\]

The \textit{relevant derived functor for a signature} endofunctor is
$\Sigma^\dagger = \coprod_{\omega \in \Omega} \Sigma^\dagger$.  It is equipped
with a swapping isomorphism and a strength:
\begin{align*}
\mathbf{swap}_X & 
: \delta\Sigma(X) \xrightarrow{\cong} \Sigma^\dagger(X, \delta(X))
\\
\mathbf{str}_{X, Y, Z}  & 
: \Sigma^\dagger(X, Y) \hat\otimes Z \to \Sigma^\dagger(X, Y \hat\otimes Z)
\end{align*}
In Fig.~\ref{tab_derSubSigEndo}, we consider the substitution signature
endofunctor $\Sigma_\mathsf{sub} = \delta(-) \hat\otimes -$, to illustrate
the construction.

Evaluating the bifunctor $\Sigma^\dagger$ at the diagonal yields the
endofunctor on $\mathcal{S}$:
\begin{equation*}\textstyle
\Sigma^\dagger(X) 
= \Sigma^\dagger(X, X) 
= \coprod_{\omega\in\Omega}\coprod_{j \in S(k)} \Sigma_\omega(X)
\end{equation*}
Thus, a $\Sigma$-algebra, $\alpha: \Sigma(X) \to X$, induces a 
$\Sigma^\dagger$-algebra, $\alpha^\dagger : \Sigma^\dagger(X) \to X$.

\begin{proposition}\label{prop_relSubAlgAlt}
A triple $(X, \sigma, \nu)$ is a relevant substitution algebra if and only if
it satisfies \textbf{(a)} and \textbf{(b)} in Fig.~\ref{fig_SubstAlgAx}, and
the diagram in Fig.~\ref{fig_ExtSubLem}.  
\end{proposition}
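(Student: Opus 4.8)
The plan is to prove the equivalence by decomposing the single square of Fig.~\ref{fig_ExtSubLem} along the coproduct structure of the relevant derived functor, and then identifying the three resulting component diagrams with the axioms \textbf{(c)}, \textbf{(d)}, and \textbf{(f)} of Fig.~\ref{fig_SubstAlgAx} that, together with \textbf{(a)} and \textbf{(b)}, define a relevant substitution algebra (Definition~\ref{defn_revSubAlg}). Since \textbf{(a)} and \textbf{(b)} appear as hypotheses on both sides of the claimed equivalence, it suffices to show that \textbf{(c)}, \textbf{(d)}, and \textbf{(f)} hold simultaneously if and only if the diagram of Fig.~\ref{fig_ExtSubLem} commutes.

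First I would exploit that $\mathbf{swap}^\mathsf{sub}$ is an isomorphism onto $\Sigma_\mathsf{sub}^\dagger(X, \delta(X))$, which by the construction of the relevant derived functor (instantiated in Fig.~\ref{tab_derSubSigEndo}) is the coproduct of the three summands $\delta^2(X)\hat\otimes X$, $\delta(X)\hat\otimes\delta(X)$, and $\delta^2(X)\hat\otimes\delta(X)$ indexed by $S(2)$. Because $\hat\otimes$ distributes over the cocartesian tensor, the object $\delta\Sigma_\mathsf{sub}(X)\hat\otimes X$ at the top-left of Fig.~\ref{fig_ExtSubLem} decomposes accordingly as a three-fold coproduct. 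The remaining arrows along the top of the square all respect this decomposition: $\mathbf{str}^\mathsf{sub}$ is by definition a coproduct of strengths acting summand-wise, $\Sigma_\mathsf{sub}^\dagger(\mathrm{id}, \sigma)$ applies $\sigma$ within each summand, and $\sigma^\dagger = [\sigma, \sigma, \sigma]$ on the three copies of $\Sigma_\mathsf{sub}(X)$ that constitute $\Sigma_\mathsf{sub}^\dagger(X)$. By the universal property of the coproduct, the square therefore commutes if and only if it commutes after precomposition with each of the three injections, giving the desired biconditional at the level of components.

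It then remains to identify each component equation with an axiom. Unfolding the definition of $\mathbf{swap}^\mathsf{sub}$ (the relevant isomorphism $\mathcal{H}$ followed by the symmetry $\mathsf{swap}$ on the newly introduced $\delta$) and of $\mathbf{str}^\mathsf{sub}$, the summand $\delta(X)\hat\otimes\delta(X)$ (the argument position, arising via $\mathbf{str'}$) yields axiom \textbf{(c)}; the summand $\delta^2(X)\hat\otimes X$ (the body position, arising via $\mathbf{str}$ and carrying the $\mathsf{swap}$ correction that matches the $\mathsf{swap}\hat\otimes\mathrm{id}$ edge of \textbf{(d)}) yields axiom \textbf{(d)}; and the summand $\delta^2(X)\hat\otimes\delta(X)$ (arising via the lax-monoidal $\rho$ built from the contraction $\mathsf{cont}$) yields the substitution lemma \textbf{(f)}. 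The first two identifications proceed exactly as in the linear case (Proposition~\ref{prop_linSubAlgAlt}). The main obstacle is the last: matching the remaining component --- in which $\Sigma_\mathsf{sub}^\dagger(\mathrm{id}, \sigma)$ acts as $\delta(\sigma)\hat\otimes\sigma$ on $\delta\Sigma_\mathsf{sub}(X)\hat\otimes\Sigma_\mathsf{sub}(X)$ --- with the diagram \textbf{(f)}, where the essential step is to verify that the $\rho$ occurring in the inverse of $\mathbf{swap}^\mathsf{sub}$ coincides with the $\rho\hat\otimes\mathrm{id}$ edge of \textbf{(f)}. This contraction-driven step is the genuinely relevant-theoretic content, with no counterpart in the linear or affine developments.
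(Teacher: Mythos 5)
Your proof is correct and follows exactly the route the paper's derived-functor machinery is designed for (the paper itself states Proposition~\ref{prop_relSubAlgAlt} without proof): decompose the square of Fig.~\ref{fig_ExtSubLem} along the three summands of $\Sigma_\mathsf{sub}^\dagger(X,\delta(X))$ indexed by $S(2)$, using that $\mathbf{swap}^\mathsf{sub}$ is an isomorphism and that $\hat\otimes$ distributes over coproducts, and identify the resulting component squares with axioms \textbf{(c)}, \textbf{(d)}, and \textbf{(f)}. Your matching of summands to axioms --- $\delta(X)\hat\otimes\delta(X)$ (via $\mathbf{str'}$) with \textbf{(c)}, $\delta^2(X)\hat\otimes X$ (via $\mathbf{str}$ plus the $\mathsf{swap}$ correction) with \textbf{(d)}, and the contraction-induced $\delta^2(X)\hat\otimes\delta(X)$ (injected via $\rho$) with \textbf{(f)} --- is the correct one.
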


\subsection{Relevant abstract syntax}

Recall from Proposition~\ref{prop_VSymOb} that the presheaf of variables 
$V = \mathcal{Y}(\mathbf{1})$ is a symmetric comultiplicative object in
$\mathcal{S}$. The abstract syntax of a binding signature is modelled as the
free $\Sigma$-algebra on $V$, denoted by $\varphi_V: \Sigma(\TV) \to \TV$,
with $\eta_V : V \to \TV$.

\begin{lemma}\label{lem_TVRSubstAlg}
$\TV$ is equipped with a canonical relevant substitution algebra structure
$(\sigma, \nu)$.
\end{lemma}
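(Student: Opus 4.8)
The plan is to build the two components $\nu$ and $\sigma$ explicitly and then verify the axioms through the alternative characterisation of Proposition~\ref{prop_relSubAlgAlt}, so that it suffices to establish \textbf{(a)}, \textbf{(b)}, and the extended substitution lemma of Fig.~\ref{fig_ExtSubLem}. First I would define the generic-variables morphism $\nu : J \to \delta(\TV)$. Since $\delta(V) \cong J$ in the relevant universe (the single substitution variable survives context extension only in the empty context), $\nu$ is the transpose of $\eta_V : J \hat\otimes V \cong V \to \TV$ across the Day adjunction $(-) \hat\otimes V \dashv \delta$; equivalently $\nu = \delta(\eta_V)\circ(J \xrightarrow{\cong} \delta V)$.

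To construct $\sigma : \Sigma_\mathsf{sub}(\TV) = \delta(\TV) \hat\otimes \TV \to \TV$ I would use Corollary~\ref{Corollary:GPSR}. Applying the (left- and right-adjoint) functor $\delta$ to the initial $(V+\Sigma)$-algebra $[\eta_V, \varphi_V]$ and transporting along the swapping isomorphism $\mathbf{swap} : \delta\Sigma \xrightarrow{\cong} \Sigma^\dagger(\TV, \delta(-))$ exhibits $\delta(\TV)$, using $\delta(V) \cong J$, as the free $\Sigma^\dagger(\TV,-)$-algebra on $J$, that is, the initial $S$-algebra for $S = J + \Sigma^\dagger(\TV,-)$. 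I then instantiate Corollary~\ref{Corollary:GPSR} with the Day-closed adjunction $F = (-)\hat\otimes\TV \dashv (\TV \multimap -)$, the endofunctor $S'(Z) = \TV + \Sigma^\dagger(\TV, Z)$, the $S'$-algebra $\beta = [\mathrm{id}_\TV, \varphi_V^\dagger]$, and the natural transformation $\psi : FS \to S'F$ obtained by distributing $\hat\otimes$ over the coproduct, using $J \hat\otimes \TV \cong \TV$, and applying the derived-functor strength $\mathbf{str} : \Sigma^\dagger(\TV, Z) \hat\otimes \TV \to \Sigma^\dagger(\TV, Z \hat\otimes \TV)$. The resulting generalised iterator $\mathsf{git}(\beta) = \sigma$ satisfies exactly the two expected recursion clauses: on the $J \hat\otimes \TV \cong \TV$ summand it is the canonical isomorphism (substituting for the bound variable returns the argument), and on the $\Sigma^\dagger(\TV, \delta(\TV)) \hat\otimes \TV$ summand it is $\varphi_V^\dagger\circ\Sigma^\dagger(\TV, \sigma)\circ\mathbf{str}$ (substitution commutes with the operators).

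With $\nu$ and $\sigma$ in hand, axioms \textbf{(a)} and \textbf{(b)} follow by direct computation from these recursion clauses: \textbf{(a)} is immediate since $\nu$ factors through the base summand $J \cong \delta V$, on which $\sigma$ is the identity; \textbf{(b)} is a one-level-extended unit calculation, discharged using the naturality of $\mathbf{str'}$ and $\mathbf{swap}$ together with the recursion clause read off in the context-extended presheaf $\delta(\TV)$.

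The main obstacle is the extended substitution lemma of Fig.~\ref{fig_ExtSubLem}, the associativity of single-variable substitution. I would verify it by a second application of generalised structural recursion, showing that both legs of the diagram arise as generalised iterators out of the same free algebra and hence coincide by the uniqueness clause of Corollary~\ref{Corollary:GPSR}. The difficulty is that the domain $\delta\Sigma_\mathsf{sub}(\TV) \hat\otimes \TV = \delta(\delta(\TV) \hat\otimes \TV) \hat\otimes \TV$ carries two nested context extensions, so the recursion must be organised by a \emph{relevant second derived functor} $\Sigma^{\dagger\dagger}$, obtained by applying the three-way Leibniz-type isomorphism $\mathcal{H}$ (together with the lax structure $\rho : \delta(X) \hat\otimes \delta(Y) \to \delta(X \hat\otimes Y)$) a further time to $\delta\Sigma^\dagger$. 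Keeping the coherence of the relevant-specific $\rho$-summand $\delta(\TV) \hat\otimes \delta(\TV)$ aligned across the two decompositions---so that the swap-and-strength bookkeeping of the outer and inner $\delta$ agree on both legs---is where the substantive work lies; once this coherence is in place, uniqueness of generalised iterators forces the two composites to agree, and Proposition~\ref{prop_relSubAlgAlt} then yields the claimed relevant substitution algebra structure on $\TV$.
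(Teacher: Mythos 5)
Your proposal is correct and follows essentially the same route as the paper: the paper's (very terse) proof likewise constructs the structure via the free $\Sigma^\dagger(\TV,-)$-algebra presentation of $\delta(\TV)$, reduces the axioms to \textbf{(a)}, \textbf{(b)} and the extended substitution lemma by Proposition~\ref{prop_relSubAlgAlt}, and discharges the latter by introducing a relevant second derived functor and appealing to uniqueness of generalised iterators from Corollary~\ref{Corollary:GPSR}. Your write-up supplies the details (the transpose definition of $\nu$, the explicit $\psi$ for the recursion, the $\rho$-summand bookkeeping) that the paper leaves implicit.
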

\begin{proof}[Proof (idea)]
By invoking Proposition~\ref{prop_relSubAlgAlt} and appropriately defining a
\textit{relevant second derived functor}. 
\end{proof}

\begin{proposition}
The $\Sigma^\dagger(\TV,-)$-algebra, 
\[\begin{tikzcd}[ampersand replacement=\&,cramped,row sep=small]
	{\Sigma^\dagger(\TV, \delta(\TV))} \& {\delta\Sigma(\TV)} \& {\delta(\TV)}
	\arrow["{\mathbf{swap}^{-1}}", from=1-1, to=1-2]
	\arrow["\cong"', draw=none, from=1-1, to=1-2]
	\arrow["{\delta(\varphi_V)}", from=1-2, to=1-3]
\end{tikzcd}\] 
together with the morphism $\delta(\eta_V) : \delta(V) \to \delta(\TV)$
present $\delta(\TV)$ as a free $\Sigma^\dagger(\TV,-)$-algebra over
$\delta(V) \cong J$.
\end{proposition}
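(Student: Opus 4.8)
The plan is to prove this by the same generalised-structural-recursion argument used for the cartesian, linear, and affine propositions, now instantiated at the relevant derived functor. Write $H = \delta V + \Sigma^\dagger(\TV,-)$ for the endofunctor on $\mathcal{S}$ whose initial algebra is, by definition, the free $\Sigma^\dagger(\TV,-)$-algebra on $\delta V$; using the identification $\delta V \cong J$ from the statement, it suffices to exhibit $\delta(\TV)$ as an initial $H$-algebra whose structure map restricts to $\delta(\eta_V)$ on the $\delta V$ summand and to $\delta(\varphi_V)\,\mathbf{swap}^{-1}$ on the $\Sigma^\dagger(\TV,-)$ summand.

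First I would recall that $[\eta_V,\varphi_V]\colon V+\Sigma(\TV)\to\TV$ is the initial $(V+\Sigma)$-algebra, hence invertible by Lambek's Lemma. Applying $\delta$, which preserves coproducts as it is a left adjoint, yields an isomorphism $\delta V+\delta\Sigma(\TV)\xrightarrow{\cong}\delta(\TV)$. Next I would invoke the relevant swapping isomorphism $\mathbf{swap}\colon\delta\Sigma(\TV)\xrightarrow{\cong}\Sigma^\dagger(\TV,\delta(\TV))$, which is an isomorphism precisely because it is assembled from the Leibniz-type isomorphism $\mathcal{H}^{k}$ and the isomorphisms $\mathsf{swap}^{n_i}$. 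Composing, the map $[\delta(\eta_V),\,\delta(\varphi_V)\,\mathbf{swap}^{-1}]\colon H(\delta\TV)\xrightarrow{\cong}\delta(\TV)$ is an invertible $H$-algebra, and this is the candidate initial object.

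To prove initiality I would set $F=\delta\dashv G$, take $A=\TV$ and $S=V+\Sigma$, and aim to invoke Corollary~\ref{Corollary:GPSR} with $S'=H$. At $A=\TV$ the required comparison is the isomorphism $\psi_{\TV}=\mathrm{id}+\mathbf{swap}_{\TV}$, and the candidate algebra above is exactly $F(\alpha)\,\psi_{\TV}^{-1}$, so the shape of the argument matches the ``moreover'' clause of the corollary; the generalised iterator would then simultaneously deliver initiality and the universal property witnessing freeness over $\delta V$.

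The step I expect to be the main obstacle is that Corollary~\ref{Corollary:GPSR} asks for $\psi$ to be a genuine natural transformation $\delta(V+\Sigma(-))\Rightarrow\delta V+\Sigma^\dagger(\TV,\delta(-))$, whereas $\mathbf{swap}$ is natural only into the endofunctor $X\mapsto\Sigma^\dagger(X,\delta X)$, with $X$ occurring in the \emph{first} (covariant) argument; there is no natural family holding the first slot fixed at $\TV$, so $\mathbf{swap}$ cannot simply be transported into $S'$. Overcoming this requires exploiting that $\Sigma^\dagger(\TV,-)$ carries a strength in its second argument, and that the relevant extended substitution lemma of Figure~\ref{fig_ExtSubLem} (Proposition~\ref{prop_relSubAlgAlt}), together with the canonical substitution-algebra structure on $\TV$ from Lemma~\ref{lem_TVRSubstAlg}, supplies exactly the coherence needed to run the generalised iterator recursing in the second argument while $\TV$ stays frozen in the first. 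Verifying that the resulting comparison map is at once a $\Sigma^\dagger(\TV,-)$-homomorphism and compatible with $\delta(\eta_V)$, and that it is unique, is where the real work lies; as in the linear and affine cases, these diagram chases should reduce to the fact that $\mathbf{swap}$ and $\mathcal{H}$ are isomorphisms.
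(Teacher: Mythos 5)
Your construction of the candidate algebra is correct (Lambek, preservation of coproducts by the left adjoint $\delta$, invertibility of $\mathcal{H}^k$ and hence of $\mathbf{swap}$), and, importantly, you have put your finger on the real difficulty: $\mathbf{swap}$ is natural only as a family $\delta\Sigma(X)\to\Sigma^\dagger(X,\delta X)$ with $X$ varying in \emph{both} arguments, so there is no natural transformation $\delta\circ(V+\Sigma)\Rightarrow S'\circ\delta$ for $S'=\delta(V)+\Sigma^\dagger(\TV,-)$, and Corollary~\ref{Corollary:GPSR} cannot be fed the single component $\psi_{\TV}$ --- the derivation of the corollary from Lemma~\ref{lem_BirdPaterson} needs $\psi$ at $G(B)$ for \emph{every} $S'$-algebra $B$ in order to manufacture the $(V+\Sigma)$-algebra it hands to the generalised iterator. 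However, the way you propose to get past this does not work and is left entirely unexecuted. The strength of $\Sigma^\dagger(\TV,-)$, the extended substitution lemma of Fig.~\ref{fig_ExtSubLem} (Proposition~\ref{prop_relSubAlgAlt}), and the substitution-algebra structure of Lemma~\ref{lem_TVRSubstAlg} all concern the operation $\sigma$, which occurs nowhere in the present statement: this is a purely syntactic freeness claim about $\varphi_V$ and the derived functor. Appealing to Lemma~\ref{lem_TVRSubstAlg} here is moreover circular in spirit, since exhibiting $\delta(\TV)$ as a free $\Sigma^\dagger(\TV,-)$-algebra is precisely what enables the structural-recursive construction of $\sigma$.

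The gap is closed by restoring naturality \emph{before} freezing the first argument. Work in $\mathcal{S}\times\mathcal{S}$ with the endofunctor $\bar S(X,Y)=\big(V+\Sigma(X),\ \delta(V)+\Sigma^\dagger(X,Y)\big)$ and the left adjoint $F=\langle\mathrm{id},\delta\rangle$ (whose right adjoint is $(X,Y)\mapsto X\times G(Y)$ for $G$ the right adjoint of $\delta$). Then $\psi_X=\big(\mathrm{id},\ (\mathrm{id}+\mathbf{swap}_X)\circ\cong\big)\colon F(V+\Sigma(X))\to\bar S F(X)$ \emph{is} a natural isomorphism, because the first argument of $\Sigma^\dagger$ is now allowed to vary with $X$; the ``moreover'' clause of Corollary~\ref{Corollary:GPSR} exhibits $\big(\TV,\delta(\TV)\big)$ with structure $\big([\eta_V,\varphi_V],\,[\delta(\eta_V),\,\delta(\varphi_V)\,\mathbf{swap}^{-1}]\big)$ as an initial $\bar S$-algebra, and the Beki\'c-style description of initial algebras of such triangular endofunctors on a product category identifies its second component with $\mu Y.\,\delta(V)+\Sigma^\dagger(\TV,Y)$, i.e.\ with the free $\Sigma^\dagger(\TV,-)$-algebra on $\delta(V)\cong J$. (A direct colimit comparison using that $\delta$ preserves the initial chain of $V+\Sigma$ is also possible, but needs its own argument, since $\delta$ applied to that chain yields $\delta(V)+\Sigma^\dagger(W_n,\delta W_n)$ rather than the initial chain of $\delta(V)+\Sigma^\dagger(\TV,-)$.) Without some such device, the appeal to Corollary~\ref{Corollary:GPSR} in your third paragraph does not go through.
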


\begin{definition}
A \textit{relevant $\Sigma$-substitution algebra} is a quadruple 
$(X, \sigma, \nu, \alpha)$ where $(X, \sigma, \nu)$ is a relevant substitution
algebra and $(X, \alpha)$ is a $\Sigma$-algebra such that
(\ref{diag_SubstAlgMorCond}) commutes.  
\end{definition}

A morphism for such structures is a morphism in $\mathcal{S}$ that is both a
$\Sigma$-homomorphism and a morphism of relevant substitution algebras. We
obtain the category $\Sigma$-$\mathbf{RSubstAlg}$.

\begin{theorem}
For a signature $\Sigma$, $(\TV, \sigma, \nu, \varphi_V)$ is an initial object
in $\Sigma$-$\mathbf{RSubstAlg}$.  
\end{theorem}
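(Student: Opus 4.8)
The plan is to mirror the two-part strategy already used for the cartesian, linear, and affine theorems, now instantiated in the relevant universe $\mathcal{S}$: first confirm that $(\TV,\sigma,\nu,\varphi_V)$ lies in $\Sigma$-$\mathbf{RSubstAlg}$, then construct the unique mediating morphism to an arbitrary object and check it respects all the structure.

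First I would verify membership. That $(\TV,\sigma,\nu)$ is a relevant substitution algebra is Lemma~\ref{lem_TVRSubstAlg}, so it only remains to check the compatibility diagram~(\ref{diag_SubstAlgMorCond}) with $\alpha = \varphi_V$. I expect this to hold essentially by the construction of $\sigma$ on $\TV$: via Proposition~\ref{prop_relSubAlgAlt} and the relevant derived functor $\Sigma^\dagger$, the operation $\sigma$ is obtained by generalised structural recursion out of the free $\Sigma^\dagger(\TV,-)$-algebra $\delta(\TV)$ (the preceding proposition), and the recursion defining it is precisely diagram~(\ref{diag_SubstAlgMorCond}) read at $\TV$.

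Next I would exhibit the mediating morphism. Given any relevant $\Sigma$-substitution algebra $(X,\sigma',\nu',\alpha)$, the generic-variable map $\nu'\colon J \to \delta(X)$ transposes, across the adjunction $(-)\hat\otimes V \dashv \delta$, to a variable map $v\colon V \to X$. Then $[v,\alpha]\colon V + \Sigma(X) \to X$ is a $(V+\Sigma)$-algebra, and initiality of $[\eta_V,\varphi_V]$ supplies a unique $(V+\Sigma)$-homomorphism $h\colon \TV \to X$; by construction $h$ is a $\Sigma$-homomorphism with $h\,\eta_V = v$. This $h$ is forced, since any morphism of $\Sigma$-$\mathbf{RSubstAlg}$ is a $\Sigma$-homomorphism respecting $\nu$, hence (transposing) respecting variables, hence equal to $h$. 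Compatibility of $h$ with $\nu$ then follows by transposing $h\,\eta_V = v$ back across $(-)\hat\otimes V \dashv \delta$ to get $\delta(h)\,\nu = \nu'$.

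The substantive step, which I expect to be the main obstacle, is to show that $h$ also commutes with substitution, i.e. $h\,\sigma = \sigma'\,\Sigma_\mathsf{sub}(h)$ as maps $\Sigma_\mathsf{sub}(\TV) = \delta(\TV)\hat\otimes\TV \to X$. Here I would invoke Corollary~\ref{Corollary:GPSR} with $F = (-)\hat\otimes\TV \dashv (\TV \multimap -) = G$ and with $A = \delta(\TV)$ regarded as the initial $(J + \Sigma^\dagger(\TV,-))$-algebra, so that $F(A) = \Sigma_\mathsf{sub}(\TV)$. The idea is to package the substitution data of $X$ (its $\Sigma$-structure $\alpha$ together with $\sigma'$) into a single $S'$-algebra $\beta$ on $X$, with the natural transformation $\psi\colon FS \to S'F$ assembled from $\mathbf{swap}^{\mathsf{sub}}$, $\mathbf{str}^{\mathsf{sub}}$, and the relevant-specific lax structure $\rho$, in such a way that both $h\,\sigma$ and $\sigma'\,\Sigma_\mathsf{sub}(h)$ satisfy the defining equation of $\mathsf{git}(\beta)$. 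Commutativity of~(\ref{diag_SubstAlgMorCond}) at $X$ is exactly what makes $\sigma'\,\Sigma_\mathsf{sub}(h)$ a generalised iterator, while its commutativity at $\TV$ (the first step) does the same for $h\,\sigma$, so uniqueness of the iterator delivers the equality. The delicate bookkeeping is in checking that $\psi$ genuinely intertwines the relevant derived-functor structure; this is where the use of $\mathcal{H}$ and $\rho$, rather than the linear $\mathcal{L}$ or affine $\mathcal{K}$, must be tracked carefully.
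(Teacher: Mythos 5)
Your proposal is correct and follows essentially the same route as the paper's (very terse) proof: membership via Lemma~\ref{lem_TVRSubstAlg}, the mediating morphism induced by initiality of the $(V+\Sigma)$-algebra $[\eta_V,\varphi_V]$, and the homomorphism property for $\sigma$ established by Corollary~\ref{Corollary:GPSR} applied along $(-)\hat\otimes \TV \dashv \TV \multimap (-)$ with $\delta(\TV)$ as the initial $(J+\Sigma^\dagger(\TV,-))$-algebra. The additional detail you supply --- in particular that diagram~(\ref{diag_SubstAlgMorCond}) at $\TV$ is the defining recursion of $\sigma$, and that both $h\,\sigma$ and $\sigma'\,\Sigma_{\mathsf{sub}}(h)$ are identified as the same generalised iterator --- is exactly what the paper leaves implicit.
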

\begin{proof}
That $(\TV, \sigma, \nu, \varphi_V)$ is 
in $\Sigma$-$\mathbf{RSubstAlg}$ follows from Lemma~\ref{lem_TVRSubstAlg}.
The unique morphism to any other relevant $\Sigma$-substitution algebra is
induced by the initial $(V + \Sigma)$-algebra $[\varphi_V, \eta_V]$.  The fact
that it is a morphism in $\Sigma$-$\mathbf{RSubstAlg}$ follows by an
application of Corollary~\ref{Corollary:GPSR}.  
\end{proof}

\section*{Concluding remarks}

We have established a theory of substructural abstract syntax with variable
binding focussing on single-variable substitution.  There are scientific,
theoretical, and practical reasons for the latter.

Scientifically, the study of single-variable substitution has been somewhat
relegated in favor of that of simultaneous substitution and our work remedies
this situation.

Theoretically, the categorical machinery behind the development of
single-variable substitution is in some respects more elementary than the one
needed for simultaneous substitution.  In particular, the latter requires the
development of substitution tensor products, based on Kan extensions and/or
coends, 
which hinder formalization for computation. 

Practically, our approach is even novel in the traditional cartesian case.
Indeed, while a direct transcription of the single-variable substitution
program of
~\cite[Section~3]{FPT} is not well-typed in current dependently-typed proof
assistants, our theory may be used to mathematically derive a program that is.
Furthermore, an  application that will be presented elsewhere is the
development of a formalisation of normalisation for simply typed lambda
calculus by hereditary substitution.  This is particularly suited because it
is in this more general context that single-variable substitution is the
fundamental notion, and not a derived one.

As for speculation, a possible application, along the lines of what was done
in
~\cite{FioreStaton}, 
is the investigation of the linear, affine, and relevant algebraic theories
put forward in this paper as 
computational effects.  It is perhaps in this context that connections with
the resource theory of 
lambda calculi may arise.
Furthermore, our theory of `derived functors' may be of independent
theoretical interest.  In this direction, and in connection to theories of
differentiation, we point out that our `affine product rule' (see
Fig.~\ref{Figure:ProductRules}) has recently also been considered by
Par\'e~\cite{Pare} in the categorical study of the `difference operator'.


\begin{thebibliography}{10}
\providecommand{\url}[1]{#1}
\csname url@samestyle\endcsname
\providecommand{\newblock}{\relax}
\providecommand{\bibinfo}[2]{#2}
\providecommand{\BIBentrySTDinterwordspacing}{\spaceskip=0pt\relax}
\providecommand{\BIBentryALTinterwordstretchfactor}{4}
\providecommand{\BIBentryALTinterwordspacing}{\spaceskip=\fontdimen2\font plus
\BIBentryALTinterwordstretchfactor\fontdimen3\font minus
  \fontdimen4\font\relax}
\providecommand{\BIBforeignlanguage}[2]{{%
\expandafter\ifx\csname l@#1\endcsname\relax
\typeout{** WARNING: IEEEtran.bst: No hyphenation pattern has been}%
\typeout{** loaded for the language `#1'. Using the pattern for}%
\typeout{** the default language instead.}%
\else
\language=\csname l@#1\endcsname
\fi
#2}}
\providecommand{\BIBdecl}{\relax}
\BIBdecl

\bibitem{FPT}
M.~Fiore, G.~Plotkin, and D.~Turi, ``Abstract syntax and variable binding,'' in
  \emph{14th Symposium on Logic in Computer Science}.\hskip 1em plus 0.5em
  minus 0.4em\relax IEEE Computer Society, 1999, pp. 193--202.

\bibitem{GabbayPitts}
M.~Gabbay and A.~Pitts, ``A new approach to abstract syntax involving
  binders,'' in \emph{14th Symposium on Logic in Computer Science}.\hskip 1em
  plus 0.5em minus 0.4em\relax IEEE, Computer Society Press, 1999, pp.
  214--224.

\bibitem{Tanaka}
M.~Tanaka, ``Abstract syntax and variable binding for linear binders,'' in
  \emph{Mathematical Foundations of Computer Science}.\hskip 1em plus 0.5em
  minus 0.4em\relax Springer, 2000, pp. 670--679.

\bibitem{J81}
A.~Joyal, ``Une th\'eorie combinatoire des s\'eries formelles,'' \emph{Adv.\ in
  Math.}, vol.~42, no.~1, pp. 1--82, 1981.

\bibitem{J86}
------, ``Foncteurs analytiques et esp\`eces de structures,'' in
  \emph{Combinatoire \'enum\'erative}, ser. Lecture Notes in Mathematics.\hskip
  1em plus 0.5em minus 0.4em\relax Springer, 1986, vol. 1234, pp. 126--159.

\bibitem{TanakaPower}
M.~Tanaka and J.~Power, ``A unified category-theoretic semantics for binding
  signatures in substructural logics,'' \emph{J.\ Log.\ and Comput.}, vol.~16,
  no.~1, pp. 5--25, 2006.

\bibitem{NotesOnCombinatorialFunctors2001}
M.~Fiore, ``Notes on combinatorial functors,'' 2001, {U}npublished note.

\bibitem{FioreMoggiSangiorgiLICS}
M.~Fiore, E.~Moggi, and D.~Sangiorgi, ``A fully abstract model for the
  $\pi$-calculus,'' in \emph{11th Symposium on Logic in Computer
  Science}.\hskip 1em plus 0.5em minus 0.4em\relax IEEE, Computer Society
  Press, 1996, pp. 43---54.

\bibitem{MFPS2006}
M.~Fiore, ``On the structure of substitution,'' 2006, {S}lides of an invited
  talk at the \emph{Mathematical Foundation of Programming Semantics
  Conference}.

\bibitem{CT2007}
------, ``Towards a mathematical theory of substitution,'' 2007, {S}lides of an
  invited talk at the \emph{International Conference on Category Theory}.

\bibitem{MacLaneAlgebra}
S.~MacLane, ``Categorical algebra,'' \emph{Bull.\ Am.\ Math.\ Soc.}, vol.~71,
  no.~1, pp. 40--106, 1965.

\bibitem{Markl}
M.~Markl, ``Operads and {PROPs},'' ser. Handbook of Algebra.\hskip 1em plus
  0.5em minus 0.4em\relax North-Holland, 2008, vol.~5, pp. 87--140.

\bibitem{G01}
M.~Grandis, ``Finite sets and symmetric simplicial sets,'' \emph{Theory and
  Applications of Categories}, vol.~8, pp. 244--252, 2001.

\bibitem{D70}
B.~Day, ``On closed categories of functors,'' in \emph{Reports of the {M}idwest
  {C}ategory {S}eminar, {IV}}, ser. Lecture Notes in Mathematics.\hskip 1em
  plus 0.5em minus 0.4em\relax Springer, 1970, vol. 137, pp. 1--38.

\bibitem{IK86}
G.~B. Im and G.~M. Kelly, ``A universal property of the convolution monoidal
  structure,'' \emph{J.\ Pure Appl.\ Algebra}, vol.~43, no.~1, pp. 75--88,
  1986.

\bibitem{K70}
A.~Kock, ``Strong functors and monoidal monads,'' \emph{Arch.\ Math}, vol.~23,
  pp. 113--120, 1972.

\bibitem{Aczel}
P.~Aczel, ``Lectures on semantics: The initial algebra and final coalgebra
  perspectives,'' in \emph{Logic of Computation}.\hskip 1em plus 0.5em minus
  0.4em\relax Springer, 1997, pp. 1--33.

\bibitem{MeijerFokkingaPatersonHughesFunctionalProgrammingWithBananas}
E.~Meijer, M.~Fokkinga, and R.~Paterson, ``Functional programming with bananas,
  lenses, envelopes and barbed wire,'' in \emph{Functional Programming
  Languages and Computer Architecture}.\hskip 1em plus 0.5em minus 0.4em\relax
  Springer, 1991, pp. 124--144.

\bibitem{ProgrammingWithMartinLofTypeTheory}
B.~Nordstr\"{o}m, K.~Petersson, and J.~M. Smith, \emph{Programming in
  Martin-L{\"o}f's type theory: an introduction}.\hskip 1em plus 0.5em minus
  0.4em\relax Clarendon Press, 1990.

\bibitem{BirdPaterson}
R.~Bird and R.~Paterson, ``Generalised folds for nested datatypes,''
  \emph{Formal Aspects of Computing}, vol.~11, no.~2, pp. 200--222, 1999.

\bibitem{L68}
J.~Lambek, ``A fixpoint theorem for complete categories,'' \emph{Mathematische
  Zeitschrift}, vol. 103, pp. 151--161, 1968.

\newpage 

\bibitem{MatthesUustalu}
R.~Matthes and T.~Uustalu, ``Substitution in non-wellfounded syntax with
  variable binding,'' \emph{Electronic Notes in Theoretical Computer Science},
  vol.~82, pp. 191--205, 10 2004.

\bibitem{FioreSecondOrder}
M.~Fiore, ``Second-order and dependently-sorted abstract syntax,'' in
  \emph{23rd Symposium on Logic in Computer Science}.\hskip 1em plus 0.5em
  minus 0.4em\relax IEEE Computer Society, 2008, pp. 57--68.

\bibitem{FioreHur}
M.~Fiore and C.-K. Hur, ``Second-order equational logic,'' in \emph{Computer
  Science Logic}.\hskip 1em plus 0.5em minus 0.4em\relax Springer, 2010, pp.
  320--335.

\bibitem{FioreMahmoud}
M.~Fiore and O.~Mahmoud, ``Second-order algebraic theories,'' in
  \emph{Mathematical Foundations of Computer Science}, ser. Lecture Notes in
  Computer Science, vol. 6281.\hskip 1em plus 0.5em minus 0.4em\relax Springer,
  2010, pp. 368--380.

\bibitem{FioreStaton}
M.~Fiore and S.~Staton, ``Substitution, jumps, and algebraic effects,'' in
  \emph{Proceedings of the Joint Meeting of the Twenty-Third EACSL Annual
  Conference on Computer Science Logic (CSL) and the Twenty-Ninth Annual
  ACM/IEEE Symposium on Logic in Computer Science (LICS)}.\hskip 1em plus 0.5em
  minus 0.4em\relax Association for Computing Machinery, 2014.

\bibitem{TACPaper}
M.~Fiore and S.~Ranchod, ``A finite algebraic presentation of {Lawvere}
  theories in the object-classifier topos,'' \emph{Theory and Applications of
  Categories}, vol.~43, no.~7, pp. 181--195, 2025.

\bibitem{FioreCT2014}
M.~Fiore, ``Lie structure and composition,'' 2014, {S}lides of a talk at the
  \emph{International Category Theory Conference}.

\bibitem{FioreMoggiSangiorgi}
M.~Fiore, E.~Moggi, and D.~Sangiorgi, ``A fully abstract model for the
  $\pi$-calculus,'' \emph{Information and Computation}, vol. 179, no.~1, pp.
  76--117, 2002.

\bibitem{Kelly05}
G.~M. Kelly, ``On the operads of {J.P.\ May},'' \emph{Reprints in Theory and
  Applications of Categories}, vol. 2005, 2005.

\bibitem{Pare}
R.~Paré, ``Taut functors and the difference operator,'' \emph{Theory and
  Applications of Categories}, vol.~43, no.~10, pp. 281--362, 2025.

\end{thebibliography}


\end{document}